\documentclass[letterpaper,11pt]{article}

\usepackage[margin=1in]{geometry}
\usepackage{amsmath,amsthm,amsfonts,amssymb,bm,bbm}
\usepackage{graphicx}
\graphicspath{ {./images/} }

\usepackage[square]{natbib}
\bibliographystyle{plainnat}
\renewcommand{\cite}{\citep}

\usepackage{xcolor,colortbl}
\usepackage{mathtools}
\usepackage{multirow}
\usepackage[hidelinks]{hyperref}
\hypersetup{
    colorlinks,
    linkcolor={blue},
    citecolor={blue},
    urlcolor={blue}
}
\usepackage{cleveref}

\usepackage{tikz}
\usepackage{forest}
\usetikzlibrary{arrows.meta,tikzmark} 
\usepackage{thm-restate}

\usetikzlibrary{shapes,decorations,arrows,calc,arrows.meta,fit,positioning}
\tikzset{
    -Latex,auto,node distance =1 cm and 1 cm,semithick,
    state/.style ={ellipse, draw, minimum width = 0.7 cm},
    point/.style = {circle, draw, inner sep=0.04cm,fill,node contents={}},
    bidirected/.style={Latex-Latex,dashed},
    el/.style = {inner sep=2pt, align=left, sloped}
}

\newcommand{\xE}{\mathsf{E}}

\newcommand{\xEN}{\widetilde{\mathsf{E}}}
\newcommand{\xS}{\mathsf{S}}

\newcommand{\ES}{\xE\xS}
\newcommand{\SES}{\xS\xE\xS}
\newcommand{\EnSES}{\xEN\xS\xE\xS}

\newcommand{\bZ}{\mathbb{Z}}

\newcommand{\given}{\,|\,}

\DeclareMathOperator{\Exp}{\mathbb{E}}
\DeclareMathOperator{\Pro}{\mathbb{P}}

\DeclareMathOperator*{\argmax}{arg\,max}

\Crefname{equation}{}{}

\newenvironment{talign}
 {\align}
 {\endalign}
\newenvironment{talign*}
 {\csname align*\endcsname}
 {\endalign}

\newcolumntype{L}{>{$}l<{$}}
\newcolumntype{R}{>{$}r<{$}}
\newcolumntype{C}{>{$}c<{$}}
\newcommand{\myheader}[1]{\multicolumn{1}{C}{#1}}
\newcommand{\mytablecaption}[1]{\myheader{\rule{0pt}{5mm}#1}}

\newtheorem{theorem}{Theorem}
\newtheorem{lemma}[theorem]{Lemma}

\newtheorem{example}{Example}[section]

\newcommand{\varcolor}[1]{{\color{blue} #1}}

\title{Persuading a Credible Agent}
\author{Jiarui Gan, Abheek Ghosh, Nicholas Teh}

\date{%
University of Oxford, UK}
\begin{document}

\maketitle

\begin{abstract}
How to optimally persuade an agent who has a private type? When \emph{elicitation} is feasible, this amounts to a fairly standard principal-agent-style mechanism design problem, where the persuader employs a mechanism to first {elicit} the agent's type and then plays the corresponding persuasion strategy based on the agent's report. The optimal mechanism design problem in this setting is relatively well-understood in the literature, with incentive compatible (IC) mechanisms known to be optimal and computationally tractable. In this paper, we study this problem given a \emph{credible} agent, i.e., if the agent claims they are of a certain type in response to the mechanism's elicitation, then they will act optimally with respect to the claimed type, even if they are actually not of that type.

We present several interesting findings in this new setting that differ significantly from results in the non-credible setting. In terms of the structure of optimal mechanisms, we show that not only may IC mechanisms fail to be optimal, but all mechanisms following the standard `eliciting-then-persuading' mechanism design structure may be suboptimal. To achieve optimality requires two additional instruments---\emph{pre-signaling} and \emph{non-binding elicitation}---which naturally result in multi-stage mechanisms. 
We characterize optimal mechanisms under these design choices.
Based on our characterization, we provide a polynomial-time algorithm for computing optimal multi-stage mechanisms.
We also discover that in scenarios that allow for it, {\em partial} information elicitation can be employed to improve the principal's payoff even further. Though, surprisingly, an unbounded number of rounds of information exchange between the principal and the agent may be necessary to achieve optimality.
\end{abstract}

\section{Introduction}
Information design (or \emph{Bayesian persuasion}) models scenarios where a principal strategically reveals her private information to incentivize an agent to play preferred actions. 
For instance, a seller may want to reveal only partial details about her products to incentivize an agent to buy a particular one, or a school may want to issue only coarse grades to improve its students’ job outcomes~\cite{boleslavsky2015grading,kamenica2019bpsurvey,ostrovsky2010unraveling}.

The standard information design model assumes that the principal knows the utility function of the agent~\cite{kamenica2011bp}, which may not hold in practice. 
Numerous works subsequently study a model where the principal does not know the agent's type but has a prior over it \cite{alonso2018persuasionexperts,bernasconi2023onlineBP,castiglioni2020online,castiglioni2022type,castiglioni2023regretonlinebp,gill2008seq,kolotilin2017persuasion,perez2014interim}.
The principal may try to design the signaling scheme (i.e., strategically reveal her private information) to maximize her expected utility given only this prior. Another, possibly better, alternative for the principal, if feasible, is to ask the agent to reveal his type before the principal sends her signal, through an \emph{elicitation} process.
In this case, the agent may be \textit{credible}, i.e., take actions consistent with his reported type, or be \textit{non-credible}, i.e., report any type and then play any, possibly inconsistent, action afterward (essentially, cheap talk). 
In this paper, we study such scenarios, with a particular focus on credible agents.

Even though credibility has been relatively understudied in the literature, it remains crucial in many situations. One such scenario is when the principal and the agent both enjoy and want to protect some form of reputation. This may happen, for example, when the two parties are known businesses. The principal may still have private information about the state and the agent about his private type, but the agent may stick with the contract (type) signed by both the parties at an earlier stage of the game. Besides reputation, the agent may also want to stick to the previously settled type (contract) to avoid possible legal ramifications. 
From a behavioral game-theoretic perspective, agents may also be averse to playing actions that are inconsistent with their report due to psychological and social reasons that cause bounded rationality. 
In strategic scenarios, credibility may also arise voluntarily when an agent intends to behave consistently to change the principal's belief about their behavior. Such phenomenon have been termed {\em impersonation} or {\em imitative deception} in various studies \cite{gan2019imitative,kash2010impersonation}. 
Our work is an attempt to capture these situations and understand the computational and algorithmic aspects of mechanism design against credible agents.

We aim to compute payoff-maximizing mechanisms for the principal. In pursuit of this, we identify several crucial design aspects: whether the principal uses only incentive-compatible mechanisms; whether the principal sends other signals in addition to the action recommendations; whether the agent reports other information in addition to his credible type report; and, in general, how many rounds of communication between the principal and the agent are necessary? 
As it turns out, the presence of credibility changes the design of optimal mechanisms significantly. 
It necessitates additional design choices that would be unnecessary when the agent is non-credible. 
We introduce novel stages in the mechanism design and provide non-trivial examples to demonstrate strict payoff improvements they achieve. 
Moreover, we characterize optimal mechanisms under these additional design choices and prove the general optimaility they lead to. The characterization also result in an efficient algorithm to compute optimal mechanisms, thereby highlighting both the computational and utility benefits they offer.
Additionally, we provide matching hardness results that emerge when these additional design choices are not employed, thereby highlighting their computational benefits.

\subsection{Our Results}

In \Cref{sec:non-cred}, we begin with a non-credible agent and observe that incentive compatible (IC) policies are without loss of optimality. Given this, we can compute the optimal policy in polynomial time using a linear programming (LP) formulation. 
For a credible agent, we show that if the principal restricts herself to using only IC policies, then again she can compute the optimal IC policy in polynomial time using an LP. 
However, optimal IC policies may not be optimal overall, in fact, can be arbitrarily worse; we provide an example that demonstrates this. 
We then show that computing an optimal policy exactly or approximately is NP-hard using a tight reduction from {\sc Maximum Independent Set}. 

In Section~\ref{sec:multi_stage}, we go beyond two-stage interactions (of elicitation and signaling) 
and consider cases where the principal and the agent may interact over multiple rounds. 
For a non-credible agent, such interactions beyond two stages are redundant. 
However, with credible agents, we show that two additional stages---\emph{pre-signaling} and \emph{non-binding elicitation}---are necessary for achieving optimality.
We also show via non-trivial examples that these additional stages can strictly improve the principal's utility. 
Surprisingly, these additional stages also make the computation of optimal mechanisms tractable.
To derive this result, we characterize the class of optimal mechanisms and prove that a four-stage mechanism structure---involving (in order) pre-signaling, non-binding elicitation, binding type elicitation, and signaling---results in IC across multiple stages and is w.l.o.g. optimal against a credible agent. 
The characterization can be viewed as a revelation principle, where non-binding elicitation indeed reveals the agent's true type, despite them eventually imitating another agent type.
Based on the simple fixed structure from the characterization, we are able to formulate the computation of optimal mechanisms as an LP, which yields a polynomial-time algorithm. 

Finally, in Section \ref{sec:partial-info-elicit}, we observe that a specific form of elicitation which we call {\em partial information elicitation} can be used to further improve the principal's utility. We show that under these information elicitation structures, an unbounded number of rounds of information exchange between the principal and the agent may be necessary to achieve optimality. We prove a tight $\Theta(n)$ bound on the depth of optimal mechanisms, when nature has $n$ possible states.

Our results highlight sharp contrasts between the computational and algorithmic aspects of credible versus non-credible cases.

\subsection{Related Work}

We build on the standard Bayesian persuasion model of \citet{kamenica2011bp}, which was proposed to study the effects of (the provision of) information on the outcomes of strategic interactions.
The model has attracted widespread interest in both economics and theoretical computer science
(e.g., \citep{babichenko2017algorithmic,dughmi2016algorithmic,feng2024rationality,kolotilin2017persuasion,xu2020tractability}; also refer to the surveys by \citet{dughmi2017infostructuresurvey} and \citet{kamenica2019bpsurvey}).

Models with private agent type but without credibility has been well-motivated in the literature, in various forms of principal-agent problems (e.g., \citep{conitzer2002complexity,myerson1982optimal}). This is in general related to the economics concept of cheap talk \citep{farrell1996cheap}. 
Several works look into the Bayesian persuasion model where agents have a private type and reports it (possibly strategically) to the principal.
\citet{castiglioni2022type} look into computational problems in a model where the agent is asked to report their type to principal after the principal have committed to a menu of signaling schemes.
Subsequent work by \citet{bernasconi2023onlineBP} look into further computational questions in this model.
\citet{kolotilin2017persuasion} consider a model where the agent only needs to choose between two actions, and both the principal's and agent's utilities are linearly related to the state and the agent's type.
They ask the question of whether the principal can benefit by designing a complex mechanism under these assumptions.
Other works that consider models of Bayesian persuasion where the agent has a private type include \citep{gill2008seq,perez2014interim} and \citep{alonso2018persuasionexperts}.
In \emph{online} Bayesian persuasion \cite{castiglioni2020online,castiglioni2023regretonlinebp}, the agent has an unknown private type, and this type is chosen adversarially at each round, over multiple principal-agent interactions.
Such two-way information exchanges may also take place in sequential interactions as studied more recently in \cite{gan2023sequential}.

Another related line of work is strategic communication with costly lying \cite{kartik2008lies,kartik2009strategiclying,kephart2015complexity,kephart2016revelation,nguyen2021costly} or constraints on how they can lie \cite{yu2011mechanism,zhang2021automated}, which is somewhat the opposite of the \emph{cheap talk} model \cite{crawford1982cheaptalk}.
However, these works focus mostly on direct cost or constraints imposed on the principal's signaling or the agent's reporting. The credibility constraint in our model can be viewed as an implicit cost that depends on both the agent's report and their action: an infinitely large cost is incurred if they act in ways different from their report.

Our model of a credible agent is related to various studies on {\em impersonation} or {\em imitative deception} in algorithmic game theory \cite{birmpas2021optimally,gan2019manipulating,gan2019imitative,kash2010impersonation,nguyen2019imitative}. 
In these studies, an agent imitates the best responses of another agent type to change the principal's belief about their incentives. Such attempts can be effective especially when the principal learns the agent's incentives by querying the agent's best responses \cite{blum2015learning,peng2019learning}. 
From the principal's perspective, to maximize the utility of her learning outcome amounts to the same mechanism problem we study in this paper.
\citet{gan2019imitative} considered this mechanism design problem in normal-form Stackelberg games. 
In their model, the agent (follower) responds to the principal's (leader's) strategy according to the utility function they report to the principal, so as to trick the principal into permanently believing that they are of the reported type. 
Despite the similarity, they did not consider the broader set of approaches we study in this paper. Indeed, a key difference is that in their model, the principal does not have any information advantages over the agent (there is no hidden state of nature, and the principal only determines an action to play). Without such advantages, the standard mechanism structure, which maps the type the agent reports to an outcome, suffices for achieving optimality.

There are other recent works focusing on cheap-talk style conversations in Bayesian settings, where multiple agents with their own private information engage in conversations, through which they strategically reveal their private information
\cite{mao2022bilateral,leme2023bayesianconversations,arieli2023mediated}.
Such conversations may proceed in multiple rounds, similarly to multiple stages of information revelation and elicitation in the mechanism we design, but for different fundamental reasons. Crucially, all information exchanges in these models are cheap, in the sense that they do not directly change the players' payoffs, whereas in ours the agent's reporting is costly because of credibility: the agent will suffer an infinite cost if he behaves differently from the type he reports.

\section{Model and Preliminaries}\label{sec:prelim}
For a set $S$, let $\Delta(S)$ denote the set of all probability distributions over $S$. 
Let $[n] = \{1, 2, \ldots, n\}$ for a positive integer $n \in \bZ_{> 0}$.
In the standard information design model \citep{kamenica2011bp}, a principal (sender) wants to persuade an agent (receiver) to take an action that is desirable to the principal. There is a set of actions $A$, and the payoff of each action $a \in A$ to both the principal and the agent is determined by the state of nature $\theta \in \Theta$---we use $v(\theta, a)$ and $u(\theta, a)$ to denote those payoffs, respectively. We assume $\theta$ is drawn from a common prior distribution $\mu$, and the principal must commit to a signaling scheme $\pi(\cdot\given\theta) \in \Delta(G)$, where $G$ denotes some set of signals. For this standard model, the revelation principle tells that, without loss of generality, the set of signals corresponds to the set of actions, i.e., $G = A$.
The goal we adopt from the perspective of a principal is to find an optimal signaling strategy, which maximizes the principal's expected payoff for the outcome induced. 

\subsection{Agent with Private Type}
\label{sec:prelim:type}

In many scenarios, the agent has a private utility function unknown to the principal. 
The standard model can be extended to these scenarios as follows.
Let $T$ be a finite set of types for the agent.
An agent of type $t \in T$ has a utility function $u_t(\theta, a)$.
The probability that the agent is of type $t \in T$ is $\rho(t)$.
The distribution $\rho \in \Delta(T)$ is known to the principal. The utility function of the principal is $v(\theta, a)$, same as the standard model. 

When the principal cannot elicit any additional information from the agent about his type, she has to design her signaling scheme $\pi(\cdot\given\theta) \in \Delta(G)$, for each state $\theta$, where $G$ denotes the set of signals, given only the knowledge about the distribution of types $\rho$. 
Let $\pi(\theta, g) = \pi(g \given \theta) \cdot \mu(\theta)$.
We assume that the agent plays optimally given $\pi$. Formally, given $\pi$, we assume that an agent of type $t$ plays an action $a$ with probability $\alpha_{\pi,t}(a \given g)$ after getting a signal $g \in G$ from the principal; where $\alpha_{\pi,t}(a \given g) > 0$ only if playing $a$ maximizes agent's expected utility given $g$ and $\pi$. The expected utility of the agent with type $t$ is given by
\begin{equation}\label{eq:u:agent}
    \Exp_{(\theta, g) \sim \pi,\, a \sim \alpha_{\pi,t}(\cdot \given g)}[u_t(\theta, a)] = \sum_{\theta, g} \pi(\theta, g) \sum_a \alpha_{\pi, t} (a \given g) \cdot u_t(\theta, a).
\end{equation}
The expected utility of the principal is given by
\begin{equation}\label{eq:u:prin}
    \Exp_{t \sim \rho, (\theta, g) \sim \pi,\, a \sim \alpha_{\pi,t}(\cdot \given g)}[v(\theta, a)] = \sum_t \rho(t) \sum_{\theta, g} \pi(\theta, g) \sum_a \alpha_{\pi, t} (a \given g) \cdot v(\theta, a).
\end{equation}
In this model, where the principal cannot elicit any additional information from the agent, \citet{castiglioni2020online} and \citet{xu2020tractability} prove that the problem of finding a signaling scheme $\pi$ that maximizes the principal's utility is NP-hard to solve exactly or approximately. In particular, they show that for any constant $0 \le \epsilon < 1$, the principal cannot find a policy that gives her an expected reward of at least $(1 - \epsilon)$ times the optimal reward in polynomial time unless $P = NP$.

\subsection{Type Elicitation}

In this paper, we focus on the scenario with elicitation, i.e., when the principal can ask the agent for his type before sending a signal. The agent of type $t$ reports his type as $t' \in T$. The principal's signaling scheme $\pi_{t'}(\cdot\given\theta)$ is a function of both $\theta$ and $t'$. Let $\pi_{t'}(\theta, g) = \pi_{t'}(g \given\theta) \cdot \mu(\theta)$.
\Cref{fig:timeline} illustrates the timeline of the process.

We consider both {\em non-credible} and {\em credible} reporting, focusing primarily on the latter.
In the non-credible case, the agent can report any type and then play any action afterward---the action need not be consistent with the reported type. 
In contrast, in the credible case, the agent needs to behave as if they are the reported type, i.e., if the agent reports a type $t'$, then he must play an action that is optimal for the type $t'$ given his posterior belief (as a function of the common prior and the signal he receives from the principal). 
We formally introduce the pre-signaling, non-binding elicitation, and partial elicitation in Sections~\ref{sec:multi_stage} and \ref{sec:partial-info-elicit}.

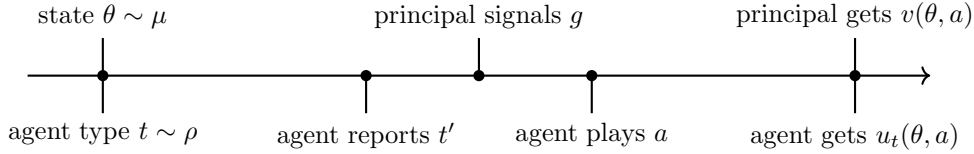
\begin{figure}
    \centering
    \small
    \begin{tikzpicture}
      \draw[->,thick] (-1,0) -- (11, 0);
      \draw[-,thick] (0,-0.5) node[below, align=center]{agent type $t \sim \rho$} -- (0,0) node[circle,fill,inner sep=1.5pt]{};
      \draw[-,thick] (0, 0.5) node[above, align=center]{state $\theta \sim \mu$} -- (0,0);
      \draw[-,thick] (3.5,-0.5) node[below, align=center]{agent reports $t'$} -- (3.5,0) node[circle,fill,inner sep=1.5pt]{};
      \draw[-,thick] (5, 0.5) node[above, align=center]{principal signals $g$} -- (5,0) node[circle,fill,inner sep=1.5pt]{};
      \draw[-,thick] (6.5,-0.5) node[below, align=center]{agent plays $a$} -- (6.5,0) node[circle,fill,inner sep=1.5pt]{};
      \draw[-,thick] (10,0.5) node[above, align=center]{principal gets $v(\theta,a)$} -- (10,0) node[circle,fill,inner sep=1.5pt]{};
      \draw[-,thick] (10,-0.5) node[below, align=center]{agent gets $u_t(\theta,a)$} -- (10,0);
    \end{tikzpicture}
    \caption{Timeline.}
    \label{fig:timeline}
\end{figure}

\section{Non-credible vs. Credible Reporting}
\label{sec:non-cred}

Let us first compare credible and non-credible reporting and present some algorithmic results.

\subsection{Non-credible Reporting}

When the agent is non-credible, we show that the principal can compute the optimal mechanism in polynomial time using an LP formulation.

First, notice that, when the agent is non-credible, IC is w.l.o.g.~using standard arguments: Any mechanism $\pi$ of the principal will induce an optimal reporting strategy of the agent given $\pi$, which can be encapsulated in a meta-mechanism that simulates the behavior of the agent given his true type. This meta-mechanism plays optimally for the agent and, therefore, is IC. It also provides the same expected reward to the principal as the original mechanism $\pi$. So, the principal can optimize over IC signaling policies w.l.o.g. 
Let $\pi_t$ denote the mechanism used by the principal when an agent reports type $t$. Let $\pi_t(\theta, a) = \pi_t(a \given \theta) \cdot \mu(\theta)$ denote the marginal probability of $(\theta, a)$.
The following LP computes the optimal signaling mechanism for the principal.
\begin{align}
\max_{(\pi_t)_{t \in T}} \quad
& \sum_{t \in T} \rho(t) \sum_{\theta, a} \pi_t(\theta, a) \cdot v(\theta, a) \label{eq:non-cred:1}
\end{align}
\begin{align}
\text{subject to} \quad
& 
\sum_{\theta, a} \pi_t(\theta, a) \cdot u_t(\theta, a) \ge 
\sum_{\theta, a} \pi_{t'}(\theta, a) \cdot u_t(\theta, f(a)), 
& \forall t, t' \in T,\ f: A \to A, 
\label{eq:non-cred:2} \\
& 
\sum_{a} \pi_t(\theta, a) = \mu(\theta),  
\qquad\qquad\qquad \ \forall t \in T, \theta \in \Theta, \\
&
\pi_t \in \Delta(\Theta \times A),
\qquad\qquad\qquad\qquad \forall t \in T.
\end{align} 
Specifically, the objective function captures the principal's expected payoff under the agent's truthful response. Constraint~\eqref{eq:non-cred:2} ensures that this truthful behavior is indeed incentivized, where: the left side is the agent's expected payoff for truth reporting, conditioned on type $t$; and the right side describes the payoff the agent would obtain if he reports a different type $t'$ and plays $f(a)$ when he gets the recommendation to play $a$.
This constraint also implicitly captures the following:
\begin{talign*}
    &\sum_{\theta} \pi_t(\theta, a) \cdot u_t(\theta, a) \ge \sum_{\theta} \pi_t(\theta, a) \cdot u_t(\theta, a'), &\forall t \in T,\, a, a' \in A,
\end{talign*}
so in addition to truthful type reporting $\pi$ incentivizes, every action recommendation of $\pi_t$ is also IC.
Since the constraint \eqref{eq:non-cred:2} is defined for all $f: A \to A$, it includes exponentially many linear constraints. 
However, since $f$ is not a function of the state $\theta$, the constraint can be equivalently written as follows:
\begin{align}
&\sum_{\theta, a} \pi_t(\theta, a) \cdot u_t(\theta, a) \ge 
\sum_{a}\ 
\underbrace{\max_{a'} \sum_{\theta} \pi_{t'}(\theta, a) \cdot u_t(\theta, a')}_{z(t,t',a)}, 
&\forall t, t' \in T.
\label{eq:non-cred:2:substitute-z}
\end{align}
We can further substitute the maximization on the right side with an auxiliary variable $z(t,t',a)$ and force the variable to be an upper bound of the maximum value using the following additional constraints, which is linear w.r.t. $z$ and $\pi$:
\begin{align}
&z(t,t',a) \ge \sum_{\theta} \pi_{t'}(\theta, a) \cdot u_t(\theta, a'),
&\forall t, t' \in T,\, a,a' \in A.
\label{eq:non-cred:2:z-upper-bound}
\end{align}
This converts \eqref{eq:non-cred:2} to $|T|^2$ constraints in \eqref{eq:non-cred:2:substitute-z} and $|T|^2 |A|^2$ constraints in \eqref{eq:non-cred:2:z-upper-bound}, with $|T|^2 |A|$ additional variables $z(t,t',a)$.
Hence, the size of this new LP formulation is polynomial in the input size of the problem. This gives a polynomial-time algorithm to compute an optimal mechanism against a non-credible agent.

\subsection{Credible Reporting}\label{sec:cred}

Now consider a credible agent.
After reporting a type $t'$, the agent always behaves as if he is actually of type $t'$. Formally, say the agent has type $t$ and reports type $t'$. Based on the reported type $t'$ and the state $\theta$, the principal will send a signal $g \in G$ with probability $\pi_{t'}(g \given \theta)$, where $\pi_{t'}$ is the mechanism that the principal uses for a report $t'$; let $\pi_{t'}(\theta, g) = \pi_{t'}(g \given \theta) \cdot \mu(\theta)$. After receiving this signal $g$, the agent must imitate the reported type $t'$ to remain credible, so he will select an action $a'$ that satisfies
\begin{align*}
    a' \in \argmax_{a \in A} \sum_{\theta} \pi_{t'}(\theta, g) \cdot u_{t'}(\theta, a).
\end{align*}
But notice that the agent receives a payoff based on his actual payoff function $u_t$ and is equal to $\sum_{\theta} \pi_{t'}(\theta, g) \cdot u_t(\theta, a')$.

As with a non-credible agent, the principal can compute the optimal {\em IC} mechanism for a credible agent using an LP, although with a slightly different formulation (weaker constraints).
\begin{align}
\max_{(\pi_t)_{t \in T}}  \quad
& \sum_{t \in T} \rho(t) \sum_{\theta, a} \pi_t(\theta, a) \cdot v(\theta, a) \\
\text{subject to} \quad
& 
\sum_{\theta, a} \pi_t(\theta, a) \cdot u_t(\theta, a) \ge \sum_{\theta, a} \pi_{t'}(\theta, a) \cdot u_t(\theta, a) &\forall t, t' \in T \label{eq:credIC:2}\\
& 
\sum_{\theta} \pi_t(\theta, a) \cdot u_t(\theta, a) \ge \sum_{\theta} \pi_t(\theta, a) \cdot u_t(\theta, a') &\forall t \in T,\, a, a' \in A \label{eq:credIC:3} \\
&
\sum_{a} \pi_t(\theta, a) = \mu(\theta)  
&\forall t \in T,\, \theta \in \Theta \\
&
\pi_t \in \Delta(\Theta \times A) &\forall t \in T
\end{align}
Notice that constraint \eqref{eq:non-cred:2} in the LP for a non-credible agent splits into constraints \eqref{eq:credIC:2} and \eqref{eq:credIC:3} for a credible agent. Constraint \eqref{eq:non-cred:2} considers all possible deviations over reports and actions of the agent jointly, while \eqref{eq:credIC:2} and \eqref{eq:credIC:3} consider such deviations separately. Constraint \eqref{eq:credIC:2} ensures IC in the reporting stage of the game: it only considers reported-type deviations, because the principal knows that a credible agent is going to play consistently with his reported type. On the other hand, constraint \eqref{eq:credIC:3} ensures IC in the action stage of the game, by considering only action deviations; this constraint is w.l.o.g.~because we can apply the revelation principle to the last stage of the game (when the agent plays his action) and optimize only over policies that are IC (for this stage).

Although the principal can efficiently compute the optimal IC mechanism for credible agents using the LP above, the next example shows that using a non-IC mechanism can sometimes strictly improve her utility. So, the class of IC policies is {\em not} without loss of optimality for credible agents.

\begin{example}\label{ex:cred}
Let there be two agent types $T = \{t_1, t_2\}$, two states $\Theta = \{ \theta_1, \theta_2 \}$, and four actions $A  = \{ a, b, c, d\}$. The payoff functions of the agent for the two types are given in the tables below.
\begin{center}
\renewcommand{\arraystretch}{1.3}
\small
    \begin{tabular}{L | R | R | R | R |}
        \myheader{} & \myheader{a} & \myheader{b} &  \myheader{c} &  \myheader{d} \\
         \cline{2-5}
         \theta_1 & 4 & 1 & 3 & 0 \\
         \cline{2-5}
         \theta_2 & 1 & 4 & 0 & 3 \\
         \cline{2-5}
        \mytablecaption{} & \multicolumn{4}{c}{Type $t_1$} 
    \end{tabular}
    \qquad
    \begin{tabular}{ L | R | R | R | R |}
        \myheader{} & \myheader{a} & \myheader{b} & \myheader{c} & \myheader{d} \\
         \cline{2-5}
         \theta_1 & 0 & 0 & 1 & 0 \\
         \cline{2-5}
         \theta_2 & 0 & 0 & 0 & 1 \\
         \cline{2-5}
         \mytablecaption{} & \multicolumn{4}{c}{Type $t_2$}
    \end{tabular}
\end{center}
The principal gets payoff $1$ if the agent plays actions $c$ or $d$ and payoff $0$ if the agent plays actions $a$ or $b$, irrespective of the state. Let the prior be uniform, $\mu(\theta_1) = \mu(\theta_2) = 1/2$.
\end{example}

Let us first consider what happens when a credible agent reports each of the possible types in the above example.
\begin{itemize}
\item
If the agent reports type $t_2$, then by credibility she must act like an agent of type $t_2$, irrespective of her actual type. Notice that for type $t_2$, the actions in $\{c, d\}$ dominate those in $\{a,b\}$. In particular, for any posterior belief about the state, say $(p, 1-p)$, that the agent may have after receiving principal's signal, he has payoff $0$ for playing $a$ or $b$ but payoff $\max(p, 1-p) > 0$ for playing the better option among $c$ or $d$. So, after reporting type $t_2$, the agent always plays an action that gives the principal payoff $1$.

\item
On the other hand, if the agent reports type $t_1$, she must always play an action in $\{a, b\}$ and never an action in $\{c,d\}$. In particular, action $a$ dominates action $c$, and action $b$ dominates $d$. As the agent never plays $c$ or $d$, the principal only gets payoff $0$.
\end{itemize}

In an IC mechanism, an agent reports type $t_i$ only when he is actually $t_i$, which happens with probability $\rho(t_i)$. Therefore, the expected payoff of the principal is $0 \times \rho(t_1) + 1 \times \rho(t_2) = \rho(t_2)$. On the other hand, we propose the following non-IC mechanism that achieves an expected payoff $1$, which is strictly greater than $\rho(t_2)$ if $\rho(t_1) > 0$.

The non-IC mechanism works as follows: If the agent reports $t_1$, the principal does not give any information about the state to the agent, but if the agent reports $t_2$, the principal fully reveals the state. Intuitively, the principal trades information with the agent's ``promise'' to act as the more preferred type $t_2$.
This is effective only when the agent is credible.
Consider the response of a type-$t_1$ agent under this mechanism.
\begin{itemize}
\item 
If he reports $t_1$, as the principal does not reveal any information, the agent's posterior belief is the same as the prior $(1/2, 1/2)$, and playing either of $a$ or $b$ gives him an expected payoff of $5/2$. 
\item
If he reports $t_2$, the principal reveals the state, and depending upon whether the state is $\theta_1$ or $\theta_2$, he plays either $c$ or $d$ (because he must imitate $t_2$) and gets a payoff of $3$. 
\end{itemize}
So, the type-$t_1$ agent is incentivized to report and act like $t_2$. It is trivial to check that an agent of type $t_2$ will report $t_2$, too. Consequently, all agent types will report $t_2$ and play either $c$ or $d$, and the principal gets expected payoff $1$. The ratio of between the expected payoffs for the principal for the optimal and optimal-IC mechanism is $1/\rho(t_2) \rightarrow \infty$ as $\rho(t_2) \rightarrow 0$.

\smallskip

Given that IC mechanisms may not be optimal, the principal may want to optimize over all policies, including non-IC mechanisms, too. The theorem below says that even to compute an approximately optimal mechanism is NP-hard. The proof uses a tight reduction from {\sc Maximum Independent Set}, similarly to the approach by \citet{gan2019imitative}.\footnote{All omitted proofs can be found in the appendix.}

\begin{restatable}{theorem}{thmnphnoIC}
\label{thm:nph-no-IC}
Unless P=NP, there exists no polynomial-time $\frac{1}{(|T|-1)^{1-\epsilon}}$-approximation algorithm for computing a mechanism (against a credible agent) for any constant $\epsilon > 0$, even when there are only two possible states of nature.
\end{restatable}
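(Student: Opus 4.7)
The plan is a polynomial-time reduction from \textsc{Maximum Independent Set}, following \citet{gan2019imitative} adapted to the credible-persuasion setting. Given a graph $G = (V, E)$ with $n = |V|$, I would construct a persuasion instance with $|\Theta| = 2$ states of nature under the uniform prior, $|T| - 1 = n$ ``vertex types'' $\{t_v : v \in V\}$ plus a single auxiliary ``baseline'' type, and polynomially many actions. The action set contains, for each $v \in V$, a principal-preferred action $a_v$ (giving principal payoff $1$) together with default actions giving principal payoff $0$. The utility functions $u_{t_v}$ are designed so that, mirroring Example~\ref{ex:cred}, each $t_v$ truthfully prefers zero-payoff actions but can be persuaded to report and credibly imitate an alternative profile whose optimal play is $a_v$, in exchange for the principal revealing enough information about the state.

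The heart of the reduction is an edge gadget that exploits the two states of nature to encode $E$. For each edge $(u,v) \in E$, the payoff tables are calibrated so that no single mechanism can simultaneously render $t_u$'s imitation leading to $a_u$ and $t_v$'s imitation leading to $a_v$ both credible: any signaling refinement that makes $t_u$'s imitation optimal under its claimed type necessarily violates credibility for $t_v$'s imitation, and symmetrically. Conversely, for any independent set $I \subseteq V$, one exhibits a single mechanism that persuades every $t_v$ with $v \in I$ to report and credibly imitate the appropriate profile, yielding principal payoff proportional to $|I|/n$.

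Correctness then reduces to showing (a) any mechanism of principal payoff $\geq k/n$ induces an independent set of size $\geq k$, and (b) any independent set of size $k$ yields a mechanism of payoff $\geq k/n$. A polynomial-time $1/(|T|-1)^{1-\epsilon} = 1/n^{1-\epsilon}$-approximation for the mechanism problem would then translate into an $n^{1-\epsilon}$-approximation for \textsc{Maximum Independent Set}, contradicting H\aa stad's theorem unless P=NP, and the hardness holds with just two states of nature by construction.

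The main obstacle is designing and verifying the edge gadget: the utility tables must simultaneously (i) make each vertex type individually persuadable in the manner of Example~\ref{ex:cred}, (ii) make every pair of adjacent vertex types \emph{jointly} unpersuadable under any single signaling scheme, and (iii) respect the within-type action-stage IC constraints~\eqref{eq:credIC:3} using only two states. A secondary subtlety is lifting pairwise infeasibility along edges to infeasibility over arbitrary non-independent subsets, so that the payoff-to-independent-set correspondence holds uniformly across all candidate mechanisms and not only pairwise; I expect that the two-state payoff calibration---analogous to the $3$ vs.\ $5/2$ gap in Example~\ref{ex:cred}---is what makes this lifting go through.
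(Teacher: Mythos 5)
Your proposal takes essentially the same route as the paper: a reduction from \textsc{Maximum Independent Set} with $n$ vertex types, a single zero-probability baseline type, two equiprobable states, and per-vertex principal-preferred actions, with correctness established via a forward and backward correspondence between independent sets and persuadable vertex subsets. The difficulty is that you have outlined a plan rather than a proof: you explicitly acknowledge that you have not constructed or verified the edge gadget, and that construction is the entire technical content here.

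For the record, the paper's gadget is quite clean. The baseline type $t_*$ earns $-1$ on two ``default'' actions $a_0, b_0$ and $1$ on every other action, so a credible $t_*$-reporter is happy to follow whatever non-default recommendation they receive. Type $t_v$ earns roughly $\pm 1 + 1/n$ on the defaults (so the better default always secures at least $1/n$), $\pm 1$ on its own actions $a_v, b_v$, $-n$ on a neighbor's actions, and $0$ on a non-neighbor's actions. Given an independent set $V'$, the principal's scheme in the $t_*$ branch fully reveals the state and recommends $a_{v'}$ or $b_{v'}$ for a uniformly random $v' \in V'$; this gives each $v \in V'$ expected utility $1 > 1/n$ for misreporting (no neighbors in $V'$), while any $v \notin V'$ sees at least one $-n$ term, pushing it below $1/n$. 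Conversely, to persuade $t_{v_1}$ to misreport the $t_*$-scheme must place mass at least $1/n$ on the pairs $(\theta_1,a_{v_1}),(\theta_2,b_{v_1})$, which inflicts a $-n$ penalty on any adjacent $t_{v_2}$ and drops $t_{v_2}$'s misreporting payoff below $1/n$.

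Your ``secondary subtlety'' about lifting pairwise infeasibility to arbitrary non-independent subsets is not actually an obstacle: if more than $k$ vertex types are simultaneously persuaded to misreport and the maximum independent set has size $k$, pigeonhole yields two adjacent persuaded types, and the pairwise argument above already produces a contradiction. No set-level strengthening of the gadget is needed. So the structure of your plan is right, but the claim remains unproven until the gadget is exhibited and the incentive inequalities checked; as written, the proposal does not establish the theorem.
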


In the remainder of the paper, unless otherwise specified all our results assume a credible agent.

\section{Moving Beyond Two Stages}\label{sec:multi_stage}

We have seen that the principal may benefit by using non-IC mechanisms. As it turns out, when credibility is required, there are even more instruments the principal can leverage to further improve her payoff. 
In this section, we demonstrate two such instruments:
(1) pre-signaling, and 
(2) non-binding elicitation. 
Pre-signaling means that the principal sends a signal before she elicits the agent's type. Non-binding elicitation is a cheap-talk communication in addition to the agent's type reporting.
It asks the agent how they would like the mechanism to proceed. Since the question is not type related, it is non-binding; no restrictions are attached to the agent's answer on how they must act subsequently. 

These additional approaches naturally result in mechanisms involving more stages of interaction between the principal and the agent, beyond the two-stage elicitation-then-signaling paradigm studied in the previous section; see \Cref{fig:timeline-presign-nonbind}. 
We demonstrate the strict payoff improvement of these mechanisms, their general optimality, and more surprisingly, the computational benefits they bring: in sharp contrast to the intractability of the two-stage mechanisms, we present a polynomial time algorithm to optimize this new type of mechanism.

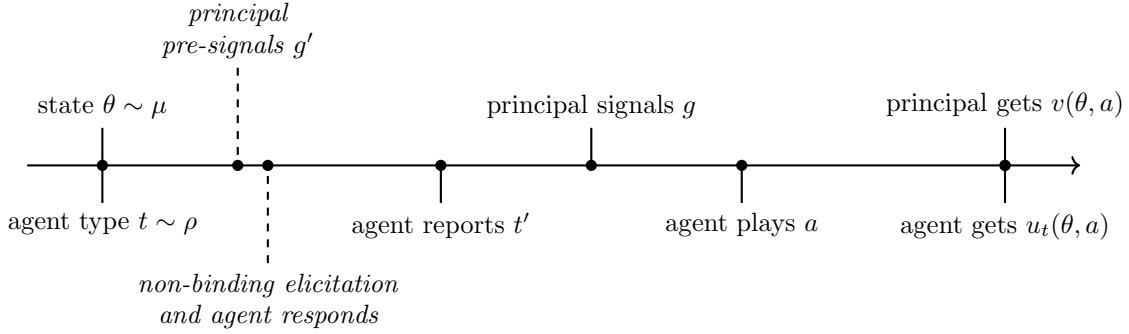
\begin{figure}
    \centering
    \small
    \begin{tikzpicture}
      \draw[->,thick] (-1,0) -- (13, 0);
      \draw[-,thick] (0,-0.5) node[below, align=center]{agent type $t \sim \rho$} -- (0,0) node[circle,fill,inner sep=1.5pt]{};
      \draw[-,thick] (0, 0.5) node[above, align=center]{state $\theta \sim \mu$} -- (0,0);
      \draw[-,thick] (4.5,-0.5) node[below, align=center]{agent reports $t'$} -- (4.5,0) node[circle,fill,inner sep=1.5pt]{};
      \draw[-,thick] (6.5, 0.5) node[above, align=center]{principal signals $g$} -- (6.5,0) node[circle,fill,inner sep=1.5pt]{};
      \draw[-,thick] (8.5,-0.5) node[below, align=center]{agent plays $a$} -- (8.5,0) node[circle,fill,inner sep=1.5pt]{};
      \draw[-,thick] (12,0.5) node[above, align=center]{principal gets $v(\theta,a)$} -- (12,0) node[circle,fill,inner sep=1.5pt]{};
      \draw[-,thick] (12,-0.5) node[below, align=center]{agent gets $u_t(\theta,a)$} -- (12,0) node[circle,fill,inner sep=1.5pt]{};
      \draw[-,dashed,thick] (1.8, 1.3) node[above, align=center]{\em principal\\ \em pre-signals $g'$} -- (1.8,0) node[circle,fill,inner sep=1.5pt]{};
      \draw[-,dashed,thick] (2.2, -1.3) node[below, align=center]{\em non-binding elicitation \\ \em and agent responds} -- (2.2,0) node[circle,fill,inner sep=1.5pt]{};
    \end{tikzpicture}
    \caption{Timeline with pre-signaling and non-binding elicitation.}
    \label{fig:timeline-presign-nonbind}
\end{figure}

\subsection{Pre-signaling}

Intuitively, pre-signaling aims to change the agent's belief to influence their decision on which type to imitate. In the following example, we compare mechanisms with/without pre-signaling, and show that pre-signaling offers a strict improvement in payoff for the principal. 
For ease of description, we refer to a mechanism that does not use pre-signaling as an $\ES$ mechanism, where $\xE$ and $\xS$ represents the elicitation and signaling stages, respectively, in the mechanism.
Similarly, we refer to a mechanism that uses pre-signaling as an $\SES$ mechanism, where the first $\xS$ refers to the additional pre-signaling stage.

\begin{example}
\label{exp:SESA-gt-ESA}
Let $\Theta = \{\alpha, \beta, \gamma \}$, $T = \{t_0, t_1\}$, and $A = \{a, a', b, b'\}$.
The principal's payoff is $1$ for actions $a'$ and $b'$, and $0$ for actions $a$ and $b$, irrespective of the state. The agent's payoffs are given in the table below. Moreover, the type and state distributions are uniform: $\rho(t_0) = \rho(t_1) = 1/2$, and  $\mu(\gamma) = \mu(\alpha) = \mu(\beta) = 1/3$.
\begin{center}
\renewcommand{\arraystretch}{1.3}
\small
    \begin{tabular}{ L | R | R | R | R |}
         \myheader{} & \myheader{a} & \myheader{a'} & \myheader{b} & \myheader{b'} \\
         \cline{2-5}
         \alpha,\beta,\gamma & 0 & 1 & 0 & 1 \\
         \cline{2-5}
         \mytablecaption{} &\multicolumn{4}{c}{Type $t_0$}
    \end{tabular}
    \qquad
    \begin{tabular}{ L | R | R | R | R |}
         \myheader{} & \myheader{a} & \myheader{a'} & \myheader{b} & \myheader{b'} \\
         \cline{2-5}
         \alpha & 3 & 2 & -2 & -3  \\
         \cline{2-5}
         \beta & -2 & -3 & 3 & 2  \\
         \cline{2-5}
         \gamma & 0 & -10 & 0 & -10 \\
         \cline{2-5}
         \mytablecaption{} &\multicolumn{4}{c}{Type $t_1$}
    \end{tabular}
\end{center}
\end{example}

In this example, the principal's incentive is completely aligned with a type-$t_0$ agent but in conflict with type $t_1$. For $t_1$, actions $a$ and $b$ strictly dominate $a'$ and $b'$, respectively. Hence, it is preferred that the agent acts according to $t_0$. 
We will demonstrate that no $\ES$ mechanism can persuade a type-$t_1$ agent to imitate $t_0$, while there exists an $\SES$ mechanism that achieves this with a positive probability.

\subsubsection{Without Pre-signaling}
\label{sc:with-pre-signaling}

When there is no pre-signaling, the key observation (\Cref{clm:ESA-t1-no-misreport}) is that a type-$t_1$ agent will always report truthfully. 
Since actions $a'$ and $b'$ are strictly dominated by $a$ and $b$, a type-$t_1$ agent will then never choose $a'$ or $b'$. As a result, the principal gets payoff $0$ when the agent's true type is $t_1$, which happens with probability $\rho(t_1) = 0.5$. Overall, the principal's expected payoff is at most $0.5 \times 0 + 0.5 \times 1 = 0.5$ (where $1$ is the principal's maximum attainable payoff according to the example).

\begin{lemma}
\label{clm:ESA-t1-no-misreport}
In \Cref{exp:SESA-gt-ESA}, a type-$t_1$ agent has no incentive to misreport his type as $t_0$ under any $\ES$ mechanism.
\end{lemma}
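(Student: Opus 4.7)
The plan is to bound the type-$t_1$ agent's utility from reporting $t_0$ by a quantity strictly below what he can guarantee by reporting truthfully. The structural observation driving everything is that for type $t_0$, the actions $a'$ and $b'$ strictly dominate $a$ and $b$ at every state (payoff $1$ versus $0$, independent of the posterior). Hence, under credibility, an agent who reports $t_0$ is forced to pick within $\{a',b'\}$ after every signal, no matter what $\pi_{t_0}$ looks like. This restriction is what makes misreporting costly for type $t_1$, whose payoffs under primed actions are substantially worse at state $\gamma$.

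First, I would upper-bound the deviation payoff. Fix an arbitrary $\ES$ mechanism $(\pi_{t_0}, \pi_{t_1})$. Conditioning on each realized signal $g$, the type-$t_1$ agent reporting $t_0$ must play some $a \in \{a',b'\}$, so his expected utility is at most the pointwise maximum of $u_{t_1}(\theta,\cdot)$ over $\{a',b'\}$, weighted by $\pi_{t_0}(\theta,g)$. Summing over $g$ and using $\sum_g \pi_{t_0}(\theta,g) = \mu(\theta)$ turns this into
\[
\sum_\theta \mu(\theta)\,\max_{a\in\{a',b'\}} u_{t_1}(\theta, a) \;=\; \tfrac{1}{3}\cdot 2 + \tfrac{1}{3}\cdot 2 + \tfrac{1}{3}\cdot(-10) \;=\; -2.
\]
This bound holds irrespective of $\pi_{t_0}$ and of how the agent breaks ties among $t_0$-optimal actions, because the pointwise maximum already dominates any tie-breaking rule.

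Second, I would lower-bound the truthful payoff by the standard revealed-preference argument: regardless of $\pi_{t_1}$, the agent can always ignore the signal and commit to a fixed action, so his payoff from truthfully reporting $t_1$ is at least
\[
\max_{a \in A} \sum_\theta \mu(\theta)\, u_{t_1}(\theta, a) \;=\; \tfrac{1}{3},
\]
attained by $a$ or $b$. Since $\tfrac{1}{3} > -2$, truthful reporting strictly dominates misreporting to $t_0$, which is the claim of the lemma.

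The only subtle step is the first one: one must be careful that the upper bound on the deviation payoff is genuinely a worst-case-for-the-principal bound, i.e., it must accommodate the agent selecting, among $t_0$-optimal actions, whichever action is best under $u_{t_1}$. Taking the pointwise maximum over $\{a',b'\}$ inside the sum handles this uniformly, so no separate tie-breaking analysis is needed. Everything else is linear-algebraic bookkeeping once the credibility-induced restriction to $\{a',b'\}$ has been identified.
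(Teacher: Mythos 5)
Your proof is correct and follows essentially the same approach as the paper: upper-bound the misreporting payoff by $\sum_\theta \mu(\theta)\max_{x'\in\{a',b'\}} u_{t_1}(\theta,x') = -2$ using credibility-forced dominance of $\{a',b'\}$ for $t_0$, and lower-bound the truthful payoff by $1/3$ via the ignore-the-signal-and-play-$a$ deviation. The only cosmetic difference is that the paper phrases the argument as a contradiction and writes the intermediate bound in terms of the induced probability measure, whereas you present the two bounds directly; the computations and the key observation are identical.
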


\begin{proof}
Consider an arbitrary $\ES$ mechanism.
For a type-$t_1$ agent, even without any additional signal from the principal, they are able to secure an expected payoff of $1/3$ by sticking to action $a$. 
Now suppose they misreport type $t_0$, and we show that this leads to a contradiction.

For a type-$t_0$ agent, actions $a'$ and $b'$ strictly dominate $a$ and $b$, respectively.
Hence, when an agent acts as $t_0$, they will play only $a'$ or $b'$. This means the following for a type-$t_1$ agent for every $\theta \in \Theta$:
\[
\Pro(a' \given \theta, t_1) + \Pro(b' \given \theta, t_1)  = 1,
\]
where $\Pro$ denotes the probability measure induced by the $\ES$ mechanism (and the agent's best response to this mechanism).
It follows that the payoff of the type-$t_1$ agent would be:
\begin{align}
& \sum_{\theta \in \Theta} \sum_{x' \in \{a',b'\}}  \Pro(x', \theta \given t_1) \cdot u_{t_1}(\theta, x') 
\ =\ 
\sum_{\theta \in \Theta} \mu(\theta) \cdot \Big ( \Pro(a' \given \theta, t_1) \cdot u_{t_1}(\theta, a') + \Pro(b' \given \theta, t_1) \cdot u_{t_1}(\theta, b') \Big) \nonumber\\
&\qquad 
\le\
\sum_{\theta \in \Theta} \mu(\theta) \cdot \max \big\{ u_{t_1}(\theta, a'),\, u_{t_1}(\theta, b') \big\} 
\ =\ 
1/3 \cdot 2 + 1/3 \cdot 2 + 1/3 \cdot (-10) 
\ <\ 
0, 
\label{eq:type-t1-ub}
\end{align}
which is strictly lower than the minimum payoff $1/3$ that is guaranteed by reporting truthfully. Therefore, a type-$t_1$ agent is always better off reporting truthfully.
\end{proof}

\subsubsection{With Pre-signaling}
\label{sc:without-pre-signaling}

Now, consider the following mechanism that uses pre-signaling.
Suppose that, before the agent reports their type, the principal pre-signals $g_\gamma$ when the state is $\gamma$, and pre-signals $g_{\neg\gamma}$ when it is any other state.
Depending on the pre-signal sent, the game proceeds as follows:
\begin{enumerate}
\item When $g_\gamma$ is sent, the agent knows with certainty that the state is $\gamma$.
\item When $g_{\neg\gamma}$ is sent, the agent's belief is a uniform distribution over $\alpha$ and $\beta$.
\end{enumerate}
In case~2, by using the following signaling strategy, the principal can obtain payoff $1$ subsequently:
\begin{align*}
\pi_{t_1} ( a \mid \alpha) = \pi_{t_1} ( a \mid \beta) =1, 
\qquad\text{and }\qquad
\pi_{t_0} ( a' \mid \alpha) = \pi_{t_0} ( b' \mid \beta) =1.
\end{align*}
Specifically, $\pi_{t_1}$ is uninformative as it always recommends $a$, whereas $\pi_{t_0}$ completely differentiates $\alpha$ and $\beta$. 
Under this mechanism, agents of both types will be incentivized to always (mis)report $t_0$ and perform the recommended actions. The principal obtains payoff $1$ as a result.
Overall, since the marginal probability of case~2 is $\mu(\alpha) + \mu(\beta) = 2/3$, the principal can obtain payoff at least $2/3$ in expectation. This is a strict improvement compared with $0.5$ in the case without pre-signaling.

\medskip

The analysis in \Cref{sc:with-pre-signaling,sc:without-pre-signaling} then implies the following.
\begin{theorem}
There exists an instance where an optimal $\SES$ mechanism yields a strictly higher payoff for the principal than any $\ES$ mechanism does.
\end{theorem}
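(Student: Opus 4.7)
The plan is to exhibit \Cref{exp:SESA-gt-ESA} as the witnessing instance and combine the two analyses in Sections~\ref{sc:with-pre-signaling} and \ref{sc:without-pre-signaling} into a single comparison. Concretely, I would first derive a tight upper bound of $1/2$ on the principal's payoff under any $\ES$ mechanism, then exhibit an explicit $\SES$ mechanism achieving $2/3$, and conclude by the strict inequality $2/3 > 1/2$.

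For the upper bound, I would invoke \Cref{clm:ESA-t1-no-misreport}: under any $\ES$ mechanism, a type-$t_1$ agent reports truthfully. Because actions $a'$ and $b'$ are strictly dominated (pointwise in $\theta$) by $a$ and $b$ in the type-$t_1$ payoff table, the agent then plays only actions in $\{a,b\}$ after any signal, yielding principal payoff $0$ conditional on true type $t_1$. Conditional on true type $t_0$ the principal's payoff is at most $1$ (the per-outcome maximum). Taking the expectation with the uniform type prior gives the bound $\rho(t_0)\cdot 1 + \rho(t_1) \cdot 0 = 1/2$ on the $\ES$ value.

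For the lower bound I would spell out the two-stage construction from \Cref{sc:without-pre-signaling}. The pre-signal $g$ reveals whether the state is $\gamma$: send $g_\gamma$ on $\gamma$ and $g_{\neg\gamma}$ on $\alpha,\beta$. Conditional on $g_{\neg\gamma}$ the posterior on $(\alpha,\beta)$ is uniform; the continuation mechanism uses $\pi_{t_1}(a\mid\alpha)=\pi_{t_1}(a\mid\beta)=1$ and $\pi_{t_0}(a'\mid\alpha)=\pi_{t_0}(b'\mid\beta)=1$. I would verify the following three incentive conditions after $g_{\neg\gamma}$: (i) reporting $t_0$ and obeying the recommendation gives a type-$t_1$ agent expected payoff $\tfrac12\cdot 2 + \tfrac12\cdot 2 = 2$, while reporting $t_1$ gives the uninformative recommendation $a$ with expected payoff $\tfrac12\cdot 3+\tfrac12\cdot(-2)=1/2$, so $t_1$ strictly prefers to imitate $t_0$; (ii) truthful $t_0$ already gets the maximum payoff $1$ under $\pi_{t_0}$; (iii) the continuation action recommendations are themselves IC given the induced posteriors. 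These together imply that after $g_{\neg\gamma}$ every agent plays $a'$ or $b'$, so the principal collects payoff $1$. After $g_\gamma$ we may pessimistically bound the principal's payoff by $0$. Hence the $\SES$ value is at least $(\mu(\alpha)+\mu(\beta))\cdot 1 = 2/3$.

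Combining the two gives the strict separation $2/3 > 1/2$ and proves the theorem. The only nontrivial step is the incentive verification in the continuation game after $g_{\neg\gamma}$ — in particular, making sure that a type-$t_1$ agent who imitates $t_0$ is genuinely willing to follow the recommendations $a'$ (on $\alpha$) and $b'$ (on $\beta$) under the credibility constraint; the payoff table is engineered so that from the \emph{reported} type $t_0$'s perspective $a',b'$ are dominant, which is exactly what credibility requires, while from the \emph{true} type $t_1$'s perspective the expected realized payoff across the two equally likely states still beats the outside option of reporting $t_1$. The rest is a direct numerical check and an application of the preceding lemma.
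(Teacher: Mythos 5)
Your proposal is correct and follows exactly the paper's route: use \Cref{clm:ESA-t1-no-misreport} to cap the $\ES$ value at $1/2$, then verify the pre-signaling construction from \Cref{sc:without-pre-signaling} achieves at least $2/3$. You spell out the continuation-game incentive checks (the $2$ vs.\ $1/2$ comparison for type $t_1$) a bit more explicitly than the paper does, but the decomposition, the witnessing instance, and both bounds are the same.
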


\subsection{Non-binding Elicitation}

Non-binding elicitation can be thought of as a menu for the agent to choose how they would like the mechanism to proceed.
An example of mechanisms with non-binding elicitation is illustrated in \Cref{fig:non-binding} (we will shortly demonstrate that it offers strict payoff improvement in \Cref{exp:EnSESA-gt-SESA}).
In this mechanism, a non-binding elicitation stage is arranged at the beginning of the mechanism. We will use the notation $\xEN$ to represent non-binding elicitation.
At $\xEN$, the principal asks the agent to choose between two options labelled 1 and 2, respectively. Depending on the agent's choice, the principal will subsequently proceed with one of the $\SES$ mechanisms. The elicitation at $\xEN$ does not involve any type information and is therefore \emph{non-binding} by nature: the agent's report at $\xEN$ does not result in any credibility constraints on how they should behave subsequently.

\tikzset{
E node/.style={rounded corners, draw, text width=3mm, inner sep=5pt, align=center, thick}
}

\tikzset{
S node/.style={circle, draw, text width=5mm, inner sep=2pt, align=center, thick}
}

Following our previously established notation, we will refer to such a mechanism as an $\EnSES$ mechanism.
We demonstrate in the following example that there exists an $\EnSES$ mechanism that strictly outperforms any $\SES$ mechanism.
\begin{example}
\label{exp:EnSESA-gt-SESA}
Let $\Theta = \{\alpha, \beta, \gamma\}$, $T = \{t_1, t_2\}$, and $A = \{a, a', b, b', d\}$.
Irrespective of the state, the principal's payoff is $1$ for actions $a'$ and $b'$, and $0$ for all the other actions.
The agent's payoffs are given in the tables below, where all blank entries are $-M$, for some sufficiently large constant, say $M > 10000$.  
The type and state distributions are uniform: $\mu(\alpha) = \mu(\beta) = \mu(\gamma) = 1/3$ and $\rho(t_1) = \rho(t_2) = 1/2$.
\begin{center}
\renewcommand{\arraystretch}{1.3}
\small
\begin{tabular}{L | R | R | R | R | R |}
         \myheader{} & \myheader{a} & \myheader{a'} & \myheader{b} & \myheader{b'} & \myheader{d} \\
         \cline{2-6}
         \alpha &  & 1 &  & & 0 \\
         \cline{2-6}
         \beta &  &  & 3 & 2 & 0 \\
         \cline{2-6}
         \gamma &  & -1 &  &  & 0 \\
         \cline{2-6}
         \mytablecaption{} & \multicolumn{5}{c}{Type $t_1$} 
    \end{tabular}
    \qquad
    \begin{tabular}{L | R | R | R | R | R |}
         \myheader{} & \myheader{a} & \myheader{a'} & \myheader{b} & \myheader{b'} & \myheader{d} \\
         \cline{2-6}
         \alpha & 3 & 2 &  & & 0 \\
         \cline{2-6}
         \beta &  &  &  & 1 & 0 \\
         \cline{2-6}
         \gamma &  &  &  & -1 & 0 \\
         \cline{2-6}
         \mytablecaption{} & \multicolumn{5}{c}{Type $t_2$} 
    \end{tabular}
\end{center}
\end{example}

\tikzset{
  arrleaf/.style={-{Circle[black,length=4pt]}}
}

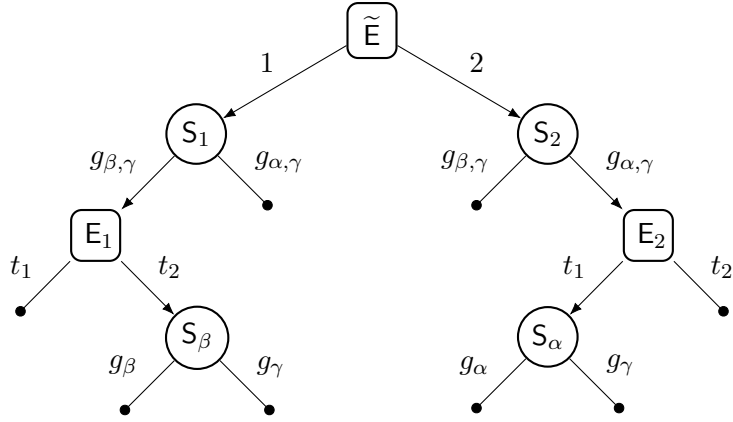
\begin{figure*}
\centering
\begin{tikzpicture}
    \node[E node] (1) {$\xEN$};
    \node[S node] (2) [below left = 1cm of 1, xshift=-10mm] {$\xS_1$};
    \node[S node] (3) [below right = 1cm of 1, xshift=10mm] {$\xS_2$};
    \node[E node] (4) [below left = 1cm of 2] {$\xE_1$};
    \node (5) [below right = 1cm of 2] {~};
    \node (6) [below left = 1cm of 3] {~};
    \node[E node] (7) [below right = 1cm of 3] {$\xE_2$};
    \node (8) [below left = 1cm of 4] {~};
    \node[S node] (9) [below right = 1cm of 4] {$\xS_\beta$};
    \node (10) [below right = 1cm of 7] {~};
    \node[S node] (11) [below left = 1cm of 7] {$\xS_\alpha$};
    \node (12) [below right = 1cm of 9] {~};
    \node (13) [below left = 1cm of 9] {~}; 
    \node (14) [below right = 1cm of 11] {~};
    \node (15) [below left = 1cm of 11] {~}; 
    \draw (1) -- node [black, midway, above left] {$1$} (2) [anchor=west];
    \draw (1) -- node [black, midway] {$2$} (3) [anchor=west];
    \draw (2) -- node [black, midway, above left] {$g_{\beta,\gamma}$} (4) [anchor=west];
    \draw[arrleaf] (2) -- node [black, midway, above right] {$g_{\alpha,\gamma}$} (5) [anchor=west];
    \draw[arrleaf] (4) -- node [black, midway, above left] {$t_1$} (8) [anchor=west];
    \draw (4) -- node [black, midway] {$t_2$} (9) [anchor=west];
    \draw[arrleaf] (3) -- node [black, midway, above left] {$g_{\beta,\gamma}$} (6) [anchor=west];
    \draw (3) -- node [black, midway] {$g_{\alpha,\gamma}$} (7) [anchor=west];
    \draw[arrleaf] (7) -- node [black, midway] {$t_2$} (10) [anchor=west];
    \draw (7) -- node [black, midway, above left] {$t_1$} (11) [anchor=west];
    \draw[arrleaf] (9) -- node [black, midway] {$g_\gamma$} (12) [anchor=west];
    \draw[arrleaf] (9) -- node [black, midway, above left] {$g_\beta$} (13) [anchor=west];
    \draw[arrleaf] (11) -- node [black, midway] {$g_\gamma$} (14) [anchor=west];
    \draw[arrleaf] (11) -- node [black, midway, above left] {$g_\alpha$} (15) [anchor=west];
\end{tikzpicture}
\caption{A mechanism that uses non-binding elicitation.}
\label{fig:non-binding}
\end{figure*}

\subsubsection{With Non-binding (and Binding) Elicitation}

We first present an $\EnSES$ mechanism that yields a payoff strictly greater than $1 - \delta$ for the principal in \Cref{exp:EnSESA-gt-SESA}, where $\delta = 10/M$.  
The mechanism is illustrated in \Cref{fig:non-binding}. The non-binding elicitation at the beginning of the mechanism offers two options $1$ and $2$.  
The signaling strategies used at the signaling nodes $\xS_1$, $\xS_3$, $\xS_\alpha$, and $\xS_\beta$ are defined as follows.
\begin{itemize}
\item
We use $\pi_1$ at $\xS_1$, where:
\begin{align*}
\pi_1(g_{\alpha,\gamma} \given \alpha) = 1, \qquad
\pi_1(g_{\beta,\gamma} \given \beta) = 1,
\qquad\text{and}\quad
\begin{cases}
\pi_1(g_{\alpha,\gamma} \given \gamma) = 1 - \delta \\
\pi_1(g_{\beta,\gamma} \given \gamma) = 
\delta
\end{cases}.
\end{align*}
Namely, $\pi_1$ is designed in a way that induces the following posterior beliefs about the state when the agent receives $g_{\alpha,\gamma}$ and $g_{\beta,\gamma}$:
\begin{talign}
\label{eq:EnSESA-posteriors}
\Pro( \cdot \given g_{\alpha,\gamma}, \xS_1) = \left(\ \frac{1}{2 - \delta},\   0,\   \frac{1 - \delta}{2 - \delta}\  \right)
\quad \text{and} \quad
\Pro( \cdot \given g_{\beta,\gamma}, \xS_1) = \left(\  0,\   \frac{1}{1 + \delta},\quad  \frac{\delta}{1 + \delta}\  \right),
\end{talign}
where $\Pro$ denotes the probability measure induced by the mechanism and the agent's best response.
The states in the subscript of each signal correspond to states in the support of the posteriors.
It will also be useful to note that the marginal probability of each signal is:
\begin{align*}
\Pro(g_{\alpha, \gamma} \given \xS_1) 
= (2-\delta)/3,
\quad\text{and}\quad
\Pro(g_{\beta, \gamma} \given \xS_1)  = (1+\delta)/3.
\end{align*}

\item
The strategy $\pi_2$ used at $\xS_2$ is defined symmetrically: we swap $\alpha$ with $\beta$ in the definition of $\pi_1$ to obtain $\pi_2$.

\item 
The strategies $\pi_\alpha$ and $\pi_\beta$ used at $\xS_\alpha$ and $\xS_\beta$ are truth-revealing strategies: $\pi_\alpha(g_\theta \given \theta) = \pi_\beta(g_\theta \given \theta) = 1$ for all $\theta \in \{\alpha, \beta, \gamma\}$.
\end{itemize}
At each leaf node, the mechanism terminates and the agent takes an action that complies with the credibility constraint.
If no type elicitation has taken place along the path from the root to the leaf node, the agent is free to take an optimal action of any type---without loss of generality, they will just take an optimal action according to their actual type.

\smallskip

The mechanism ensures the principal a payoff of $1-\delta$, which is close to her payoff upper bound $1$ in this example.
Indeed, to ensure this high payoff, the agent needs to be persuaded to play $a'$ or $b'$ almost all the time. 
Examining each state separately, we will find that the incentives of types $t_1$ and $t_2$ align with the this goal at states $\alpha$ and $\beta$, respectively, but not at the other states.
Hence, we need to incentivize a type-$t_1$ agent to act as $t_2$ with a sufficiently high probability at state $\beta$, and incentivize a type-$t_2$ agent to act as $t_1$ at state $\alpha$.
This is what the mechanism does: it keeps $\alpha$ and $\beta$ mixed with $\gamma$ (at $\xS_1$ and $\xS_2$, respectively); subsequently, it ``disentangles'' $\alpha$ and $\beta$ from $\gamma$ (at $\xS_\alpha$ and $\xS_\beta$, respectively) only if the agent promises to act as $t_2$ and $t_1$, respectively.

The non-binding elicitation plays a role here because the ways we mix and disentangle the states need to be designed differently for the two agent types. Indeed, the following lemma shows that the design incentivizes each type $t_i$ to choose the corresponding option~$i$ at $\xEN$.   

\begin{restatable}{lemma}{EnSESAoptioni}
\label{lmm:EnSESA-gt-SESA-option-i}
In \Cref{exp:EnSESA-gt-SESA}, given the mechanism presented in \Cref{fig:non-binding}, for each $i \in \{1,2\}$ the optimal strategy of a type-$t_i$ agent is to select option~$i$ at $\xEN$.
\end{restatable}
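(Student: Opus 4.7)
My plan is to compute, for each pure choice at $\xEN$ and each agent type, the expected payoff by backward induction on the tree in \Cref{fig:non-binding}, then compare. The proof reduces nicely: swapping simultaneously $\alpha\leftrightarrow\beta$, $t_1\leftrightarrow t_2$, $a\leftrightarrow b$, $a'\leftrightarrow b'$ leaves both payoff tables of \Cref{exp:EnSESA-gt-SESA} and the mechanism in \Cref{fig:non-binding} invariant (mapping the option-1 subtree to the option-2 subtree and vice versa), so it suffices to analyze a type-$t_1$ agent; the claim for $t_2$ will then follow automatically.

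Under option 1, the pre-signal $g_{\alpha,\gamma}$ (marginal probability $(2-\delta)/3$) reaches a leaf at which no binding elicitation has occurred, so $t_1$ plays freely: at the posterior in \eqref{eq:EnSESA-posteriors}, action $a'$ dominates $d$ by $\delta/(2-\delta)$, while all remaining actions are killed by a $-M$ entry, giving payoff $\delta/(2-\delta)$. The pre-signal $g_{\beta,\gamma}$ reaches $\xE_1$: I expect truthful reporting to be sub-optimal, because with $\delta M = 10$ every type-$t_1$ action other than $d$ incurs a dominating negative contribution from the $-M$ entries at $\gamma$ (concretely, $b$ yields $(3-M\delta)/(1+\delta)<0$), so the truthful branch caps at $0$; whereas misreporting $t_2$ routes the agent to $\xS_\beta$, where credibility forces $t_2$'s best response (namely $b'$ at $\beta$ and $d$ at $\gamma$), returning $u_{t_1}$-values $2$ and $0$ and expected payoff $2/(1+\delta)$. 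Summing the two branches yields $(2+\delta)/3$.

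Under option 2, the structure is swapped: at the $g_{\beta,\gamma}$ leaf, $t_1$'s payoff-$3$ action $b$ is killed by $-M$ at $\gamma$, so $t_1$ plays $d$ for payoff $0$; at $\xE_2$ after $g_{\alpha,\gamma}$, truthful reporting $t_1$ routes to the truth-revealing $\xS_\alpha$ and yields $1/(1+\delta)$ (playing $a'$ at $\alpha$ and $d$ at $\gamma$), whereas misreporting $t_2$ forces acting as $t_2$ at a posterior where $-M$ dominates $t_2$'s non-$d$ actions, leaving only $d$ and payoff $0$. The option-2 total is therefore $(2-\delta)/3\cdot 0 + (1+\delta)/3\cdot 1/(1+\delta) = 1/3$. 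Since $(2+\delta)/3>1/3$, option $1$ is strictly better for type $t_1$, and symmetry finishes the lemma.

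The only delicate point is the calibration of $\delta$: small enough that the intended ``persuasive'' gaps (e.g.\ $a'$ beating $d$ at the $g_{\alpha,\gamma}$ leaf, or $b'$ beating $d$ at the analogous leaf in option~2 for $t_2$) remain strictly positive, yet large enough that $M\delta=10$ makes the $-M$ penalty at $\gamma$ outweigh any state-$\alpha$ or state-$\beta$ gain, thereby eliminating every ``off-tree'' deviation at the elicitation nodes. Once this is pinned down, the rest is routine expected-value arithmetic using the marginals and posteriors already recorded in \eqref{eq:EnSESA-posteriors}.
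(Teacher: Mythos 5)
Your proof is correct and takes essentially the same route as the paper: backward induction on the tree in \Cref{fig:non-binding}, comparing the two option values for type $t_1$ and then invoking the $\alpha\leftrightarrow\beta$, $t_1\leftrightarrow t_2$, $a\leftrightarrow b$, $a'\leftrightarrow b'$ symmetry for type $t_2$; your exact values $(2+\delta)/3$ vs.\ $1/3$ in fact refine the paper's stated bounds of at least $\tfrac{2}{3}$ and at most $\tfrac{2-\delta}{3(1+\delta)}$. One minor expositional slip: at $\xE_1$ under the $g_{\beta,\gamma}$ posterior, action $a'$ is eliminated by the $-M$ entry at $\beta$, not $\gamma$ (the $-M$ at $\gamma$ is what kills $b$ and $b'$), but your arithmetic and conclusion are unaffected.
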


Following the lemma, by symmetry, the principal obtains the same payoff on $t_1$ and $t_2$. 
Consider the situation against a type-$t_1$ agent. It can be verified that
(see the proof of \Cref{lmm:EnSESA-gt-SESA-option-i} for more details): 
\begin{itemize}
\item 
When $g_{\alpha,\gamma}$ is sent, the agent will react optimally by playing action $a'$, and the principal obtains payoff $1$.

\item 
When $g_{\beta,\gamma}$ is sent, the agent acts as $t_2$, playing $b'$ when it is $\beta$ and $d$ when it is $\gamma$.
The principal obtains payoff $\Pro(\beta \given g_{\beta,\gamma}, \xS_1) \cdot 1 = 1/(1+\delta)$.  

\end{itemize}
Overall, the principal's payoff is 
\begin{align*}
\Pro(g_{\alpha,\gamma} \given \xS_1) \times 1 
+ 
\Pro(g_{\beta,\gamma} \given \xS_2) \times \frac{1}{1 + \delta} 
\ =\
\frac{2 - \delta}{3} \times 1 
+
\frac{1 + \delta}{3} \times \frac{1}{1 + \delta} 
\ >\ 
1 - \delta.
\end{align*}

\subsubsection{With Binding Elicitation Only}

Consider an arbitrary $\SES$ mechanism $\Pi$,\footnote{We will hereafter mostly use $\Pi$ to denote a multi-stage mechanism and use $\pi$ to denote the signaling strategy in $\Pi$ for the pre-signaling stage.} which only uses binding elicitation.
We will argue that $\Pi$ yields payoff at most $1-\delta$ for the principal.
In what follows, let $\pi: \Theta \to \Delta(G)$ be the signaling strategy used in the pre-signaling step of $\Pi$, such that $\pi(g) > 0$ for all $g \in G$.
Let $\Pro$ denote the probability measure induced by $\Pi$ (and the agent's optimal response) and let $\mathbb{E}$ denote the expectation over $\Pro$.

The proof of the $1-\delta$ upper bound relies on the following lemmas. \Cref{lmm:u-g} first argues that if $\Pi$ were to achieve a high payoff for the principal, there must be a high-payoff signal $g$ that implies $\gamma$ with a sufficiently high probability. (Recall that $v$ denotes the principal's payoff function.) 
\Cref{lmm:g-alpha} further shows that the principal must receive a high-payoff signal that  implies both $\alpha$ and $\beta$ with a sufficiently high probability.
With these observations, we prove the $1- \delta$ upper bound on the performance of all $\SES$ mechanisms in \Cref{exp:EnSESA-gt-SESA} and establish \Cref{thm:exp:EnSESA-gt-SESA}.

\begin{restatable}{lemma}{lmmug}
\label{lmm:u-g} 
Suppose that $\mathbb{E}(v) \ge 1 - \delta$ in \Cref{exp:EnSESA-gt-SESA}.
Then, for any $\epsilon \in (0,\, 1 - \delta)$, there exists a signal $g \in G$ such that $\mathbb{E}(v \given g) > 1 - \delta - \epsilon$ and $\Pro(\gamma \given g) \ge 1/4 - \delta/\epsilon$.
\end{restatable}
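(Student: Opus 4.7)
The plan is to combine two averaging arguments: the first extracts a substantial mass of high-payoff pre-signals from the hypothesis $\mathbb{E}(v) \ge 1 - \delta$, and the second uses the prior $\mu(\gamma) = 1/3$ to locate one such signal that also carries significant posterior weight on $\gamma$.

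First, let $H_\epsilon = \{g \in G : \mathbb{E}(v \given g) > 1 - \delta - \epsilon\}$. Since the principal's payoff satisfies $v \in \{0,1\}$, I expand $\mathbb{E}(v)$ as a convex combination of its conditional expectations on $H_\epsilon$ and on $G \setminus H_\epsilon$; using $\mathbb{E}(v \given g) \le 1 - \delta - \epsilon$ off $H_\epsilon$ and $\mathbb{E}(v \given g) \le 1$ on $H_\epsilon$, together with the hypothesis, I obtain $1 - \delta \le \Pro(H_\epsilon) + (1 - \Pro(H_\epsilon))(1 - \delta - \epsilon)$, which rearranges to $\Pro(H_\epsilon) \ge \epsilon/(\delta + \epsilon)$.

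Next, pre-signaling preserves the marginal over states, so $\mu(\gamma) = 1/3 = \sum_g \Pro(g)\,\Pro(\gamma \given g)$. Splitting the sum over $H_\epsilon$ and its complement and bounding $\Pro(\gamma \given g) \le 1$ on the complement gives $\sum_{g \in H_\epsilon} \Pro(g)\,\Pro(\gamma \given g) \ge 1/3 - (1 - \Pro(H_\epsilon))$. By averaging within $H_\epsilon$, some $g^\star \in H_\epsilon$ must satisfy $\Pro(\gamma \given g^\star) \ge 1 - 2/(3\,\Pro(H_\epsilon))$. Substituting $\Pro(H_\epsilon) \ge \epsilon/(\delta+\epsilon)$ yields $\Pro(\gamma \given g^\star) \ge 1 - 2(\delta + \epsilon)/(3\epsilon)$, and a short algebraic check confirms this is at least $1/4 - \delta/\epsilon$ (the difference simplifies to $1/12 + \delta/(3\epsilon) \ge 0$). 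Hence $g^\star$ simultaneously satisfies both required inequalities.

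The only mildly subtle point is ensuring that the signal with high payoff and the signal with high posterior mass on $\gamma$ \emph{coincide}; this is handled by performing the second averaging within $H_\epsilon$ rather than over all of $G$. No structural feature of Example~\ref{exp:EnSESA-gt-SESA} beyond $v \le 1$ and $\mu(\gamma) = 1/3$ enters the argument, so the credibility-specific payoff structure presumably enters only in the follow-up lemma that strengthens the conclusion from $\gamma$ to both $\alpha$ and $\beta$.
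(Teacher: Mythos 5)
Your proof is correct. The approach is essentially the contrapositive reorganization of the paper's argument: you proceed directly by defining the high-payoff set $H_\epsilon = \{g : \mathbb{E}(v\given g) > 1-\delta-\epsilon\}$, lower-bounding $\Pro(H_\epsilon) \ge \epsilon/(\delta+\epsilon)$ via the payoff decomposition, and then locating a good signal inside $H_\epsilon$ via the Bayes-plausibility (martingale) constraint on $\gamma$; whereas the paper argues by contradiction, defining $G^+ = \{g : \Pro(\gamma\given g) \ge 1/4 - \delta/\epsilon\}$, lower-bounding $\Pro(G^+)$ via the same martingale constraint, and then deriving $\mathbb{E}(v) < 1-\delta$ from the (assumed) low payoffs on $G^+$. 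The two proofs use the exact same two averaging ingredients, just in opposite order, and yours has the advantage of being direct. One small observation: you use the prior $\mu(\gamma) = 1/3$ as stated in Example~\ref{exp:EnSESA-gt-SESA}, and your averaging yields the sharper posterior bound $\Pro(\gamma\given g^\star) \ge 1/3 - 2\delta/(3\epsilon)$, which you correctly check dominates the claimed $1/4 - \delta/\epsilon$ (the slack being $1/12 + \delta/(3\epsilon) \ge 0$). The paper's own proof instead plugs in $\Pro(\gamma) = 1/4$, which is inconsistent with the example's stated uniform prior over three states; this is a minor internal typo in the paper (likely a relic of an earlier version), and the looser constant in the lemma statement is satisfied either way, but your version is the one that matches the example as written.
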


\begin{proof}[Proof sketch]
The converse of the statement asserts the existence of $\epsilon$ such that: for every $g \in G$, if $\Pro(\gamma \given g) \ge 1/4 - \delta/\epsilon$ then $\mathbb{E}(v \given g) \le 1 - \delta - \epsilon$. 
Indeed, since the prior probability is $\mu(\gamma) = 1/4$, the marginal probability of signals $g$ such that $\Pro(\gamma \given g) \ge 1/4 - \delta/\epsilon$ cannot be insignificant.
Hence, if the converse statement is true, then for a substantial portion of signals would yield expected payoff lower than $1 - \delta - \epsilon$ for the principal, which would further lead to an overall payoff strictly lower than the assumed value $1 - \delta$.   
\end{proof}

\begin{restatable}{lemma}{lmmgalpha} 
\label{lmm:g-alpha}
Suppose that $\mathbb{E}(v) \ge 1 - \delta$ in \Cref{exp:EnSESA-gt-SESA}.
Then there exists a signal $g \in G$ such that $\mathbb{E}(v \given g) > 1 - 11\delta$,
$\Pro(\alpha \given g) > 1/100$, and $\Pro(\beta \given g) > 1/100$.
\end{restatable}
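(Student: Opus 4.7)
The plan is to extract the signal $g$ for \Cref{lmm:g-alpha} directly from \Cref{lmm:u-g}. Specifically, I apply \Cref{lmm:u-g} with $\epsilon = 10\delta$ to obtain a signal $g^*$ satisfying $\mathbb{E}(v \given g^*) > 1 - 11\delta$ and $\Pro(\gamma \given g^*) \ge 1/4 - 1/10 = 3/20$. The target inequality $\mathbb{E}(v\given g^*) > 1 - 11\delta$ is already the first conclusion of \Cref{lmm:g-alpha}, so the goal reduces to verifying $\Pro(\alpha \given g^*) > 1/100$ and $\Pro(\beta \given g^*) > 1/100$ for this same signal.

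I would argue each inequality by contradiction, and by symmetry it suffices to handle the $\alpha$-case. Suppose $p_\alpha := \Pro(\alpha \given g^*) \le 1/100$, and write $p_\beta, p_\gamma$ for the corresponding $\beta$- and $\gamma$-posteriors. The bound $p_\gamma \ge 3/20$ forces $p_\beta \le 1 - p_\alpha - p_\gamma \le 17/20$. I would then derive a contradictory upper bound on $\mathbb{E}(v \given g^*)$ by analyzing the continuation of $\Pi$ after $g^*$, which is itself a two-stage elicitation-then-signaling mechanism on the posterior $(p_\alpha, p_\beta, p_\gamma)$.

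The central observation is that the principal only earns payoff when the agent plays $a'$ or $b'$. Any play of $a'$ requires the agent to imitate $t_1$ and the post-signal posterior to satisfy $q_\alpha \ge r^* \approx 1/2$ (the threshold at which $t_1$ weakly prefers $a'$ over $d$); by the martingale property of Bayesian posteriors conditional on the elicited report, Markov's inequality bounds the total mass of such post-signals by $2 p_\alpha$, and an analogous bound holds for $b'$. Moreover, the credibility constraint plays a subtler role: for the type-$t_1$ agent to be willing to report $t_2$ and thereby play $b'$ at the recommended posteriors, individual rationality requires $U_2(t_1) \ge U_1(t_1) \ge 0$, where $U_2(t_1) = \sum_i \pi_i (2 q_i - M(1 - q_i))$ aggregates $t_1$'s true payoffs from the $b'$-plays at posteriors $(0, q_i, 1-q_i)$. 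Combining this IR condition with the martingale identity $\sum_i \pi_i q_i \le p_\beta$ yields $\sum_i \pi_i \le p_\beta (M+2)/M$, so the total mass of type-$t_1$-imitated $b'$-plays is at most $p_\beta(1 + 2/M)$.

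Enumerating the possible configurations of how the two agent types split between the elicitation branches and bounding the principal's payoff in each, I would conclude $\mathbb{E}(v \given g^*) \le p_\beta (1 + 2/M) \le 17/20 + O(1/M)$; the binding configuration is when both types report $t_2$ and the principal's payoff is the mass of $b'$-plays constrained by the type-$t_1$ IR. In the regime of interest where $M$ is large (so $\delta = 10/M$ is small, in particular $1 - 11\delta > 17/20 + O(1/M)$), this contradicts $\mathbb{E}(v \given g^*) > 1 - 11\delta$, yielding $p_\alpha > 1/100$ as desired. The main technical obstacle is establishing this binding bound in the ``both report $t_2$'' configuration, where one must correctly identify the type-$t_1$ IR constraint as the bottleneck and combine it cleanly with the martingale identity; verifying that no alternative reporting configuration (including mixing over branches) exceeds this bound is routine but somewhat case-heavy.
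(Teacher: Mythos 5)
Your proposal takes a genuinely different route from the paper's proof, and it is worth contrasting the two.

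The paper's argument is \emph{direct} and avoids any case analysis over reporting configurations. After extracting $g$ from \Cref{lmm:u-g} (with $\epsilon = 10\delta$), it focuses on the single conditional event $\{t_1, a'\}$. First it shows $\Pro(\gamma, a' \vee b' \given t_1, g)$ must be substantial (else too much mass would land on zero-payoff actions, contradicting $\mathbb{E}(v \given g) > 1-11\delta$); then the unconditional IR of $t_1$ ($\mathbb{E}[u_{t_1} \given t_1, g] \ge 0$, since $d$ is always available) forces $\Pro(\gamma, b' \given t_1, g) = O(1/M)$, so essentially all of that mass is on $a'$. Finally, the IR of $t_1$ \emph{conditioned on $a'$ being played} (which is valid because when $a'$ is played the agent must have reported $t_1$ truthfully, so deviating to $d$ is credibility-compatible) yields $\Pro(\alpha \given a', t_1, g) \ge \Pro(\gamma \given a', t_1, g)$, and the chain $\Pro(\alpha \given g) \ge \rho(t_1) \Pro(\alpha, a' \given t_1, g) \ge \rho(t_1) \Pro(\gamma, a' \given t_1, g)$ finishes. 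Notice the paper never needs to reason about $b'$-plays of type $t_2$, about how the two types split across the elicitation branches, or about a Markov bound on posterior tails.

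Your contradiction-based route---assume $p_\alpha \le 1/100$, then bound $\mathbb{E}(v \given g)$ above via Markov on $a'$-plays and the type-$t_1$ IR on $b'$-plays---is plausible in outline, and the particular calculations you sketch (the $q_\alpha \ge 1/2$ threshold follows from $q_\alpha \ge q_\gamma + M q_\beta$ even when $q_\beta > 0$, and the $\sum_i \pi_i \le p_\beta(M+2)/M$ step from IR plus the martingale identity checks out in the pure ``both report $t_2$'' case). However, this is where the gap lies: you assert without verification that ``both report $t_2$'' is the binding configuration and that other configurations (including $t_1$ mixing across branches, or $t_2$ playing $b'$ after truthful report, which is \emph{not} constrained by any IR inequality and can have mass as large as $1 - p_\alpha$) do not exceed $p_\beta(1 + 2/M)$. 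You acknowledge this as ``case-heavy'' and unfinished. That case-enumeration is precisely the work the argument must do to close, and it is not routine: the bound $p_\beta(1+2/M)$ does not directly apply to all configurations, so one must establish separately that each one gives a sufficiently small $\mathbb{E}(v \given g)$ (in the ``$t_1$ truthful, $t_2$ truthful'' case the relevant bound is something like $\tfrac12 + O(p_\alpha)$, which is numerically fine but requires a different computation). The paper's direct conditioning-on-$\{t_1, a'\}$ trick sidesteps all of this, which is what makes it the cleaner proof. If you wish to salvage your route, you need to spell out the enumeration and verify the bound holds in each case, including mixed reporting by $t_1$.
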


\begin{proof}[Proof sketch]
By \Cref{lmm:u-g} some high-payoff $g$ must imply $\gamma$ with a significant probability. 
Now that there is no non-binding elicitation, both agent types will receive $g$ with some probability. For type $t_1$ (and similarly for type $t_2$ and state $\beta$), the probability of $\alpha$ in the posterior conditioned on $g$ must be sufficiently high: otherwise either $b$ or $d$ will always strictly dominate other actions (e.g., consider the extreme case where $\Pro(\alpha\given g) = 0$), leading to a low payoff $0$ for the principal and a contradiction to the assumption that $\mathbb{E}(v \given g) > 1 - 11\delta$.
\end{proof}

\begin{theorem}
\label{thm:exp:EnSESA-gt-SESA}
There exists an instance where an optimal $\EnSES$ mechanism yields a strictly higher payoff for the principal than any $\SES$ mechanism.
\end{theorem}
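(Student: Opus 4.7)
The plan is to prove the theorem by establishing the upper bound $\mathbb{E}(v) \le 1 - \delta$ for every $\SES$ mechanism in \Cref{exp:EnSESA-gt-SESA}; together with the explicit $\EnSES$ mechanism constructed above, which achieves payoff $\tfrac{2-\delta}{3} + \tfrac{1}{3} = 1 - \tfrac{\delta}{3} > 1 - \delta$, this yields the desired strict separation. I proceed by contradiction: assume some $\SES$ mechanism $\Pi$ achieves $\mathbb{E}(v) \ge 1 - \delta$. Then \Cref{lmm:g-alpha} supplies a pre-signal $g$ with $\mathbb{E}(v\given g) > 1 - 11\delta$ and $\Pro(\alpha\given g),\,\Pro(\beta\given g) > 1/100$. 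The remaining task is to refute $\mathbb{E}(v\given g) > 1 - 11\delta$ by analyzing the post-$g$ $\ES$ sub-mechanism.

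Two structural facts underlie the argument. First, the $-M$ entries render $a'$ strictly dominated by $a$ for type $t_2$ under every posterior, so only agents reporting $t_1$ ever play $a'$; symmetrically, only agents reporting $t_2$ ever play $b'$. Second, for a reported-$t_1$ agent to be incentivized to play $a'$ at a second-stage posterior $(p_\alpha, p_\beta, p_\gamma)$ rather than the safe outside option $d$, the $-M$ entry for $a'$ at $\beta$ forces $p_\beta = O(1/M) = O(\delta)$; symmetrically $p_\alpha = O(\delta)$ on every posterior that induces $b'$.

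The key Bayes-consistency step then controls the conditional principal payoff at each state. Applying Bayes on the second-stage signal under report $t_1$, the bound $p_\beta \le O(\delta)$ on signals inducing $a'$ forces $\Pro(a'\given \beta, g) = O(\delta)/\Pro(\beta\given g) = O(\delta)$, using the lower bound $\Pro(\beta\given g) > 1/100$; a symmetric argument gives $\Pro(b'\given \alpha, g) = O(\delta)$. Since the principal's payoff comes only from $a'$ or $b'$, and $b'$ is almost never played at $\alpha$ (and $a'$ almost never at $\beta$), the conditional payoff at $\alpha$ is at most $\Pro(t'{=}t_1\given g) + O(\delta)$ and at $\beta$ at most $\Pro(t'{=}t_2\given g) + O(\delta)$. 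Combining with the trivial bound $1$ at $\gamma$ and using $\Pro(t'{=}t_1\given g) + \Pro(t'{=}t_2\given g) = 1$, the overall conditional payoff is at most $\max\{\Pro(\alpha\given g), \Pro(\beta\given g)\} + \Pro(\gamma\given g) + O(\delta) = 1 - \min\{\Pro(\alpha\given g), \Pro(\beta\given g)\} + O(\delta)$. Since $\min\{\Pro(\alpha\given g), \Pro(\beta\given g)\} > 1/100$ and $\delta = 10/M$ can be taken arbitrarily small, this is strictly below $1 - 11\delta$, contradicting \Cref{lmm:g-alpha}.

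The main obstacle is finding the right perspective for the Bayes calculation: bounding the conditional principal payoff per state rather than per signal. The state-independence of the reporting decision is precisely what forces the payoff at $\alpha$ to depend on $\Pro(t'{=}t_1\given g)$ and the payoff at $\beta$ on $\Pro(t'{=}t_2\given g)$, and these sum to $1$. This is the essential asymmetry between $\SES$ and $\EnSES$: the extra non-binding elicitation stage in $\EnSES$ decouples the two types' reporting branches, allowing each subsequent pre-signal to target a single type, whereas $\SES$ must pool both types into the same pre-signal and therefore trades off payoff between states $\alpha$ and $\beta$.
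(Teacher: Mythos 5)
Your proposal is correct but takes a genuinely different route from the paper once Lemma~\ref{lmm:g-alpha} is in hand. The paper fixes attention on an agent of \emph{actual type} $t_1$ and branches on which type that agent reports after receiving $g$: if $t_1$ is reported, the IR bound $\mathbb{E}[u_{t_1}\given t_1,g]\ge 0$ together with the $-M$ entries forces $\Pro(\beta,\ b\vee d\given t_1, g)$ to exceed roughly $1/100 - 3/(M{+}3)$, so the principal earns $0$ with non-negligible probability; if $t_2$ is reported, the symmetric argument with state $\alpha$ applies. Either branch contradicts $\mathbb{E}(v\given g)>1-11\delta$. Your argument never singles out a type: you condition on the \emph{report} $t'$ (observing that, because reporting happens before the second-stage signal, $\Pro(t'{=}t_i\given g,\theta)=\Pro(t'{=}t_i\given g)$), bound $\Pro(a'\given \beta,g,t'{=}t_1)$ and $\Pro(b'\given \alpha,g,t'{=}t_2)$ by $O(\delta)$ via the same dominance/IR reasoning plus Bayes, and then exploit $\Pro(t'{=}t_1\given g)+\Pro(t'{=}t_2\given g)=1$ to force a trade-off between states $\alpha$ and $\beta$, yielding $\mathbb{E}(v\given g)\le 1-\min\{\Pro(\alpha\given g),\Pro(\beta\given g)\}+O(\delta)$. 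This is arguably more conceptual: it makes explicit that the binding constraint is the state-independence of a single reporting branch under $\SES$---precisely the obstruction that the $\xEN$ stage removes---whereas the paper's case split hides this asymmetry inside an IR computation for one type. The only caveat is bookkeeping: your hidden $O(\delta)$ constants (roughly $10\delta$--$20\delta$) should be checked against $M>10000$ and the actual posterior lower bound $\Pro(\alpha\given g)\gtrsim 1/30$ that Lemma~\ref{lmm:g-alpha}'s proof delivers (the stated $1/100$ is looser); since $M$ can always be taken larger, this is cosmetic and does not affect correctness.
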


\begin{proof}
We show that in \Cref{exp:EnSESA-gt-SESA} no $\SES$ mechanism yields payoff higher than $1-\delta$ for the principal.
Consider an arbitrary $\SES$ mechanism and suppose for the sake of contradiction that it yields payoff $\mathbb{E}(v) > 1-\delta$ for the principal.

Note that the mechanism elicits the agent's type in the second step.
Consider the case where the agent is of type $t_1$ and a signal $g$ that satisfies the conditions in \Cref{lmm:g-alpha} is sent. Subsequently, when the mechanism elicits the agent's type, the agent either reports $t_1$ or $t_2$.
Consider each of these two possible reports. We show that both cases lead to contradictions to complete the proof.

\paragraph{Case 1: $t_1$ is reported.}

In this case, the type-$t_1$ agent will act truthfully. 
Consider the agent's expected payoff $\mathbb{E} [ u_{t_1} \given t_1, g]$ and the following upper bound on the payoff:
\begin{align*}
\mathbb{E} [ u_{t_1} \given t_1, g] 
&\le
\Pro(\beta, a \vee a' \given t_1, g) \cdot (-M) + 
(1 - \Pro(\beta, a \vee a' \given t_1, g)) \cdot 3.
\end{align*}
It must be $\mathbb{E} [ u_{t_1} \given t_1, g] 
\ge 0$ because the agent can stick to action $d$ to secure payoff $0$.
This gives 
$\Pro(\beta, a \vee a' \given t_1, g) \le \frac{3}{M + 3}$.
Note that, for type $t_1$, action $b'$ is strictly dominated by $b$ or $d$ under any state distribution, which means 
$\Pro(\beta, b' \given t_1, g) = 0$. 
Therefore, 
\begin{talign*}
\Pro(\beta, b \vee d \given t_1, g) = \Pro(\beta \given t_1, g) - \Pro(\beta, a \vee a' \given t_1, g)
> \frac{1}{100} - \frac{3}{M + 3}.
\end{talign*}
Since the principal gets payoff $0$ when $b$ or $d$ is played,
we get that
\begin{talign*}
\mathbb{E}(v \given g) 
&
\le 1 -  \Pro(b \vee d \given g) 
\le 1 -  \Pro(t_1, \beta, b\vee d \given g) \\
&\qquad\qquad\qquad
= 1 - \Pro(t_1 \given g) \cdot \Pro(\beta, b \vee d \given t_1, g)
\le
1 - \frac{1}{3} \cdot \left( \frac{1}{100} - \frac{3}{M+3} \right) 
< 
1 - 11 \delta
\end{talign*}
(note that $\Pro(t_1 \given g) = \Pro(t_1) = \rho(t_1) = 1/3$ by independence).
This contradicts \Cref{lmm:g-alpha}.

\paragraph{Case 2: $t_2$ is reported.}

In this case, the agent will act according to type $t_2$.
We can replicate the above arguments to show that 
$\Pro(\alpha, a \vee d \given t_1, g)
> \frac{1}{100} - \frac{3}{M + 3}$,
where we replace $\beta$ with $\alpha$, $a$ and $a'$ with $b$ and $b'$, respectively, and $u_{t_1}$ with $u_{t_2}$.
Eventually, this also leads to the contradiction $\mathbb{E}(v \given g) < 1 - 11 \delta$.
\end{proof}

\section{Computing Optimal Multi-stage Mechanisms}
\label{sec:compute}

$\EnSES$ and $\SES$ mechanisms are strictly better than $\ES$ mechanisms in terms of the payoffs they generate. But what do they imply computationally? Is it easier or harder to compute optimal $\EnSES$ and $\SES$ mechanisms?
As it turns out, optimal $\SES$ mechanisms are still inapproximable by noting that in the reduction for \Cref{thm:nph-no-IC} the principal does not benefit from using pre-signaling (\Cref{thm:nph-SESA}). 
But optimal $\EnSES$ mechanisms, while generating even higher payoffs than $\SES$s, are efficiently computable. 
We next present an efficient algorithm for $\EnSES$ and demonstrate its general optimality among a broad class of multi-stage mechanisms.

\begin{theorem}
\label{thm:nph-SESA}
Unless P = NP, there exists no polynomial-time $\frac{1}{(|T|-1)^{1-\epsilon}}$-approximation algorithm for computing an $\SES$ mechanism for any constant $\epsilon > 0$, even when there are only two possible states of nature.
\end{theorem}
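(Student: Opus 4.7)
The plan is to recycle the reduction from \textsc{Maximum Independent Set} that establishes \Cref{thm:nph-no-IC}, and to argue that on the instances produced by that reduction, pre-signaling adds nothing over a plain $\ES$ mechanism. Any polynomial-time $\frac{1}{(|T|-1)^{1-\epsilon}}$-approximation for optimal $\SES$ mechanisms would then yield the same approximation for optimal $\ES$ mechanisms on those instances, and hence (by the gap-preservation of the underlying reduction together with H\aa{}stad-style inapproximability of \textsc{Maximum Independent Set}) would imply $P=NP$. Since the reduction for \Cref{thm:nph-no-IC} already uses only two states of nature, the two-state restriction in the statement is automatically inherited.

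The core step is therefore a structural lemma: on every reduction instance, for every $\SES$ mechanism $\Pi=(\pi,\{\Pi_{g'}\}_{g'\in G'})$ there is an $\ES$ mechanism $\Pi^\star$ with $\mathbb{E}[v(\Pi^\star)]\ge \mathbb{E}[v(\Pi)]$. I would view $\Pi$ as a pre-signal $\pi$ that splits the prior $\mu$ into a distribution of posteriors $\{\mu(\cdot\given g')\}_{g'}$, followed by a continuation $\ES$ mechanism $\Pi_{g'}$ run under posterior $\mu(\cdot\given g')$. The principal's expected payoff is then the $\pi$-weighted average of the payoffs $V(\mu(\cdot\given g'))$ of the continuations, where $V(\nu)$ denotes the best $\ES$ payoff under prior $\nu$. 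Because $\pi$-splittings are exactly Bayes-plausible distributions with mean $\mu$, the optimal $\SES$ value at $\mu$ equals the concave closure $\mathrm{cav}\,V$ evaluated at $\mu$, and it suffices to show $V(\mu)=\mathrm{cav}\,V(\mu)$ on reduction instances.

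To establish this equality, I would inspect the vertex/edge gadgets used in the reduction for \Cref{thm:nph-no-IC}. Those gadgets are engineered so that the principal's achievable payoff comes from persuading an agent type to imitate a single ``good'' type $t_0$ consistent with the independent-set encoding, and the set of induceable imitation patterns depends on which independent set is selected, not on the exact posterior over the two states. In particular, I expect that $V$ is affine (or constant on the relevant segment of priors arising in the reduction) because the incentive constraints that define the optimal $\ES$ mechanism scale linearly in the posterior probabilities, and the reduction preserves its $\textsc{Max-IS}$ encoding under any such convex combination of posteriors. Given affinity, every Bayes-plausible splitting produces the same expected value, and pre-signaling is redundant.

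The main obstacle will be precisely this collapsibility argument: one must verify that the construction used for \Cref{thm:nph-no-IC} leaves no room for a pre-signal to unlock additional imitation patterns---unlike the examples in \Cref{sec:multi_stage}, where pre-signaling strictly helps. I would do this by taking an arbitrary continuation $\Pi_{g'}$ under posterior $\mu(\cdot\given g')$, pointwise lifting its type-reporting and signaling distributions back to the original prior $\mu$ via the ratios $\mu(\theta)/\mu(\theta\given g')$, and checking that all credibility and incentive constraints remain satisfied because the gadget payoffs separate cleanly across the two states. Averaging these lifts against $\pi$ produces the desired $\ES$ mechanism $\Pi^\star$ with $\mathbb{E}[v(\Pi^\star)]\ge \mathbb{E}[v(\Pi)]$, completing the reduction and the theorem.
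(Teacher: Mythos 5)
Your high‑level plan matches the paper's: reuse the reduction from Theorem~\ref{thm:nph-no-IC} and show that pre‑signaling adds nothing on the reduced instances. The paper's proof is in fact a one‑liner that asserts exactly this. However, the justification you give for the collapsibility step contains a genuine error, and the alternative argument you sketch is flawed in a way that would matter.

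The claim that the function $V(p)$ (optimal $\ES$ payoff at $\theta$‑posterior $p$) is \emph{affine} on the reduction instances is false. Consider the extreme posteriors $p=(1,0)$ and $p=(0,1)$. There, the agent of any type $t_v$ gets fallback payoff $1+1/n$ by reporting truthfully and playing $a_0$ (resp.\ $b_0$), whereas the best he could ever obtain by imitating $t_*$ is $1$ (the largest entry of $u_{t_v}$ outside $\{a_0,b_0\}$). So no $t_v$ can be persuaded to imitate $t_*$ and $V(1,0)=V(0,1)=0$, while $V(1/2,1/2)=|V^*|/n>0$. An affine function cannot vanish at both endpoints and be positive in the middle, so $V$ is not affine, and the "every Bayes‑plausible splitting produces the same expected value" conclusion does not follow. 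What \emph{does} suffice is that $V$ attains its maximum at the prior $\mu=(1/2,1/2)$ (equivalently, $\mathrm{cav}\,V(\mu)=V(\mu)$), and the reason specific to this gadget is that the truthful fallback $|2p_1-1|+1/n$ is a convex function of the posterior, minimized exactly at the prior, while the payoff obtainable from imitating $t_*$ is capped at $1$ regardless of posterior; any pre‑signal spreads the posteriors away from $1/2$, uniformly raises the fallback, and can only shrink the set of types that will imitate $t_*$.

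Your backup "lift‑and‑average" construction does not work as stated either, and for a reason the paper itself illustrates: in an $\SES$ mechanism the agent's type report is conditioned on the pre‑signal $g'$, whereas in an $\ES$ mechanism it is not. Lifting each continuation back to the prior and averaging produces a joint distribution, but the averaged mechanism is not incentive compatible unless the agent's optimal report is the \emph{same} under every realized posterior---precisely the property that fails in \Cref{exp:SESA-gt-ESA}, where pre‑signaling strictly helps. So "gadget payoffs separate cleanly across the two states" is not a substitute for this check; you must argue that on the reduction instances the report decision is posterior‑independent (equivalently, that moving the posterior away from the prior never unlocks additional imitation), which is the convexity‑of‑fallback argument above.
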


\begin{proof}
We use the same reduction as that in the proof of \Cref{thm:nph-no-IC} and note that the principal does not benefit from using pre-signaling in the reduced instance. 
\end{proof}

\subsection{The General Optimality of $\EnSES$}

The efficient algorithm relies on a characterization of optimal $\EnSES$ mechanisms.
The characterization is quite general: it compares $\EnSES$ mechanisms with all possible multi-stage mechanisms involving any number of signaling and binding or non-binding elicitation steps. We hereafter refer to this broader class of mechanisms as {\em indefinite-stage mechanisms}. 
Similarly to the mechanisms we presented previously, an indefinite-stage mechanism can be history-dependent, specifying different strategies for different nodes of the game tree. For binding elicitation, we consider only direct elicitation, where the principal asks the agent directly ``what is your type?'' and the agent must choose one type in $T$ as their answer to this question. 
(In the next section, we will further investigate a more general type of elicitation that allows the principal to elicit partial type information.)

For ease of description, in what follows, we assume that the type set contains $n$ types: $T = \{t_1, \dots, t_n\}$. Since an $\EnSES$ mechanism corresponds to a tree as we illustrated in \Cref{fig:non-binding}, we will describe the mechanism based on the tree. The tree contains the following types of nodes:
\begin{itemize}
\item 
$\xS$ nodes. An $\xS$ node represents a signaling step, where the principal sends a signal according to a strategy $\pi: \Theta \to \Delta(G)$ for this node. The node then has $|G|$ child nodes each corresponding to a distinct signal in $G$. When a signal $g$ is drawn, the mechanism proceeds to the child corresponding to $g$.

\item 
$\xE$ nodes. An $\xE$ node represents to a binding elicitation step, where the agent reports their type. Each $\xE$ node has $n$ children, each corresponding to a type in $T$. When the agent reports a type $t$, the mechanism proceeds to the child corresponding to $t$.

\item
$\xEN$ nodes. An $\xEN$ node represents to a non-binding elicitation step, where the agent selects a child of this node, and the mechanism proceeds to the selected child.
The root node of an $\EnSES$ mechanism is always an $\xEN$ node.

\item 
Leaf nodes.
At a leaf node, the mechanism terminates, and the agent performs an optimal action of the {\em reported type}. If no type elicitation takes place on the path to the leaf, we assume that the agent is free to choose an optimal action of any type.\footnote{W.l.o.g., the agent will choose an optimal action of their actual type. In fact, it is w.l.o.g. to assume that there is always one $\xE$ node on every path from the root to a leaf (see the proof of \Cref{lmm:EnSEA-optimal}).} 
\end{itemize}
To sidestep intricate corner cases, we follow the convention and assume that if there is a tie in the agent's decision-making (whether to decide an action to play or a type to report), the mechanism decides how to break the tie.

The characterization result, presented in the theorem below can be viewed as a generalized revelation principle that involves multiple levels of IC or DIC in an $\EnSES$ mechanism (where ``D'' emphasizes {\em direct} recommendation). Intuitively, the option the agent selects at $\xEN$ reveals their true type, though this does not mean that the agent needs to subsequently act truthfully because of the non-binding nature of this elicitation.  
Subsequently, the signaling strategy at each $S_i$ directly recommends the agent a type to report, while those at $S_{ij}$ directly recommend actions.

\begin{restatable}{theorem}{lmmEnSEAoptimal}
\label{lmm:EnSEA-optimal}
For any indefinite-stage mechanism $\Pi$, there exists an $\EnSES$ mechanism $\Pi'$ that yields as much payoff for the principal as $\Pi$ does and has the following properties.
\begin{itemize}
\item[i.] {\bf IC at $\xEN$}:
The root $\xEN$ has $n$ children $\xS_1, \dots, \xS_n$, which are all $\xS$ nodes.
Each type-$t_i$ agent is incentivized to choose the branch leading to $\xS_i$.

\item[ii.] {\bf DIC at $\xS_i$}:
Each node $\xS_i$ has $n$ $\xE$ node children $\xE_{i1},\dots, \xE_{in}$. Each $\xE_{ij}$ is associated with type $t_j$, and at $\xE_{ij}$ a type-$t_i$ agent is incentivized to report $t_j$. For each $k= 1,\dots, n$, reporting type $t_k$ at $\xE_{ij}$ leads to an $\xS$ node child $\xS_{ijk}$.

\item[iii.] {\bf DIC at $\xS_{ijk}$}:
The signaling strategy used at each node $\xS_{ijk}$ directly recommends the agent an action to take, and the action is optimal for type $t_k$.
\end{itemize}
\end{restatable}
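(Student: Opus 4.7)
The plan is to prove this via a generalized revelation-principle argument, leveraging the optimality of the agent's responses in $\Pi$ to enforce the IC and DIC constraints in $\Pi'$. Without loss of generality, assume every root-to-leaf path in $\Pi$ contains at least one binding elicitation $\xE$ node (otherwise insert a trivial $\xE$ at the leaf that affects nothing). For each type $t_i$, fix an optimal agent strategy $\sigma_i$ in $\Pi$ and let $\tau_i(\theta, j, a)$ be the induced joint probability that the state is $\theta$, the last type reported along the realized path is $t_j$, and the action played at the leaf is $a$. Then $V(\Pi) = \sum_i \rho(t_i) \sum_{\theta,j,a} \tau_i(\theta,j,a)\, v(\theta,a)$.

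The on-path skeleton of $\Pi'$ is designed to replicate $\tau_i$. At $\xS_i$, jointly with $\theta$, signal $t_j$ is drawn with probability $\sum_a \tau_i(\theta, j, a)$; type $t_i$ reports $t_j$ at $\xE_{ij}$; then $\xS_{ijj}$ directly recommends action $a$ with conditional probability $\tau_i(\theta, j, a) / \sum_{a'} \tau_i(\theta, j, a')$. By construction this reproduces $\tau_i$ on-path, so $V(\Pi') = V(\Pi)$ once IC/DIC are enforced. Property (iii) at each on-path $\xS_{ijj}$ follows from a pooling-preserves-optimality lemma: at every leaf of $\Pi$ where the effective type is $t_j$ and the played action is $a$, credibility forces $a$ to be optimal for $t_j$ under that leaf's posterior; since expected utility is linear in the posterior, $a$ remains optimal for $t_j$ under any convex combination of such leaf posteriors, in particular under $\Pro(\theta \mid j, a) \propto \tau_i(\theta, j, a)$.

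To enforce Properties (i) and (ii), I embed each potential deviation by type $t_i$ in $\Pi'$ into a corresponding alternative strategy for type $t_i$ in $\Pi$, so by optimality of $\sigma_i$ the deviation payoff in $\Pi'$ is bounded by $U_i(\sigma_i)$, which equals type $t_i$'s on-path payoff in $\Pi'$. The key anchor is that if type $t_i$ picks $\xS_{i'}$ at $\xEN$ and then follows the type recommendation at each $\xE_{i'j}$, their total expected payoff equals $\sum_{\theta,j,a} \tau_{i'}(\theta,j,a)\, u_{t_i}(\theta,a)$, which is exactly type $t_i$'s payoff from playing $\sigma_{i'}$ in $\Pi$ and is hence $\leq U_i(\sigma_i)$. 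Off-path nodes $\xS_{ijk}$ with $k \neq j$ (and the analogous off-path portions inside each $\xS_{i'}$) are then specified to mirror the continuation of $\Pi$ following the corresponding deviation (``report $t_k$ and proceed credibly as $t_k$''). This preserves Property (iii) at those nodes via the same pooling lemma applied to the continuation, and ensures that any further deviation by type $t_i$ again embeds as a strategy in $\Pi$, so is again bounded by $U_i(\sigma_i)$.

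The main obstacle is carrying out this off-path embedding coherently: the fixed depth-4 $\EnSES$ template forces many distinct continuations of $\Pi$ (coming from different deviation paths and different histories) to be pooled into a single $\xS_{ijk}$ node, and the pooling must be chosen so that the aggregated recommendation at each off-path leaf is still optimal for the reported type $t_k$. Once the pooling is set up via the same convex-combination-preserves-optimality principle, the IC at $\xEN$ and DIC at each $\xE_{ij}$ reduce to standard inequalities that follow by chaining the embedding back to $\Pi$ and invoking the optimality of $\sigma_i$; this closes the remaining constraints and completes the construction.
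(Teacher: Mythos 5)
Your approach is essentially the paper's: a revelation-principle construction combined with a convex-combination (pooling) argument for DIC and an embedding of $\Pi'$-deviations into $\Pi$-strategies for IC. The difference is construction style. The paper builds $\Pi'$ from $n$ copies of $\Pi$ attached to a fresh $\xEN$ root, then applies a sequence of information-coarsening operations: in branch $i$, strip out $\xEN$ nodes by hard-wiring $t_i$'s best choice; merge consecutive $\xS$ nodes; and finally merge signals by the agent response they induce. Because each operation only removes agent options or coarsens the agent's information while preserving on-path behavior, each type's maximum achievable payoff can only weakly decrease, so IC at $\xEN$ and DIC at the $\xE$/$\xS$ nodes follow automatically from the agent's optimality in $\Pi$ — the off-path embedding is implicit. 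Your construction starts directly from the on-path distributions $\tau_i$ and must then supply the off-path $\xS_{ijk}$ strategies and verify the embedding explicitly. You correctly identify this as the crux, and your resolution (pool continuations by the $t_k$-optimal action they induce and invoke linearity of expected utility in the posterior) is exactly what the paper's Step 4 does. To make it fully rigorous you would still need to specify the mixing weights used to aggregate the $\Pi$-continuations at each off-path $\xS_{ijk}$ — they should be proportional to the reach probabilities of the underlying $\Pi$-histories under $\sigma_i$ — and note that your global-optimality argument yields the stated node-level DIC via one-shot deviation at positive-probability nodes (with ties/zero-probability nodes handled by the mechanism's tie-breaking). Once those details are filled in, the two proofs coincide; the paper's ``copy-then-coarsen'' presentation simply packages the off-path bookkeeping more cleanly.
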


\begin{proof}[Proof sketch]
Given any $\Pi$, we construct an $\EnSES$ mechanism $\Pi'$ that ensures the stated properties while preserving the principal's payoff by following the steps below.

\begin{itemize}
\item 
\emph{Step 1. Adding a new root.}
We make $n$ copies $\Pi^1,\dots,\Pi^n$ of $\Pi$ and attach each $\Pi^i$ as a subtree to the root of $\Pi'$. Let the root node be an $\xEN$ node.
Since all the copies are the same, each type $t_i$ is incentivized to choose $\Pi^i$ and we get IC at the root.

\item 
\emph{Step 2. Removing $\xEN$ nodes.}
Given IC at the root, in each branch $\Pi^i$, we only need to consider the response of $t_i$.
Hence, all the $\xEN$ nodes in $\Pi^i$ become redundant: we remove each $\xEN$ node $e$ and reconnect the child of $e$ selected by a type-$t_i$ agent to the parent of $e$.

\item 
\emph{Step 3. Merging $\xS$ nodes.}
Next, we merge consecutive $\xS$ nodes on every path as if signals generated in these consecutive signaling processes are generated at once: i.e., we treat the combination of the signals as a meta-signal sent by the joint process.
\end{itemize}

The above operations yield a tree with five levels containing only $\xEN$, $\xS$, $\xE$, $\xS$, and leaf nodes, respectively. We proceed as follows to achieve DIC at the $\xS$ nodes.

\begin{itemize}
\item 
\emph{Step 4. Ensuring DIC.}
The idea is to merge signals incentivizing the same response of the agent (i.e., reporting the same type or taking the same action) into a single signal. The signal can be viewed as a recommendation to the agent (in terms of what to report and which action to take).
Essentially, these merging operations will make many nodes in $\Pi$ indistinguishable to the agent, but if the agent follows the recommendations in the new mechanism, they would still reach each leaf with the same probability as before, as if they respond optimally in $\Pi$.
On the other hand, since these merging operations reduce information, the agent cannot devise any better response to improve their payoff. Hence, the agent is incentivized to follow the recommendations and this ensures DIC. 
\qedhere
\end{itemize}
\end{proof}

\subsection{An LP Formulation}

The characterization in \Cref{lmm:EnSEA-optimal} fixes the structure of mechanism. The only parameters to be optimized are the strategies at the $\xS$ nodes, including $\pi_i: \Theta \to \Delta(T)$ at each $\xS_i$, and $\pi_{ijk}:\Theta \to \Delta(A)$ at each $\xS_{ijk}$, where the signal spaces of the strategies are determined by the DIC properties. 
At a high level, this results in an {\em extensive-form game}. We formulate the problem of optimizing the principal's commitment in this game as an LP. The variables of the LP are as follows; each of them corresponds to the marginal probabilities of a path on the game tree, and the feasible space of the variables can be characterized by a set of flow constraints. 
\begin{itemize}
\item 
$\pi_i(t_j \given \theta)$ for $i,j \in \{ 1, \dots, n\}$, which represents the strategy $\pi_i$.

\item 
$\phi_{ijk}(a \given \theta)$ for $i,j \in \{1, \dots, n\}$ and $a \in A$, which represents the product $\pi_i(g_j \given \theta) \cdot \pi_{ijk}(g_{a} \given \theta)$ and indirectly represents $\pi_{ijk}$. We use these variables instead of $\pi_{ijk}(g_{a} \given \theta)$ to avoid quadratic terms in our formulation. 
\end{itemize}

The objective of the LP is to maximize the expected payoff of the principal yielded by the mechanism. Under the IC and DIC conditions, this can be expressed as follows (where we highlight the variables in blue):
\begin{align*}
\max \quad 
\sum_{\theta \in \Theta}\ 
\sum_{i=1}^n \sum_{j=1}^n\ 
\sum_{a \in A}
\rho(t_i) \cdot \mu(\theta) \cdot \varcolor{\phi_{ijj}(a \given \theta)} \cdot v(a, \theta),
\end{align*}
Namely, the expression assume that each type $t_i$ is incentivized to select option $i$ at $\xEN$, report $t_j$ at each $\xE_{ij}$, and play the action recommended by each $\pi_{ijj}$. These conditions are enforced next through the constraints of the LP:
\begin{itemize}
\item 
The flow constraint ensures the product encoded in $\phi_{ijk}$ correspond to a feasible $\pi_{ijk}$:
\begin{align*}
& \sum_{a \in A} \varcolor{\phi_{ijk}(a \given \theta)} = \varcolor{\pi_i(t_j \given \theta)} 
& \text{ for all } 
i,j,k \in \{1,\dots, n\} \text{ and } \theta \in \Theta \\
& \varcolor{\phi_{ijk}(a \given \theta)} \ge 0
& \text{ for all } 
i,j,k \in \{1,\dots, n\},\ a \in A, \text{ and } \theta \in \Theta
\end{align*}
We also have that $\pi_i(\cdot \given \theta) \in \Delta(T)$ for all $i$ and $\theta$, which ensure that $\pi_i$ is a valid strategy.

\item 
The following constraint ensures DIC at each $\xS_{ijk}$, i.e., it is optimal for the {\em imitated type} $t_k$ to play the recommended action $a$, rather than any other action $b$:
\begin{align*}
\sum_{\theta} \mu(\theta) \cdot \varcolor{\phi_{ijk}(a \given \theta)} \cdot u_{t_k}(a, \theta) 
&\ge 
\sum_{\theta} \mu(\theta) \cdot \varcolor{\phi_{ijk}(a \given \theta)} \cdot u_{t_k}(b, \theta)\\
&\qquad\qquad\qquad\qquad
\text{for all } a,b \in A \text{ and } i,j,k \in \{1,\dots, n\}.
\end{align*}
Namely, the left side is the marginal payoff of a type-$t_k$ agent for playing action $a$ when $a$ is recommended, and the right side is that for playing any other action $b$.

\item
Given the above constraint, we introduce the following additional variable $u_{t_i}(\xS_{\ell j k})$ to capture a type-$t_i$ agent's maximum attainable payoff at each $\xS_{\ell j k}$:
\begin{align*}
\varcolor{u_{t_i}(\xS_{\ell j k})} = \sum_{a \in A} \sum_{\theta} \mu(\theta) \cdot \varcolor{\phi_{\ell jk}(a \given \theta)} \cdot u_{t_i}(a, \theta).
\end{align*}
The following constraint then ensures DIC at each $\xE_{ij}$, i.e., it is optimal for an agent whose {\em actual type} is $t_i$ to report the recommended type $t_j$, rather than any other type $t_k$:
\begin{align*}
& \varcolor{u_{t_i}(\xS_{i j j})} \ge \varcolor{u_{t_i}(\xS_{i j k})}
& \text{for all } i,j,k \in \{1,\dots, n\}.
\end{align*}

\item Finally, the following constraint ensures IC at $\xEN$, so that each type $t_i$ is incentivized to select option $i$, rather than any other option $\ell$:
\begin{align}
\label{lp:opt-IC-EN}
& \sum_{j=1}^n \varcolor{u_{t_i}(\xS_{i j j})} \ge \sum_{j=1}^n\ \max_{k=1,\dots, n}\ \varcolor{u_{t_i}(\xS_{\ell j k})}
&\text{for all } i,\ell \in \{1,\dots, n\}.
\end{align}
On the right side, the payoff is yield when the agent selects option $\ell$ and subsequently reports an optimal type $t_k$ at each node $\xE_{\ell j}$; hence, there is a maximization operator over $k$.
As in Section \ref{sec:non-cred}, the constraint can be linearized by further introducing an auxiliary variable $\overline{u}_{t_i}(\xE_{\ell j})$, along with constraints $\overline{u}_{t_i}(\xE_{\ell j}) \ge u_{t_i}(\xS_{\ell j k})$ for all $k$, to capture an upper bound on $\max_{k=1,\dots, n}\ u_{t_i}(\xS_{\ell j k})$.
With this the right side of \Cref{lp:opt-IC-EN} can be replaced with $\overline{u}_{t_i}(\xS_{\ell j k})$.
\end{itemize}

Given the LP, the tractability of optimal indefinite-stage mechanisms then follows readily.

\begin{restatable}{theorem}{thmEnSESLP}
\label{thm:EnSES-LP}
An optimal indefinite-stage mechanism can be computed in polynomial time.
\end{restatable}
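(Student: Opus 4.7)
The plan is to reduce the theorem to a correctness and size analysis of the LP constructed in the text above. By \Cref{lmm:EnSEA-optimal}, it is without loss of optimality to restrict attention to $\EnSES$ mechanisms whose tree has the fixed five-level shape described there, with IC at the root $\xEN$, DIC at every $\xS_i$, and DIC at every $\xS_{ijk}$. This characterization pins down both the branching structure and the agent's equilibrium behavior, so the only remaining degrees of freedom are the signaling distributions at the $\xS$ nodes.

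First I would argue that those distributions are fully parametrized by the LP variables: the flow equality $\sum_a \phi_{ijk}(a \given \theta) = \pi_i(t_j \given \theta)$ together with nonnegativity and $\pi_i(\cdot \given \theta) \in \Delta(T)$ characterize exactly the set of feasible pairs $(\pi_i, \pi_{ijk})$, with the conditional $\pi_{ijk}(\cdot \given \theta)$ being irrelevant wherever $\pi_i(t_j \given \theta) = 0$ and recoverable as $\pi_{ijk}(a \given \theta) = \phi_{ijk}(a \given \theta)/\pi_i(t_j \given \theta)$ otherwise. This change of variables is what keeps the formulation linear despite the composition of two signaling stages, and it lets the objective be written directly as the expectation of $v(\theta,a)$ under the product weights $\rho(t_i)\mu(\theta)\phi_{ijj}(a \given \theta)$, which is exactly the principal's payoff in the IC/DIC equilibrium guaranteed by \Cref{lmm:EnSEA-optimal}.

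The main obstacle is verifying that the three blocks of inequalities faithfully enforce the three levels of IC, in both directions. The obedience condition at $\xS_{ijk}$ is standard for a signaling subgame in which the agent imitates type $t_k$; the linear equalities defining $u_{t_i}(\xS_{\ell j k})$ as the type-$t_i$ expected utility of the subgame at $\xS_{\ell j k}$ turn the reporting-IC at $\xE_{ij}$ into the plain inequality $u_{t_i}(\xS_{ijj}) \ge u_{t_i}(\xS_{ijk})$; and the root constraint \eqref{lp:opt-IC-EN} is linearized via auxiliary variables $\overline{u}_{t_i}(\xE_{\ell j}) \ge u_{t_i}(\xS_{\ell j k})$ for every $k$. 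I expect the most delicate step to be showing that this last relaxation is tight at the optimum. The argument I would use is that $\overline{u}_{t_i}(\xE_{\ell j})$ appears only on the side of a $\ge$ constraint one wishes to loosen, so pushing it down to the pointwise maximum is costless; in particular, any LP optimum can be adjusted to satisfy $\overline{u}_{t_i}(\xE_{\ell j}) = \max_k u_{t_i}(\xS_{\ell j k})$, at which point the linearized constraint becomes the exact IC condition at $\xEN$.

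Finally, I would verify the size bound. With $n = |T|$, $m = |A|$, and $s = |\Theta|$, the LP has $O(n^3 m s)$ variables (dominated by the $\phi_{ijk}$) and $O(n^3 m^2 + n^3 s + n^2)$ constraints, all with coefficients polynomial in the input payoffs and priors. Hence the LP is of polynomial size and solvable in polynomial time by any standard algorithm for linear programming, and by \Cref{lmm:EnSEA-optimal} any optimizer yields an optimal indefinite-stage mechanism, completing the argument.
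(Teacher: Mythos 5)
Your proof is correct and follows the same route the paper takes: invoking \Cref{lmm:EnSEA-optimal} to fix the five-level $\EnSES$ structure and then reading off polynomial-time solvability from the LP of Section 5.2. One small terminological note: the substitution of $\overline{u}_{t_i}(\xE_{\ell j})$ for the inner $\max_k$ is not a relaxation that happens to be tight at the optimum---it is an exact reformulation, since $\overline{u}$ is an unconstrained-from-above free variable appearing only on the ``bad'' side of the root IC inequality and lower-bounded by each $u_{t_i}(\xS_{\ell jk})$, so any feasible point can set it to the pointwise max without affecting feasibility or objective; your argument already captures this, just under the wrong label.
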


\section{Further Improvement: Partial Information Elicitation}
\label{sec:partial-info-elicit}

In the mechanisms we have looked at so far, a binding elicitation directly queries the agent's type. In some scenarios, elicitation might be implemented in a less direct format.
For example, the principal may ask the agent ``are you type $t_1$ or not?'', or present several subsets of types and ask the agent which subset contains the agent's type. After a subset is reported, the principal may follow up with more questions to identify smaller subsets containing the agent's type (that they want to imitate) until eventually pinning down the type in a singleton. Essentially, this splits one canonical elicitation step into multiple steps interleaved with singling steps. We refer to such elicitation as {\em partial information elicitation} (PIE) and mechanisms using partial elicitation as {\em PIE mechanisms}. 

PIE is still {\em binding}, so that once an agent identifies a subset containing his true type, he needs to subsequently report and behave in consistency with this subset.
This is crucial; indeed, such partial elicitation does not provide any added value when the agent is non-credible, as we demonstrated previously. 
In the next example, we demonstrate that PIE can strictly improve the principal's payoff (when the agent is credible). 

\subsection{Strict Payoff Improvement by Partial Elicitation}

\begin{example}
\label{exp:partial-gt-EnSESA}
Let $\Theta = \{\alpha, \beta\}$, $T = \{t_0, t_1, t_2\}$, and $A = \{a,a',b,b'\}$.
The principal gets payoff $1$ for the state-action pairs $(\alpha, a')$ and $(\beta, b')$, and $0$ for all other pairs.
The agent's payoffs are given in the table below, where all blank entries are $-1$.
The state is distributed uniformly: $\mu(\alpha) = \mu(\beta) = 1/2$. The type distribution is: $\rho(t_0) = 1$ and $\rho(t_1) = \rho(t_2) = 0$ (i.e., only $t_0$ actually appears).\footnote{For simplicity, we use zero-probability types in this example. One can also construct a similar but more sophisticated example where all types have non-zero probabilities.}
\begin{center}
\renewcommand{\arraystretch}{1.3}
\small
\begin{tabular}{ L | R | R | R | R |}
         \myheader{} & \myheader{a} & \myheader{a'} & \myheader{b} & \myheader{b'} \\
         \cline{2-5}
         \alpha & 1 & 0 & -2 & -3 \\
         \cline{2-5}
         \beta & -2 & -3 & 1 & 0  \\
         \cline{2-5}
         \mytablecaption{} & \multicolumn{4}{c}{Type $t_0$} 
    \end{tabular}
    \qquad
    \begin{tabular}{ L | R | R | R | R |}
         \myheader{} & \myheader{a} & \myheader{a'} & \myheader{b} & \myheader{b'} \\
         \cline{2-5}
         \alpha &  & 1 &  &  \\
         \cline{2-5}
         \beta &  & 1 & &  \\
         \cline{2-5}
         \mytablecaption{} & \multicolumn{4}{c}{Type $t_1$} 
    \end{tabular}
    \qquad
    \begin{tabular}{ L | R | R | R | R |}
         \myheader{} & \myheader{a} & \myheader{a'} & \myheader{b} & \myheader{b'} \\
         \cline{2-5}
         \alpha &  & &  & 1 \\
         \cline{2-5}
         \beta &  &  &  & 1  \\
         \cline{2-5}
         \mytablecaption{} & \multicolumn{4}{c}{Type $t_2$} 
    \end{tabular}
\end{center}
\end{example}

\subsubsection{With PIE Mechanism}

Consider the following PIE mechanism. The principal first asks the agent whether their type is $t_0$. 
\begin{itemize}
\item 
If the answer is ``yes'', then no information will be sent subsequently, and the agent decides an action to take. (By credibility, this action needs to be optimal for $t_0$.)

\item 
If the answer is ``no'', then the principal will send a signal that reveals the true state, and then ask the agent to identify their type from the set $T \setminus \{t_0\}$ (given that the agent has denied that their type is $t_0$). Finally, the agent performs an optimal action of the type they selected.
\end{itemize}

It can be verified that the above mechanism yields payoff $1$ for the principal, which is the maximum possible payoff of the principal in this example.
Specifically (note that we only need to consider a type-$t_0$ agent in this case because the other types have probability $0$):
\begin{itemize}
\item 
A type-$t_0$ agent will deny that their type is $t_0$ in response to the principal's first question: If they reported $t_0$, no information would be provided, in which case states $\alpha$ and $\beta$ would still be equally likely according to their belief. Consequently, the agent's expected payoff would be negative for all possible actions.

\item 
Subsequently, when the principal reveals $\alpha$, the agent will be incentivized to report and act as $t_1$, responding with $t_1$'s optimal action $a'$. 
When the principal reveals $\beta$, the agent will act as $t_2$ and respond with $t_2$'s optimal action $b'$. 
In both states, the agent gets payoff $0$ (strictly higher than the above case), and the principal gets payoff $1$.
\end{itemize}

\subsubsection{With Non-PIE Mechanism}
Now consider an arbitrary non-PIE (indefinite-stage) mechanism $\Pi$.
We show that $\Pi$ yields payoff lower than $1$ for the principal.

\begin{lemma}
Any non-PIE mechanism $\Pi$ yields payoff lower than $1$ for the principal in \Cref{exp:partial-gt-EnSESA}.
\end{lemma}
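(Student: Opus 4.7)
The plan is to invoke \Cref{lmm:EnSEA-optimal} and assume without loss of generality that $\Pi$ is an $\EnSES$ mechanism with IC at $\xEN$ and DIC at both subsequent layers of signaling nodes. Since only type $t_0$ has positive probability, IC at $\xEN$ implies that only the subtree rooted at the $\xS$ child selected by $t_0$---call it $\xS^\star$---contributes to the principal's expected payoff. The heart of the argument is to show that in this example no signal at $\xS^\star$ can DIC-recommend reporting $t_1$ or $t_2$, which forces the type-$t_0$ agent to always report $t_0$ and play an action that yields the principal payoff $0<1$.

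I would then fix an arbitrary signal at $\xS^\star$ and let $p$ denote its induced posterior probability of state $\alpha$. If the signal recommends $t_1$, then reporting $t_1$ at the subsequent $\xE$ node commits the agent to $t_1$'s uniquely optimal action $a'$ in the final $\xS$ layer, giving the type-$t_0$ agent expected utility $p\cdot 0+(1-p)(-3)=3p-3$. If instead the agent deviates and reports $t_0$, the principal's subsequent signaling triggers an optimal action for $t_0$; by the standard fact that a Bayesian decision maker weakly benefits from more information, this deviation utility is minimized when the principal sends an uninformative sub-signal, which yields at least $\max(3p-2,\,1-3p)\ge 3p-2$ via action $a$. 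Since $3p-2>3p-3$ for every $p$, DIC fails at the corresponding $\xE$ node, ruling out any signal recommending $t_1$. A symmetric argument using action $b$ versus $b'$ (comparing $1-3p$ with $-3p$) rules out any signal recommending $t_2$.

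Consequently every signal at $\xS^\star$ must recommend $t_0$; the agent always reports $t_0$ and, by DIC at the final $\xS$ layer, plays an optimal action for $t_0$---which must be $a$ or $b$, because these strictly dominate $a'$ and $b'$ by one unit of utility in every state for $t_0$. The principal therefore obtains payoff $0<1$. The step I expect to be the main obstacle is the deviation-utility lower bound: one must verify that even with arbitrary sub-signaling inside the final $\xS$ layer, the principal cannot drive type $t_0$'s deviation payoff below $\max(3p-2,\,1-3p)$. This follows from Jensen's inequality applied to the convex maximum-over-actions functional of the belief, and it is precisely here that the clean $\EnSES$ normal form---with at most one $\xS$ layer between consecutive elicitations---delivered by \Cref{lmm:EnSEA-optimal} does the heavy lifting, by ruling out any further interleaved elicitation that the original $\Pi$ might have exploited.
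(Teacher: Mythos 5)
Your proposal is correct and would compile into a valid proof, but it takes a genuinely different route from the paper's argument. The paper proceeds by contradiction: it assumes the principal achieves payoff exactly $1$, deduces that the reported type at any reached elicitation node must be $t_1$ or $t_2$, and then splits into two cases on the agent's posterior $p(\alpha)$ at that node---if $p(\alpha)=1$ the agent has a strict incentive to deviate to $t_0$ and play $a$; if $p(\alpha)<1$ there is positive mass on $(\beta,a')$, giving the principal payoff $0$ with positive probability. You instead work forward from the DIC structure in \Cref{lmm:EnSEA-optimal}: for any posterior $p$, you compare the deviation utility of reporting $t_0$ (at least $\max(3p-2,\,1-3p)\ge 3p-2$ via Jensen's inequality) against the honest utility for following a $t_1$-recommendation ($3p-3$), and observe the strict gap $3p-2 > 3p-3$ holds uniformly in $p$, so no signal can DIC-recommend $t_1$ (symmetrically $t_2$). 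Your route is cleaner in that it avoids the case split and the contradiction hypothesis, and it actually delivers a slightly stronger conclusion (the principal's payoff is exactly $0$, not just less than $1$); the cost is that it leans more heavily on the full DIC machinery of \Cref{lmm:EnSEA-optimal}, in particular on the fact that a deviating report of $t_0$ triggers subsequent signaling whose action recommendations are optimal for $t_0$, whereas the paper only invokes the reduction to $\EnSES$ loosely. Both arguments are sound.
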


\begin{proof}
Suppose for the sake of contradiction that $\Pi$ yields payoff $1$ for the principal. 
W.l.o.g. $\Pi$ is an $\EnSES$ mechanism as we argued in \Cref{sec:compute}.

Since in the example the probability $t_1$ and $t_2$ actually appear is zero, only a type-$t_0$ agent would contribute to the principal's payoff.
Consider the distribution induced by $\Pi$ and a type-$t_0$ agent's optimal response over tuples $(\theta, t) \in \Theta \times T$, where $t$ is the type the agent reports.
Let $(\theta, t)$ be an arbitrary tuple with a positive probability in this distribution.
It must be that $t \neq t_0$. Indeed, if the agent acts according to $t_0$, they will play $a$ or $b$ as the other actions are strictly dominated. Since the principal's payoff is $0$ for both $a$ and $b$, the principal will get $0$ with a positive probability, contradicting the assumption that $\Pi$ yields payoff $1$ for the principal.

As a result, we have $t \in \{t_1, t_2\}$.
If $t = t_1$, it must be $\theta = \alpha$. Indeed, a type-$t_1$ agent can only be incentivized to play $a'$ according to the payoff definition, whereas the principal's payoff for $(\beta, a')$ is $0$.
Consider further the binding elicitation step (which is not partial elicitation as $\Pi$ is non-PIE) and the agent's posterior belief $p \in \Delta(\Theta)$ at this step. There are two possible cases:
\begin{itemize}
\item $p(\alpha) = 1$. In this case, a type-$t_0$ agent  would in fact be better off reporting $t_0$ and playing $a$, instead of reporting $t_1$ (and playing $a'$); the payoff is $1$ for the former and $0$ for the latter.
Hence, $p(\alpha) = 1$ is not possible.

\item $p(\alpha) < 1$ (and hence $p(\beta) > 0$).
In this case, since $p(\beta) > 0$, with a positive probability the state-action pair $(\beta, a')$ will be realized (given the assumption that the agent will act as $t = t_1$ subsequently). This means that the principal will receive payoff $0$ with a positive probability, which contradicts the assumption that $\Pi$ yields payoff $1$ for the principal, too.
\end{itemize}

By symmetry, the case where $t = t_2$ also leads to contradictions. The assumption that $\Pi$ yields payoff $1$ for the principal does not hold.
\end{proof}

The above result and the PIE mechanism we presented above establish the following result.

\begin{theorem}
There exists an instance where an optimal PIE mechanism yields a strictly higher payoff for the principal than any non-PIE mechanisms do.
\end{theorem}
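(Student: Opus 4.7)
The plan is to use \Cref{exp:partial-gt-EnSESA} as the witnessing instance and combine the two components that are essentially already in place: (a) an explicit PIE mechanism that attains the principal's payoff upper bound of $1$, and (b) the lemma above showing that no non-PIE (equivalently, $\EnSES$) mechanism can achieve payoff $1$ on this instance.

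First I would formally verify part (a) by tracing the credible response of the only realized type $t_0$ under the two-question PIE mechanism described just before the lemma. The argument is: if the agent answers ``yes'' to ``are you $t_0$?'', credibility forces an action in $\{a,b\}$ with posterior equal to the prior $(1/2,1/2)$, giving the agent expected payoff at most $-1/2$; if the agent answers ``no'', the principal reveals $\theta$, and the agent then reports $t_1$ on $\alpha$ and $t_2$ on $\beta$, securing payoff $0$. So $t_0$ strictly prefers ``no,'' after which the induced state-action pair is $(\alpha,a')$ or $(\beta,b')$, each yielding principal payoff $1$. Hence the PIE mechanism achieves expected payoff $1$, which matches the instance-wide upper bound since the principal's payoff is bounded by $1$ in every $(\theta,a)$ entry.

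For part (b), I would simply invoke the lemma just proved, which establishes that any non-PIE mechanism yields payoff strictly less than $1$ on this same instance. Combining (a) and (b) gives the strict separation claimed in the theorem.

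The only real subtlety to double-check is the scope of ``non-PIE mechanism'': by \Cref{lmm:EnSEA-optimal}, any indefinite-stage mechanism whose binding elicitation queries the full type directly is without loss of generality an $\EnSES$ mechanism, so the lemma's scope indeed covers all non-PIE indefinite-stage mechanisms. I do not anticipate a serious obstacle here—the heavy lifting is in the lemma—so the proof of the theorem itself is a short two-line combination of the PIE construction with the lemma.

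\begin{proof}
Consider \Cref{exp:partial-gt-EnSESA}. The PIE mechanism described above this theorem achieves expected payoff $1$ for the principal, as verified by tracing the credible response of a type-$t_0$ agent: denying $t_0$ strictly dominates reporting $t_0$ (which would yield a negative expected payoff under the prior-equal posterior), and upon learning the revealed state, the agent optimally imitates $t_1$ on $\alpha$ (playing $a'$) and $t_2$ on $\beta$ (playing $b'$), inducing principal payoff $1$ in each state. On the other hand, the preceding lemma shows that any non-PIE mechanism yields payoff strictly less than $1$ on this instance. The two bounds together establish the strict payoff separation.
\end{proof}
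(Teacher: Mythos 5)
Your proposal is correct and follows the same approach as the paper: use \Cref{exp:partial-gt-EnSESA}, verify that the two-question PIE mechanism yields payoff $1$ by tracing the type-$t_0$ agent's incentives, and combine this with the preceding lemma showing every non-PIE mechanism yields payoff strictly less than $1$. Your scope check (that non-PIE indefinite-stage mechanisms reduce w.l.o.g.\ to $\EnSES$ via \Cref{lmm:EnSEA-optimal}) matches the paper's use of the same reduction inside the lemma's proof, so nothing is missing.
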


\subsection{Suboptimality of Constant-Step PIE Mechanisms}

It might be tempting to think that, similarly to the case of non-PIE mechanisms, it is without loss of optimailty to consider PIE mechanisms with a constant depth. Surprisingly, as we demonstrate using the following example, this is not the case: in this example, the depths of {\em all} the optimal PIE mechanisms grow with the size of the instance. 

\begin{example}
\label{exp:PIE-n-steps}
Let $\Theta = \{\theta_1, \dots, \theta_n \}$, $T = \{t_0, t_1, \dots, t_n\}$, and $A = \{a_1,\dots, a_n\} \cup \{a'_1, \dots, a'_n\}$.
The principal gets payoff $1$ for the state-action pairs $(\theta_i, a'_i)$ for all $i = 1,\dots, n$, and gets $0$ for all other pairs.
The agent's payoffs are given in \Cref{tab:pie_beyond_const}, where we let $M = 2n^{3^n}$. The parameters are non-negative only on the diagonal from $(\theta_1, a_1)$ to $(\theta_n, a_n)$ and in the entry $(\theta_n, a'_n)$.
The state is distributed uniformly: $\mu(\theta_1) =  \dots = \mu(\theta_n) = 1/n$. The type distribution is: $\rho(t_0) = 1$ and $\rho(t_1) = \dots = \rho(t_n) = 0$ (i.e., only $t_0$ actually appears).
\begin{table}[t]
\renewcommand{\arraystretch}{1.3}
    \centering
    \small
    \begin{tabular}{LCCCCCCCCCC}
        \myheader{} & \myheader{a_1} & \myheader{a_2} & \myheader{\dots} & \myheader{a_{n-1}} & \myheader{a_n} & \myheader{a'_1} & \myheader{a'_2} & \myheader{\dots} & \myheader{a'_{n-1}} & \myheader{a'_n} \\
         \cline{2-11}
         \multicolumn{1}{L|}{\theta_1} & 1 \tikzmark{nw1} &   &  &  & \multicolumn{1}{L|}{}  & 0 \tikzmark{nw2} &   &  &  & \multicolumn{1}{L|}{}  \\
\multicolumn{1}{L|}{\theta_2} &   & 1 &  & \multicolumn{1}{r}{\Large \hspace{-5mm}$0$\hspace{-5mm}} & \multicolumn{1}{L|}{}  &   & 0 &  & \multicolumn{1}{r}{\large \hspace{-5mm}$-M$\hspace{-5mm}} & \multicolumn{1}{L|}{}  \\
\multicolumn{1}{L|}{\vdots}       &  &  & \ddots &        & \multicolumn{1}{L|}{} &  &  & \ddots &   & \multicolumn{1}{L|}{} \\
\multicolumn{1}{L|}{\theta_{n-1}} &  & \multicolumn{1}{c}{\Large \hspace{-5mm}$0$\hspace{-5mm}} &        & 1 & \multicolumn{1}{L|}{} &  & \multicolumn{1}{c}{\large \hspace{-5mm}$-M$\hspace{-5mm}} &        & 0 & \multicolumn{1}{L|}{} \\
\multicolumn{1}{L|}{\theta_n} &   &   &  &  & \multicolumn{1}{L|}{1 \tikzmark{se1}} &   &   &  &  & \multicolumn{1}{L|}{1 \tikzmark{se2}} \\
         \cline{2-11}
         \mytablecaption{} & \multicolumn{10}{c}{\em Type $t_0$}
    \end{tabular}
    \quad
    \begin{tabular}{ L | C | C | C | C | C | C |}
         \myheader{} & \myheader{a_1} & \myheader{\dots} & \myheader{a_i} & \myheader{\dots}  & \myheader{a'_i} & \myheader{\dots} \\
         \cline{2-7}
         \theta_1& \cellcolor{lightgray}0 & &  & & &  \\
         \cline{2-7}
         \vdots &  & \cellcolor{lightgray}\ddots & & & &  \\
         \cline{2-7}
         \theta_i &  &  & \cellcolor{lightgray}0 &  & \multicolumn{1}{c|}{\cellcolor{lightgray} \footnotesize \!\!\!\!$1/M$\!\!\!\!} &  \\
         \cline{2-7}
         \vdots & & & & & & \\
         \cline{2-7}
         \mytablecaption{} & \multicolumn{6}{c}{\em Type $t_i$}
    \end{tabular}

    \begin{tikzpicture}[overlay,remember picture, shorten >=-3pt]
    \draw[-,dotted] ([xshift=-4.8mm,yshift=-1.5mm]pic cs:nw1) -- ([xshift=-7mm,yshift=-1.1mm]pic cs:se1);
    \draw[-,dotted] ([xshift=2.5mm,yshift=3.8mm]pic cs:nw1) -- ([xshift=3.1mm,yshift=2.4mm]pic cs:se1);
    \draw[-,dotted] ([xshift=-4.8mm,yshift=-1.5mm]pic cs:nw2) -- ([xshift=-7mm,yshift=-1.1mm]pic cs:se2);
    \draw[-,dotted] ([xshift=2.5mm,yshift=3.8mm]pic cs:nw2) -- ([xshift=3.1mm,yshift=2.4mm]pic cs:se2);
    \end{tikzpicture}

    \caption{The agent's payoffs in \Cref{exp:PIE-n-steps}. The payoffs of an agent of type $t_i$ (for each $i = 1,\dots,n$) is the same as those of type $t_0$, with the exception of the $i+1$ entries highlighted in gray (payoffs in the blank entries are all the same as type $t_0$).}
    \label{tab:pie_beyond_const}
\end{table}

\end{example}

\subsubsection{An Optimal PIE Mechanism}
\label{sec:opt-pie}

An $n$-step PIE mechanism is illustrated in \Cref{fig:n-step-PIE}.
The mechanism begins with an $\xE$ node, $\xE_0$, and asks the agent whether they are of type $t_i$ or not at each subsequent node $\xE_i$: answering ``yes'' leads to termination of the mechanism, while answering ``no'' leads to continuation of the mechanism to node $\xS_{i+1}$.
At each $\xS_{i}$, the signaling strategy signals $g_i$ deterministically if the state is $\theta_i$, and signals $g_{>i}$, otherwise.

Recall that $t_0$ is the only type that appears with a positive probability. It can be verified that a type-$t_0$ agent is (weakly)\footnote{One can change the $0$'s on the diagonal from $(\theta_1, a'_1)$ to $(\theta_{n-1}, a_{n-1})$ to a sufficiently small number to strongly incentivize the agent and, with extra bookkeeping, prove that the example still works.} incentivized to respond as follows:
\begin{itemize}
\item Act as $t_i$ and play action $a'_i$ upon receiving $g_i$, whereby he receives payoff $0$. Indeed, according to the construction, we have $\Pro(\theta_i \given g_i) = 1$, so the type-$t_0$ agent could only be better off playing $a_i$. However, playing $a_i$ is blocked by the credibility constraint because upon reaching $\xS_i$ it must be that the agent has denied that they are of types $t_0, \dots, t_{i-1}$; playing $a_i$ cannot be justified as an optimal action of any of the remaining types in this case.

\item Report $\neg t_i$ at each $\xE_i$.
This can be verified by noting that the agent's posterior belief at $\xE_i$ is a uniform distribution over $\theta_{i+1}, \dots, \theta_n$. Any of $a_{i+1}, \dots, a_n$ is optimal for the type-$t_0$ agent and gives payoff $1/(n-i)$. In comparison, if the agent keeps reporting $\neg t_j$ at the each subsequent $\xE_j$ and acts as type $t_j$ upon receiving $g_j$ as described above, he will be persuaded to play precisely $a'_j$ at each $\theta_j$, whereby he gets the same overall payoff $1/(n-i)$ (thanks in particular to the payoff $1$ for $(\theta_n, a'_n)$). Hence, it is optimal to report $\neg t_i$ at each $\xE_i$. 
\end{itemize}
Since action $a'_i$ is played at every $\theta_i$, the principal always gets payoff $1$, which is also the maximum possible payoff in this example. Hence, the PIE mechanism is optimal.

Intuitively, the PIE mechanism works by revealing a state $\theta_i$ only after the agent denies his type being any of $t_1, \dots, t_{i-1}$. This is crucial as otherwise the agent would have been better off acting as any arbitrary type in this range, whose optimal action is $a_i$. Moreover, eliciting the exact type immediately at any $\xE_i$ would not work as that would only encourage the agent to report $t_i$ and play one of the actions $a_{i+1},\dots, a_n$, leading to payoff $0$ for the principal. We next formally analyze the lower bound on the depths of optimal PIE mechanisms.

\begin{figure*}
\centering
\begin{tikzpicture}
    \node[E node] (1) {$\xE_i$};
    \node (2) [below left = 1cm of 1] {};
    \node[S node, text width=6mm] (3) [below right = 1cm of 1] {$\xS_{i+1}$};

    \draw[arrleaf] (1) -- node [black, midway, above left] {$t_i$} (2) [anchor=west];
    \draw (1) -- node [black, midway] {$\lnot t_i$} (3) [anchor=west];
\end{tikzpicture}
\qquad\qquad\qquad
\begin{tikzpicture}
    \node[S node] (1) {$\xS_{i}$};
    \node (2) [below left = 1cm of 1] {};
    \node[E node] (3) [below right = 1cm of 1] {$\xE_{i}$};

    \draw[arrleaf] (1) -- node [black, midway, above left] {$g_i$} (2) [anchor=west];
    \draw (1) -- node [black, midway] {$g_{>i}$} (3) [anchor=west];
\end{tikzpicture}
\caption{A optimal mechanism for \Cref{exp:PIE-n-steps}, $i \in \{0, 1, \dots, n-1\}$. $\xE_n$ is a terminal node.}
\label{fig:n-step-PIE}
\end{figure*}
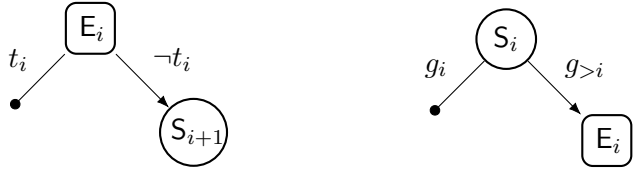

\subsubsection{Depth of Optimal PIE Mechanisms}

\newcommand{\subsetsim}{\mathrel{\ooalign{\raise.3ex\hbox{$\subset$}\cr$\hspace{0.2ex}\raise-.7ex\hbox{$\sim$}$}}}
\newcommand{\children}{\mathrm{children}}
\newcommand{\IT}{\widehat{T}}
\newcommand{\ITheta}{\widehat{\Theta}}
\newcommand{\calP}{\mathcal{P}}

We demonstrate an $\Omega(n)$ lower bound on the depths of the optimal PIE mechanisms, for a class of instances with $n$ possible states.
Consider an arbitrary optimal PIE mechanism $\Pi$. Given the PIE mechanism we presented above, for $\Pi$ to be optimal, it must yield payoff $1$ for the principal. We will show that to achieve this payoff, $\Pi$ must have $\Omega(n)$ levels.

Since type $t_0$ is the only type that actually appears, an argument similar to the proof of \Cref{lmm:EnSEA-optimal} implies that there is no need for $\Pi$ to use any $\xEN$ node.
We can then assume w.l.o.g. that $\Pi$ starts with an $\xE$ node, and subsequently the levels alternate between $\xS$ and $\xE$ nodes (with the exception of leaf nodes, which can appear at any level). 
For each node $x$ in $\Pi$, we define the following sets:
\begin{itemize}
\item $T_x \subseteq T$, which contains the types that are {\em not} excluded by the agent's answers at the $\xE$ nodes on the path from the root to $x$.

\item $\Theta_x \subseteq \Theta$, which is the support of the agent's posterior at $x$, i.e., $\theta \in \Theta$ if and only if $\Pro(\theta \given x) > 0$, where $\Pro$ is the probability measure induced by $\Pi$. 
\end{itemize}
To simplify the notation, we write $p_x$ as the distribution such that $p_x(\theta) := \Pro(\theta \given x)$ for each $\theta$.

The next two lemmas follow by noting that to yield payoff $1$ for the principal requires each $a'_i$ to be played exactly at state $\theta_i$, and this behavior can be induced only when the agent acts as type $t_i$. 

\begin{restatable}{lemma}{lmmleafTx}
\label{lmm:leaf-T-x}
$|\Theta_x| = 1$ for every leaf node $x$ in $\Pi$.
\end{restatable}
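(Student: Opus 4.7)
The plan is to show that, at every leaf reached with positive probability, the principal's conditional expected payoff must be exactly $1$, and then exploit the sparsity of payoff-$1$ pairs in Example~\ref{exp:PIE-n-steps} to conclude that the posterior support there is a singleton.

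First, I would recall that the PIE mechanism constructed in Section~\ref{sec:opt-pie} already achieves the principal's payoff of $1$, which is also the trivial upper bound since $v(\theta,a)\le 1$ for every $(\theta,a)$. Hence any optimal $\Pi$ must yield payoff exactly $1$. W.l.o.g.\ I would prune $\Pi$ so that every leaf is reached with positive probability (removing branches never taken changes neither the payoff nor the property in question). By the law of total expectation,
\[
\Exp_{x}\bigl[\Exp[v\mid x]\bigr] \;=\; 1,
\]
where the outer expectation is over the leaf $x$ that $\Pi$ reaches. Since each conditional expectation is bounded by $1$, it must equal $1$ at every (reachable) leaf.

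Second, I would unpack $\Exp[v\mid x]$ at a fixed leaf $x$. Since no further information is exchanged beyond $x$, the agent commits to some (possibly mixed) action distribution $\alpha\in\Delta(A)$ that does not depend on the unrealized state, giving
\[
\Exp[v\mid x]\;=\;\sum_{\theta\in\Theta_x}\sum_{a\in A} p_x(\theta)\,\alpha(a)\,v(\theta,a).
\]
This equals $1$ if and only if $v(\theta,a)=1$ for every pair $(\theta,a)$ with $p_x(\theta)\alpha(a)>0$.

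Third, I would invoke the structure of $v$: the only state--action pairs achieving $v=1$ in Example~\ref{exp:PIE-n-steps} are exactly $(\theta_i,a'_i)$ for $i=1,\dots,n$. Thus any $\theta_i\in\Theta_x$ forces every action in the support of $\alpha$ to equal $a'_i$, and a single action $a'_i$ only pairs with the single state $\theta_i$ to give $v=1$. Consequently no two distinct states can lie in $\Theta_x$, so $|\Theta_x|\le 1$; combined with reachability of $x$ we get $|\Theta_x|=1$.

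I do not expect any serious obstacle. The only mild subtlety is justifying the pruning step and noting that the credibility constraints imposed along the path to $x$ do not interfere with the argument---indeed, the proof only uses what action the agent \emph{must} play to deliver payoff $1$, not what actions credibility still permits.
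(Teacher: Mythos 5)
Your proof is correct and follows essentially the same route as the paper's: both use that optimality forces payoff exactly $1$, that the only payoff-$1$ state--action pairs are $(\theta_i, a'_i)$, and that at a leaf the agent's (possibly mixed) action cannot depend on the unrealized state, so two distinct states in the posterior support would force the same action to be played at two states, which is impossible. Your version is slightly more explicit (pruning unreachable leaves and invoking the law of total expectation), but the underlying argument is the same.
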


\begin{restatable}{lemma}{lmmThetaxTx}
\label{lmm:Theta-x-T-x}
For every node $x$ in $\Pi$, if $\theta_i \in \Theta_x$ then $t_i \in T_x$.
\end{restatable}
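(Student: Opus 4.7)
The plan is to argue by contradiction: assume there is some node $x$ with $\theta_i \in \Theta_x$ but $t_i \notin T_x$, and derive a failure of the principal to attain payoff $1$ at some reachable leaf below $x$. Since $\Pi$ is optimal it must match the $n$-step PIE mechanism of \Cref{sec:opt-pie} and so attain expected payoff $1$; hence the principal's payoff conditional on reaching any particular leaf is also $1$, which forces the agent to play $a'_i$ whenever the leaf's posterior is concentrated on $\theta_i$.

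First I would exhibit a leaf $x^*$ in the subtree rooted at $x$ with $\Theta_{x^*} = \{\theta_i\}$. Such a leaf exists because $\theta_i \in \Theta_x$ means $\theta_i$ is realized along some root-to-leaf path through $x$ with positive probability, and by \Cref{lmm:leaf-T-x} every leaf has singleton posterior support. Because the ``not-excluded'' set $T_{\cdot}$ can only shrink along any root-to-leaf path (each $\xE$ node either preserves or strictly shrinks it), $T_{x^*} \subseteq T_x$, and in particular $t_i \notin T_{x^*}$.

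The next step is to read off the payoff tables to classify, for each type $t \in T$, which actions are optimal for $t$ at state $\theta_i$. For $t_i$ the gray modifications make $a'_i$ strictly optimal (payoff $1/M$ vs.\ $0$ for $a_i$); for each $t_j$ with $j<i$ the entry $(\theta_i, a_i)$ is unchanged and equals $1$, so $a_i$ is strictly optimal and $a'_i$ is strictly dominated; for each $t_k$ with $k>i$ the entry $(\theta_i, a_i)$ has been modified to $0$, so all $a_j$'s together with $a'_i$ are tied optimal at $0$. In particular, $a'_i$ is \emph{strictly} optimal only for $t_i$, and for any $t_k$ with $k>i$ the action $a_i$ is also (weakly) optimal.

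Now I would use this to show that the principal cannot attain payoff $1$ at $x^*$. Under credibility the agent, who is truly $t_0$, commits to acting as some type in $T_{x^*}$ and picks the action optimal for that type that maximizes $u_{t_0}(\theta_i, \cdot)$. Split on whether $T_{x^*} \cap \{t_0,\dots,t_{i-1}\}$ is empty: if it is non-empty, the agent can act as some $t_j$ with $j<i$, for which $a_i$ is strictly optimal, and $u_{t_0}(\theta_i,a_i)=1>0=u_{t_0}(\theta_i,a'_i)$; if it is empty, then $T_{x^*}\subseteq\{t_{i+1},\dots,t_n\}$ and, acting as some $t_k$ with $k>i$, the agent has $a_i$ and $a'_i$ tied among the $t_k$-optimal actions, and strictly prefers $a_i$ under $u_{t_0}$, a strict preference that mechanism-level tie-breaking cannot overturn. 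In either case $a'_i$ is not played with probability $1$ at $x^*$, so the principal's payoff there is strictly less than $1$---contradicting optimality.

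The main obstacle will be the degenerate case $i=n$: at $\theta_n$ the type-$t_0$ payoffs of $a_n$ and $a'_n$ both equal $1$, so the strict-preference step above fails directly. I would handle this by tracing the incentive constraints backward from $x^*$ to the root: $t_n \notin T_x$ forces the existence of an $\xE$ node on the path at which $t_n$ was excluded, and one compares the agent's expected continuation payoff along the chosen branch with the alternative ``include $t_n$'' branch. Using \Cref{lmm:leaf-T-x} to pin down leaf posteriors, one shows that the latter branch always yields weakly higher $u_{t_0}$-payoff, while the principal's requirement that $a'_n$ be played at $x^*$ forces a strict inequality somewhere, violating incentive compatibility of the agent's earlier choice and closing the case.
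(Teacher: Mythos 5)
Your proof and the paper's share the same core contradiction: if $t_i \notin T_x$ but $\theta_i \in \Theta_x$, then $a'_i$ must be played at $\theta_i$ for the principal to earn payoff~$1$, yet $a'_i$ cannot be played once $t_i$ is excluded. The paper dispatches this in two sentences by asserting that $a'_i$ is strictly dominated for every type in $T \setminus \{t_i\}$. Your proof is longer but actually more careful in one place the paper glosses over: for types $t_k$ with $k > i$ and posterior concentrated on $\theta_i$, $a'_i$ is only \emph{weakly} dominated by $a_i$ (both give payoff $0$), so the blanket strict-dominance claim does not literally hold there. You correctly close this by observing that even when $a'_i$ is tied-optimal for the imitated type, the actual type $t_0$ strictly prefers $a_i$ (payoff $1$ vs.\ $0$), and this strict preference cannot be overridden by mechanism tie-breaking. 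Working at the leaf level (using \Cref{lmm:leaf-T-x} to pin the posterior to $\delta_{\theta_i}$) is what makes that preference comparison clean, whereas the paper argues more loosely at the subtree level. So you reach the same conclusion by a slightly different and somewhat more rigorous route for $i < n$.

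On the $i = n$ edge case: you are right that the argument above degenerates there, since $u_{t_0}(\theta_n, a_n) = u_{t_0}(\theta_n, a'_n) = 1$. But the backward-induction sketch you offer to repair it is too vague to assess, and it is not clear it can work: under the payoffs as given, a $t_j$-imitating agent ($j < n$) at posterior $\delta_{\theta_n}$ is genuinely indifferent between $a_n$ and $a'_n$, so mechanism tie-breaking can force $a'_n$ without the agent losing anything, which undermines any contradiction via incentive compatibility at an earlier $\xE$ node. The paper's own proof also silently skips $i = n$; the saving grace is that the downstream use of this lemma in \Cref{prp:depth-n} only ever needs indices $\ell \le n-1$, so this case carries no weight. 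Rather than chase an argument that may not go through, it would be cleaner for you to simply note that the lemma is only invoked for $i < n$ (or, alternatively, to invoke the perturbation mentioned in the example's footnote, which breaks all the relevant ties in the agent's favor and restores strict dominance).
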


Moreover, we can prove the following key result to establish a connection between posteriors at each node and their child nodes.

\begin{restatable}{lemma}{lmmchildinPi}
\label{lmm:child-in-Pi}
For every $i=1,\dots, n-1$ and $\epsilon \in \mathbb{R}$, let $\calP_i^\epsilon \subseteq \Delta(\Theta)$ consist of points $p$ such that: 
\begin{itemize}
\item[i.] 
$\sum_{j = i}^n p(\theta_j) =1$;

\item[ii.] 
$\sum_{j = i+1}^n p(\theta_j) \ge 1 - \frac{1}{n-i+1} - \epsilon$; and

\item[iii.] $\left| p(\theta_j) - p(\theta_n) \right| \le \epsilon/2$ for all $j = i+1, \dots, n$.
\end{itemize}
Let $\epsilon_k = n^{-3^{n-k}}$ for $k \in \mathbb{Z}$.
For every non-leaf node $x$ in $\Pi$, if $t_i \in T_x$ and $p_x \in \calP_i^{\epsilon_k}$, then there exists $c \in \children(x)$ such that $p_c \in \calP_i^{\epsilon_{k+1}}$.
\end{restatable}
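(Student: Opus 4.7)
The plan is to case-analyze on the type of node~$x$: since only $t_0$ has positive probability, the same argument used in the proof of \Cref{lmm:EnSEA-optimal} lets me assume $\Pi$ uses no $\xEN$ nodes, so $x$ is either an $\xE$ or an $\xS$ node. If $x$ is an $\xE$ node, the argument is essentially immediate: the agent chooses a report as a function of $x$ alone (he never observes $\theta$ directly), so the posterior over $\Theta$ is unchanged at any reached child, giving $p_c = p_x$. Since $\epsilon_k < \epsilon_{k+1}$ implies $\calP_i^{\epsilon_k}\subseteq\calP_i^{\epsilon_{k+1}}$, any child reached with positive probability witnesses the conclusion.

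The substantive case is when $x$ is an $\xS$ node with signaling strategy $\pi_x$, children $\{c_g\}_{g\in G}$, posteriors $q_g$, and unconditional probabilities $\lambda_g=\Pro(g\mid x)$, so that $p_x=\sum_g\lambda_g\,q_g$ by Bayes-plausibility. Condition~(i) of $\calP_i^{\epsilon_{k+1}}$ is automatically inherited by every $q_g$ since $p_x$ itself is supported in $\{\theta_i,\ldots,\theta_n\}$. I will argue by contradiction: suppose no $q_g$ lies in $\calP_i^{\epsilon_{k+1}}$, so each $q_g$ violates either (ii) or (iii) at tolerance $\epsilon_{k+1}$. Combining Bayes-plausibility with the optimality of $\Pi$ (each sub-mechanism rooted at a reached $c_g$ still yields conditional payoff~$1$) and \Cref{lmm:Theta-x-T-x} (so $\theta_i$ can appear in the support of posteriors only at nodes that still have $t_i$ available), the mass $p_x(\theta_i)$---of size at most $\tfrac{1}{n-i+1}+\epsilon_k$---must ultimately be funneled to leaves at which the agent commits to $t_i$ and plays $a'_i$. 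The payoff table of \Cref{exp:PIE-n-steps} (in particular that $a'_i$ is uniquely optimal for $t_i$ at $\theta_i$ but disastrous under commitment to any $t_k$ with $k<i$) then forces this routing to preserve approximate uniformity on $\{\theta_{i+1},\ldots,\theta_n\}$ in the branch carrying $\theta_i$, exhibiting a child in $\calP_i^{\epsilon_{k+1}}$ and contradicting the assumption.

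The main obstacle is quantifying the above to match the cube-root tolerance blow-up $\epsilon_{k+1}=\epsilon_k^{1/3}$ built into $\epsilon_k=n^{-3^{n-k}}$. I anticipate a pigeonhole/averaging argument driven by the pointwise identities $p_x(\theta_j)-p_x(\theta_n)=\sum_g\lambda_g\bigl(q_g(\theta_j)-q_g(\theta_n)\bigr)$ for $j\ge i+1$, whose left-hand sides are bounded in absolute value by $\epsilon_k/2$ by the $\calP_i^{\epsilon_k}$ hypothesis on $p_x$. If every $q_g$ violated condition~(iii) at tolerance $\epsilon_{k+1}/2$, the signed deviations on the right would need to cancel with high precision; the structural restrictions enforced by optimality---in particular, that mass on $\theta_i$ cannot be routed into a child whose subsequent elicitation step forces a type commitment incompatible with $t_i$---should restrict the sign pattern of these deviations enough to preclude such cancellation, producing a quantitative lower bound of the form $\epsilon_k\gtrsim\epsilon_{k+1}^3$ that is consistent with equality, delivering the desired contradiction at equality. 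An analogous accounting, this time on the single scalar identity $p_x(\theta_i)=\sum_g\lambda_g\,q_g(\theta_i)$, will handle the case where condition~(ii) is the one uniformly violated.
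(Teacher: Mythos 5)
Your high-level plan matches the paper's: the $\xE$-node case is disposed of trivially (posteriors are unchanged, and $\calP_i^{\epsilon_k}\subseteq\calP_i^{\epsilon_{k+1}}$), the substantive case is an $\xS$ node, and the argument is a contradiction driven by Bayes-plausibility combined with a sign constraint that prevents the deviations $q_g(\theta_j)-q_g(\theta_n)$ from cancelling. You are also right that the $\epsilon_{k+1}=\epsilon_k^{1/3}$ tolerance blow-up is what the averaging must support. What you are missing, however, is the precise form of the sign constraint, and it is the crux of the proof. The paper isolates it as a standalone claim (\Cref{lmm:ub-p-theta-n}): for every node $x$ with $t_i\in T_x$, one has $p_x(\theta_j)\le p_x(\theta_n)+2/M$ for all $j=i+1,\dots,n$. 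This one-sided bound is what turns any violation of condition~(iii) at a child into the \emph{specific} inequality $p_c(\theta_j)<p_c(\theta_n)-\epsilon_{k+1}/2$ (since $2/M\ll\epsilon_{k+1}/2$), so the violating contributions all push in the same direction and cannot cancel against one another.

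The content of \Cref{lmm:ub-p-theta-n} is not the fact you cite---that $a'_i$ is ``disastrous under commitment to $t_k$ with $k<i$''---nor does your framing of ``routing $\theta_i$ mass to leaves committing to $t_i$'' lead directly to it. Rather, the paper argues: if $p_x(\theta_j)>p_x(\theta_n)+2/M$, then a type-$t_0$ agent who imitates $t_i$ from $x$ onward (legal since $t_i\in T_x$) could secure conditional payoff strictly above $p_x(\theta_n)$, because under $t_i$'s incentives only $a_1,\dots,a_n$ and $a'_i$ are ever played, $a'_i$ is played only when $\theta_i$ is nearly certain (else $t_i$ would prefer $a_i$), and the option of always playing $a_j$ already yields $p_x(\theta_j)$. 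But optimality of $\Pi$ pins the type-$t_0$ agent's conditional payoff at exactly $p_x(\theta_n)$, giving a contradiction. With this lemma in hand, the paper then partitions $\children(x)$ into $C_0$ (violating (ii)) and $C_{i+1},\dots,C_n$ (violating (iii) at index $j$), bounds $\Pro(C_0\mid x)$ and each $\Pro(C_j\mid x)$ by explicit ratios $\xi$ and $\xi'=2\epsilon_k/\epsilon_{k+1}$ using the Bayes-plausibility averages you wrote down, and derives $\sum_j\Pro(C_j\mid x)<1$ from $\epsilon_{k+1}^2>n^2\epsilon_k$. So your sketch has the right shape but leaves the load-bearing lemma as a conjecture; filling that gap is most of the work, and your proposed route to it (via which branches carry the $\theta_i$ mass) is not the argument that succeeds.
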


Intuitively, when $\epsilon$ is small, $\calP_i^\epsilon$ is roughly a hyperplane in $\Delta(\Theta)$, where $p(\theta_{i+1}) = \dots = p(\theta_n)$. 
The above lemma states that if type $t_i$ is not yet excluded at an $\xS$ node, then at the next level of the tree there must be a node $c$ at which the posterior stays in $\calP_i^{\epsilon_{k+1}}$, which is roughly the same as $\calP_i^{\epsilon_k}$.
When $\epsilon_{k+1}$ is sufficiently small, this further means that the posterior $p_c$ remains supported on $\{\theta_{i+1}, \dots, \theta_n\}$ because the probabilities of these states cannot be too close to $0$ due to the first condition in the lemma.
By~\Cref{lmm:leaf-T-x}, the mechanism cannot terminate at $c$ in this case, so $\Pi$ needs to include one more level.
Applying this argument repeatedly we find that $\Pi$ must have $\Omega(n)$ levels.

\begin{theorem}
\label{prp:depth-n}
There exists a class of instances where every optimal PIE mechanism has depth $\Omega(n)$.
\end{theorem}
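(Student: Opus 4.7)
The plan is to show that in \Cref{exp:PIE-n-steps}, every mechanism of depth $o(n)$ yields payoff strictly less than $1$; combined with the $n$-step PIE mechanism of \Cref{sec:opt-pie} which achieves payoff $1$, this forces every optimal mechanism to have depth $\Omega(n)$. As a preliminary normalization, since $t_0$ is the only type with positive probability under $\rho$, an argument analogous to that in \Cref{lmm:EnSEA-optimal} lets me assume $\Pi$ contains no $\xEN$ nodes and alternates between $\xE$ levels (with the root being an $\xE$ node) and $\xS$ levels.

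The heart of the proof is to construct, by iteratively invoking \Cref{lmm:child-in-Pi}, a root-to-leaf path in $\Pi$ of length $\Omega(n)$. The root's posterior is the uniform prior, which lies in $\calP_1^0 \subseteq \calP_1^{\epsilon_0}$, and $t_1 \in T_{\mathrm{root}} = T$. I will extend the path inductively, maintaining the invariant $p_x \in \calP_{i_x}^{\epsilon_{k_x}}$ and $t_{i_x} \in T_x$ at each visited node $x$, for some index $i_x \in \{1, \dots, n-1\}$ and counter $k_x$. At an $\xS$ node with current index $i$, \Cref{lmm:child-in-Pi} supplies a child whose posterior stays in $\calP_i^{\epsilon_{k+1}}$, and since $T_c = T_x$ at signaling steps, $t_i \in T_c$ is preserved; at an $\xE$ node, the posterior is unchanged across children, so I pick the child whose reported subset contains $t_i$ (such a child always exists since the children partition $T_x$).

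Next, to convert the existence of such a path into the depth bound, I show that the invariant index $i_x$ must advance through roughly all of $\{1, \dots, n-1\}$ before the path can terminate at a leaf. Indeed, for $\epsilon_k$ small, conditions (i)--(iii) of $\calP_i^\epsilon$ force the posterior's support to contain $\{\theta_{i+1}, \dots, \theta_n\}$ with roughly equal mass, giving $|\Theta_x| \ge n - i$. By \Cref{lmm:leaf-T-x}, a leaf has $|\Theta_x| = 1$, so the path cannot terminate while $i_x < n-1$. Since an $\xE$ node leaves the posterior unchanged, advancing $i_x$ requires an $\xS$ step: a suitably chosen signal drops $p(\theta_{i_x})$ to (near) $0$ and rebalances the remaining mass to land in $\calP_{i_x+1}^{\epsilon'}$ for small $\epsilon'$. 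Each such advance therefore contributes at least one $\xS$ node to the path, yielding $\Omega(n)$ total levels.

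The main technical obstacle will be formalizing the transition $(i, \epsilon_k) \to (i+1, \epsilon_{k'})$, since \Cref{lmm:child-in-Pi} itself keeps $i$ fixed and only enlarges $\epsilon$. This requires a geometric analysis of the sets $\calP_i^\epsilon$ and how signals at an $\xS$ node transform the posterior: the goal is to show that after applying the lemma, the resulting child's posterior is close enough to the ``hyperplane'' defined by $p(\theta_{i+1}) = \dots = p(\theta_n)$ and $p(\theta_i) = 0$ that it lies in $\calP_{i+1}^{\epsilon_{k'}}$ for some $k' \le k + O(1)$. The rapidly growing schedule $\epsilon_k = n^{-3^{n-k}}$ is calibrated to allow $\Omega(n)$ such transitions in sequence without pushing any intermediate $\epsilon_k$ near the threshold (roughly $2$) at which a point-mass posterior could lie in $\calP_i^{\epsilon_k}$, ensuring the path cannot terminate at a leaf until $i_x$ has reached $n-1$ and thereby giving the $\Omega(n)$ bound on depth.
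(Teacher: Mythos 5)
Your proposal follows the same overall strategy as the paper's proof: maintain the invariant $(p_x \in \calP_{i}^{\epsilon_k},\ t_i \in T_x)$ along a root-to-leaf path via \Cref{lmm:child-in-Pi}, observe that for small $\epsilon_k$ this invariant forces $|\Theta_x| \ge n - i > 1$ and hence (by \Cref{lmm:leaf-T-x}) a non-leaf, and conclude a depth bound of $\Omega(n)$. You also correctly identify the roles of \Cref{lmm:Theta-x-T-x} and the $\epsilon_k$-schedule.

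The main difference is in how the argument is closed off. You phrase the bound in terms of the index $i_x$ having to sweep from $1$ to $n-1$, with each advance costing an $\xS$ node. The paper does not track advances of $i$ at all; it simply inducts on $k$, which increments by one at each level regardless of whether $i$ advances, and stops at $k = n-1$ because that is where \Cref{lmm:child-in-Pi}'s schedule ceases to apply. This is cleaner because $i$ can stay fixed for many consecutive levels (the case $p_c(\theta_i) > 0$), in which case ``counting advances'' gives nothing and your framing ``cannot terminate while $i_x < n-1$'' only holds as long as $\epsilon_k$ remains small --- a condition tied to $k$, not $i$. You are aware of this tension (the ``calibration'' remark), but your argument would need to be reorganized around the $k$-budget rather than the $i$-index to actually close.

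The second, more substantive gap is the transition step you label as ``the main technical obstacle.'' \Cref{lmm:child-in-Pi} only shows $p_c \in \calP_i^{\epsilon_{k+1}}$ with $i$ unchanged. You describe the advance $(i, \epsilon_k) \to (i+1, \epsilon_{k'})$ as coming from ``a suitably chosen signal'' that kills $p(\theta_i)$; but you do not get to choose the child --- the lemma supplies it, and whether $p_c(\theta_i)$ drops to $0$ or stays positive is determined by $\Pi$. The paper handles both outcomes through a short case analysis inside the proof body: if $p_c(\theta_i) > 0$ then $t_i \in T_c$ by \Cref{lmm:Theta-x-T-x} and the invariant carries over with the same $i$; if $p_c(\theta_i) = 0$ then conditions (i)--(iii) of $\calP_i^{\epsilon_{k+1}}$ can be massaged (using that the remaining mass must be nearly uniform over $\{\theta_{i+1}, \dots, \theta_n\}$) to show $p_c \in \calP_{i+1}^{\epsilon_{k+1}}$ directly, with no change to the $\epsilon$-index. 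You would need to supply this case split to complete your proof; as written, the geometric analysis you defer is precisely where the work lies.
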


\begin{proof}
Consider the instance given by \Cref{exp:PIE-n-steps} and the following induction process.
\begin{itemize}
\item 
Since $\mu$ is uniform in the example, at the root node $r$ we have $p_r = \mu \in \calP_1^{\epsilon_1}$ and $t_1 \in T_r = T$.

\item 
Suppose that at some non-leaf node $x$, it holds that $p_x \in \calP_i^{\epsilon_k}$ and $t_i \in T_x$ for $i,k < n - 1$.
By \Cref{lmm:child-in-Pi}, $p_c \in \calP_i^{\epsilon_{k+1}}$ for some $c \in \children(x)$.
Consider the following two cases.
\begin{itemize}
\item 
If $p_c(\theta_i) > 0$, which means $\theta_i \in \Theta_c$, then by \Cref{lmm:Theta-x-T-x} it must be $t_i \in T_c$. Hence, there exists a child $c$ such that $p_c \in \calP_i^{\epsilon_{k+1}}$ and $t_i \in T_c$.

\item 
If $p_c(\theta_i) = 0$, then since $p_c \in \calP_i^{\epsilon_{k+1}}$, by definition we have $\sum_{j=i+1}^n p_c(\theta_j) = 1$.

In this case, it must be that 
\begin{align}
\label{eq:depth-n-pc}
0 < p_c(\theta_{i+1}) \le \frac{1}{n-i} + \frac{\epsilon_{k+1}}{2}.
\end{align}
Indeed, if $p_c(\theta_{i+1})$ was larger than the upper bound, we would have 
$p_c(\theta_j) > 1/(n-i)$
for all $j = i+1, \dots, n$ due to condition (iii) in the definition of $\calP_i^{\epsilon_{k+1}}$; this leads to the contradiction $\sum_{j=i+1}^n p_c(\theta_j) > 1$.
Similarly, $p_c(\theta_{i+1})=0$ would imply $p_c(\theta_j) < \epsilon_{k+1}$ by condition (iii), and in turn the contradiction $\sum_{j=i+1}^n p_c(\theta_j) < n \cdot \epsilon_{k+1} < 1$.
 
The upper bound implies 
\begin{align*}
\sum_{j=i+2}^n p_c(\theta_j) 
= 1 - p_c(\theta_{i+1}) 
\ge 1 - \frac{1}{n-i} - \epsilon_{k+1}.
\end{align*}
Since $p_c \in \calP_i^{\epsilon_{k+1}}$ already implies that $|p_c(\theta_j) - p_c(\theta_n)| \le \epsilon_{k+1}/2$ for all $j = i+2, \dots, n$.
This means 
$p_c \in \calP_{i+1}^{\epsilon_{k+1}}$.

Additionally, the lower bound in \Cref{eq:depth-n-pc} implies $\theta_{i+1} \in \Theta_c$, and in turn $t_{i+1} \in T_c$ by \Cref{lmm:Theta-x-T-x}.
\end{itemize}

Therefore, in both cases, we find a child $c$ such that $t_\ell \in T_c$ and $p_c \in \calP_{\ell}^{\epsilon_{k+1}}$ for some $\ell \le i + 1$.
Since $i < n-1$, we also have $|\Theta_c| > 1$ by noting that $p_c(\theta_j) > 0$ for all $j=i+2,\dots,n$ for the same reason that $p_c(\theta_{i+1}) > 0$ as we argued in \Cref{eq:depth-n-pc}. Hence, $c$ is not a leaf node by \Cref{lmm:leaf-T-x}.
\end{itemize}

By induction, there exists a non-leaf node at the $(n-2)$-th level. Hence, the depth of $\Pi$ is $\Omega(n)$.
\end{proof}

The lower bound is tight by a similar argument to \Cref{lmm:EnSEA-optimal} and noting that it is w.l.o.g. to design partial elicitation in a way such that $T_c \subsetneq T_x$ for any $\xE$ node $x$ and $c \in \children(x)$. We state the upper bound below and omit the proof. 

\begin{theorem}
\label{thm:PIE-depth-ub}
For any PIE mechanism $\Pi$, there exists a PIE mechanism $\Pi'$ with depth $O(n)$ that yields as much payoff for the principal as $\Pi$ does.
\end{theorem}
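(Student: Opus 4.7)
The plan is to transform any PIE mechanism $\Pi$ into an equivalent mechanism $\Pi'$ of depth $O(n)$ via two reductions modelled on the proof of \Cref{lmm:EnSEA-optimal}. First, I would merge consecutive $\xS$ nodes exactly as in Step~3 of that proof: any two adjacent signaling nodes along a root-to-leaf path can be replaced by a single signaling node whose signal is drawn from the joint distribution of the two originals, which preserves the agent's available information everywhere below. Iterating this merge exhaustively yields a tree whose levels strictly alternate between $\xS$ and $\xE$ nodes (with leaves permitted at any level).

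Second, I would enforce strict descent of the committed type set at every $\xE$ node, so that $T_c \subsetneq T_x$ for every $\xE$ node $x$ and every child $c \in \children(x)$. Suppose instead that some child $c$ satisfies $T_c = T_x$; then the answer leading to $c$ imposes no new credibility constraint and is informationally empty along that branch. I would contract this redundancy by combining $c$'s signaling strategy with those of $x$'s other $\xS$ children (which exist by the alternation) into a single signaling step placed at $x$'s position, and replacing $x$ itself by a refined elicitation layer below this merged signaling step, whose children splice together the first-level $\xE$ descendants of $c$ with $x$'s original sibling branches. Because $T_c = T_x$, no agent type loses an option, each type's reachable-leaf distribution under its best response is preserved, and hence the principal's payoff is unchanged. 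Iterating yields a mechanism with $T_c \subsetneq T_x$ everywhere.

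With both properties in place, the depth bound is immediate: along any root-to-leaf path, $|T_x|$ starts at $|T| = n+1$ at the root and drops by at least one at each $\xE$ node, bounding the number of elicitation nodes by $|T|$; with strictly alternating signaling levels, the total path length is at most $2|T| + 1 = O(n)$. The main obstacle is the correctness of the contraction in the second step: a naive hoisting of $c$'s signals above $x$ would expose those signals to agents on sibling branches and shift their posterior beliefs, potentially changing their best responses. The fix is to treat the merged signal as a product distribution, route each product component only to the branch where it was originally observable, and verify type-by-type that the reachable-leaf distribution under each type's best response matches that in $\Pi$. Once this bookkeeping is settled, the depth bound follows immediately.
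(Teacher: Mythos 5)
Your high-level plan — get an alternating $\xS/\xE$ structure, enforce $T_c \subsetneq T_x$ at every elicitation node, then conclude that at most $|T|$ elicitation steps fit on any root-to-leaf path — is the right skeleton and matches what the paper intends. However, there is a genuine gap in your second step.

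You only invoke Step~3 (merging consecutive $\xS$ nodes) of the proof of \Cref{lmm:EnSEA-optimal}, but skip Steps~1 and~2 (adding an $\xEN$ root with $n$ copies of $\Pi$ and removing all non-root $\xEN$ nodes). Without those steps you (a) never address $\xEN$ nodes that may appear in the original $\Pi$, and (b) lose the key structural fact that within each branch $\Pi^i$ the agent's true type is \emph{known} to be $t_i$. That fact is precisely what makes removing redundant $\xE$ nodes clean: within $\Pi^i$, the agent's answer at each $\xE$ node is determined by the signal history, so all unchosen children can be pruned (pruning only removes deviation options, so all IC constraints are preserved); and once each $\xE$ node has a single remaining child $c$, any node with $T_c = T_x$ imposes no new credibility commitment and reveals no information, so it can be deleted outright. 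The paper's remark ``by a similar argument to \Cref{lmm:EnSEA-optimal}'' refers to exactly this.

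Your contraction attempts to avoid knowing the agent's type, and this is where it breaks down. You propose to hoist $c$'s signaling above $x$ and splice $c$'s grandchildren with $x$'s other branches, and you yourself flag the obstruction: moving a signal from after the commitment to before it changes what every type (not just the one choosing $c$) knows when answering, which can change best responses. Your proposed ``fix'' — treating the merged signal as a product distribution and routing each component only to the branch where it was originally observable — is not a fix but an undoing of the hoist: if each component is only seen on its original branch, the signal is effectively still sent after the agent's choice, so no level has been eliminated. Moreover, ``splicing'' $\xE$ nodes (the children of $c$) with $\xS$ nodes ($x$'s other children) into a single elicitation layer does not produce a well-formed node type. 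The argument cannot be made rigorous in its present form.

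The clean route is: apply all of Steps~1--4 of \Cref{lmm:EnSEA-optimal} (where Step~4's merging now groups signals by the recommended subset rather than the recommended type); then in each branch $\Pi^i$, prune every $\xE$ branch the type-$t_i$ agent does not select and delete every $\xE$ node whose unique surviving child $c$ has $T_c = T_x$; merge the consecutive $\xS$ nodes this creates. The resulting tree alternates $\xS$ and $\xE$ levels with $T_c \subsetneq T_x$ at every surviving $\xE$ node, so any path has at most $|T|$ elicitation nodes and depth $O(|T|) = O(n)$, and the final depth-bound computation you give is then correct.
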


We remark that the size of the tree of an optimal mechanism may still grow exponentially in its depth. Because of this, we cannot use LP formulations similar to the one in \Cref{sec:compute} to obtain efficient algorithms for computing optimal PIE mechanisms.
We leave open this problem.
The problem is intriguing in that even when a single partial elicitation node $x$ is considered, there are exponentially many possible ways to set up the question for the agent as every subset of $T_x$ may be involved.

\section{Conclusion}
We studied Bayesian persuasion with credible agents.
For two-stage---elicitation then signaling---mechanisms, we showed that IC policies are easy to compute but may be suboptimal, while the optimal policy may be NP-hard to compute.
We demonstrate that two additional instruments---pre-signaling and non-binding
elicitation---lead to mechanisms involving additional stages, can strictly improve the principal’s utility, are necessary for achieving optimality, and can make the problem computationally tractable.
Finally, we also show that a specific form of elicitation that we call partial information elicitation can further improve the principal’s utility, but an unbounded number of rounds of information exchange between the principal and the agent may be necessary to achieve optimality.
Future directions include analyzing the computational complexity of the optimal PIE mechanisms, which we leave open in the current paper.
It would also be interesting to consider credible agents in similar problems beyond persuasion.


\clearpage


\appendix

\section{Omitted Proofs}

\subsection{Proof of \Cref{thm:nph-no-IC}}

\thmnphnoIC*

\begin{proof}
We give a tight reduction from {\sc Maximum Independent Set}.
Let $G = (V, E)$ be an undirected graph, an arbitrary instance of {\sc Maximum Independent Set}. Let $n = |V|$ denote the number of vertices. Let $N(v) = \{ v' \given (v, v') \in E \}$ denote the set of neighbors of $v$. It is known that approximating the {\sc Maximum Independent Set} problem better than a factor of $1/n^{1-\epsilon}$ for any $\epsilon > 0$ is NP-hard~\cite{zuckerman2006linear}, i.e., 
there is no polynomial-time algorithm, unless P=NP, which, given an arbitrary undirected graph $G$ with a maximum independent set of size of $k$, can always tell whether $G$ has a maximum independent set of size larger or smaller than $k/n^{1-\epsilon}$. 

The reduction goes as follows:
Let $T = \{t_v : v \in V\} \cup \{t_*\}$, i.e., there is a type $t_v$ corresponding to each vertex $v \in V$ and a special type $t_*$. Note that $|T| = n+1$.
In the distribution over types $\rho$, each $t_v$ has an equal probability of occurring, $\rho(t_v) = 1/n$, but the type $t_*$ has a zero probability, $\rho(t_*) = 0$.\footnote{If this is unsatisfactory, it is easy to check that the proof works for $\rho(t_*) = \delta > 0$ as well, for small $\delta$, with more bookkeeping.}
Let there be two states of nature, $\Theta = \{\theta_1, \theta_2\}$, which can occur with equal probability, $\mu(\theta_1) = \mu(\theta_2) = 1/2$.
The action set of the agent is $A = \{a_v : v \in V\} \cup \{b_v : v \in V\} \cup \{a_0, b_0\}$.
The utility function of type $t_v$ is given in the table below, where $v' \in N(v)$ is any arbitrary neighbour of $v$, and $v'' \notin N(v)$ is any arbitrary vertex that is not adjacent to $v$.
\begin{center}
\renewcommand{\arraystretch}{1.3}
\small
\begin{tabular}{L | R | R | R | R | R | R |} 
 \myheader{}
  & \myheader{a_0} & \myheader{b_0} & \myheader{a_v} & \myheader{b_v} & \myheader{a_{v'}, b_{v'}} & \myheader{a_{v''}, b_{v''}} \\ 
 \cline{2-7}
 \theta_1 & 1+1/n & -1+1/n & 1 & -1 & -n & 0 \\ 
 \cline{2-7}
 \theta_2 & -1+1/n & 1+1/n & -1 & 1 & -n & 0 \\ 
 \cline{2-7}
\end{tabular}
\vspace{2mm}
\end{center}
The utility of type $t_*$ is $-1$ for actions $a_0$ and $b_0$, irrespective of the state, and $1$ for all other actions and states.
Irrespective of the state, the payoff of the principal is $0$ for $a_0$ and $b_0$, and $1$ for all other actions.

For each type $t_v$, the better action in $\{a_0, b_0\}$ strictly dominates all other actions. In particular, $a_0$ dominates $a_v$, $a_{v'}$, and $b_{v'}$; $b_0$ dominates $b_v$ (and $a_{v'}$ and $b_{v'}$); for any posterior, say $(p, 1-p)$, of the agent after getting the signal from the principal, the better action in $\{a_0, b_0\}$ gets a utility of $\max(2p - 1 + 1/n, 1 - 2p + 1/n) \ge 1/n > 0$ and dominates $a_{v''}$ and $b_{v''}$. 
So, if the agent reports his type as $t_{v}$, for any $v \in V$, the principal gets a reward of $0$.

On the other hand, type $t_*$ dislikes playing actions $a_0$ and $b_0$, which give him a reward of $-1$, but is indifferent between playing any of the other actions, all of which give him a reward of $1$.
So, the principal gets a reward of $1$ if any agent reports his type as $t_*$.

Essentially, the principal's overall utility depends upon the number of types $t_v$ that can be persuaded to report and behave like $t_*$.
We prove next that the principal's utility is proportional to the size of the maximum independent set in $G$.
Let us first prove that the principal's optimal utility is at least the size of the maximum independent set divided by $n$.
Let $V' \subseteq V$ be any maximum independent set of $G$.
We show that the following signaling policy yields a utility of $|V'|/n$ for the principal.
\begin{itemize}
\item 
The agent reports a type $t_v$ for some $v \in V$. In this case, reveal no information.
\item 
The agent reports $t_*$.
If the state is $\theta_1$,
pick a $v' \in V'$ uniformly at random and recommend action $a_{v'}$, 
i.e., each action $a_{v'}$ for $v' \in V'$ is recommended with probability $1/|V'|$ and other actions are recommended with probability $0$.
On the other hand, if the state is $\theta_2$, recommend each action $b_{v'}$ for $v' \in V'$ with probability $1/|V'|$ and other actions with probability $0$.
\end{itemize}
Given this signaling policy, let us compute the expected reward that an agent of type $t_v$ gets by reporting specific types in $T$:
\begin{itemize}
    \item 
    By reporting $t_{v'}$ for any $v' \in V$, the agent gets no information from the principal. His posterior is the same as the prior $(1/2, 1/2)$, and he gets an expected utility of $1/n$ by playing either $a_0$ or $b_0$.
    \item 
    By reporting $t_*$, the agent gets recommendation $a_{v'}$ or $b_{v'}$ depending upon the state, for some $v' \in V'$. If the vertex $v$ corresponding to the agent's type $t_v$ is in $V'$, then none of the vertices in $V'$ are neighbors of $v$, so the agent gets an expected reward of $1$. On the other hand, if $v \notin V'$, then there must be at least one vertex in $V'$ who is a neighbor of $v$, so the agent has an expected utility of $\le -n/|V'| + (|V'|-1)/|V'| < 0$.
\end{itemize}
The analysis above tells us that a type-$t_v$ agent prefers to report $t_*$ if and only if $v \in V'$. So, the principal gets an expected utility of $|V'|/n$.

Conversely, suppose that the size of the maximum independent sets is $k$. We show that the maximum payoff the principal can achieve is at most $k/n$.
Suppose, for the sake of contradiction, the principal obtains $k'/n$ by using some signaling policy $\pi$, where $k' > k$. Since the principal gets $0$ for each type $t_v$ unless they misreport $t_*$, there must be at least $k+1$ of these types who have been persuaded to misreport $t_*$.
Given that the maximum independent set has size $k$, there must be two adjacent vertices $v_1, v_2 \in V$, whose corresponding types $t_{v_1}$ and $t_{v_2}$ would both misreport $t_*$.
We show that this lead to a contradiction.

Note that by reporting $t_v$, for any $v \in V$, and playing $a_0$ or $b_0$, any type can get a reward of $1/n$. Hence, to incentivize type-$t_{v_1}$ to misreport $t_*$, the state--action pairs $(\theta_1, a_{v_1})$ and $(\theta_2, b_{v_1})$ must be selected with a total probability of at least $1/n$ because these pairs give type-$t_{v_1}$ a utility of $1$ while all other pairs give a utility of $\le 0$. 
Since $v_1$ and $v_2$ are neighbors, type-$t_{v_2}$ gets a negative payoff of $-n$ when these pairs are selected, which results in an overall payoff of at most $-n \cdot \frac{1}{n} + \left(1 + \frac{1}{n} \right) \cdot \left( 1 - \frac{1}{n} \right) < 0 < \frac{1}{n}$ (where $1+\frac{1}{n}$ is the highest attainable payoff for the agent among any action--state pairs). This contradicts the assumption that $t_{v_2}$ is incentivized to misreport $t_*$.

We showed that the optimal expected utility of the principal and the size of the maximum independent set are proportional (equal, except for multiplicative factor of $1/n$). So, if we had a polynomial-time algorithm for approximating principal's optimal expected utility, we could use it to approximate the size of the maximum independent set, which is not possible unless P=NP.
\end{proof}

\subsection{Proof of \Cref{lmm:EnSESA-gt-SESA-option-i}}

\EnSESAoptioni*

\begin{proof}
Consider {\em a type-$t_1$ agent} and compare the agent's payoffs for choosing options~$1$ and $2$. 

If the type-$t_1$ agent selects option~$1$:
\begin{itemize}
\item 
Upon receiving $g_{\alpha,\gamma}$, the agent's posterior belief is $\Pro(\cdot \given g_{\alpha,\gamma}, \xS_1) = \left(\frac{1}{2 - \delta},\, 0,\, \frac{1 - \delta}{2 - \delta} \right)$ (almost evenly distributed over $\alpha$ and $\gamma$). It is then optimal for the agent to take action $a'$, whereby she gets a strictly positive payoff while all other actions gives at most $0$. 

\item 
Upon receiving $g_{\beta,\gamma}$, the agent's posterior belief is $\Pro(\cdot \given g_{\beta,\gamma}, \xS_1) = \left(0,\, \frac{1}{1 + \delta},\, \frac{\delta}{1 + \delta} \right)$. It is optimal for the agent to report (and imitate) $t_2$ in response to the elicitation at $\xE_1$, whereby the agent would play $b'$ upon receiving $g_\beta$ (when the state can only be $\beta$) and $d$ upon receiving $g_\gamma$ (when the state can only be $\gamma$), as if she is of type $t_2$. 
Overall, the type-$t_1$ agent's expected payoff conditioned on $\xE_1$ is $\frac{2}{1 + \delta} > 0$. 
In comparison, if she reports $t_1$ and receives no further information to update her belief, the best action would be $d$, which only gives payoff $0$. 
\end{itemize}
Overall the agent's expected payoff for selecting option~$1$ is {\em at least} 
\[
 \underbrace{\frac{2 - \delta}{3}}_{\Pro(g_{\alpha,\gamma} \given \xS_1)} \times 0 
 \quad + \quad 
 \underbrace{\frac{1 + \delta}{3}}_{\Pro(g_{\beta,\gamma} \given \xS_1)} \times \frac{2}{1 + \delta} 
 \quad = \quad 
 \frac{2+ 2\delta}{3(1 + \delta)}.
\]

If the type-$t_1$ agent selects option~$2$:
\begin{itemize}
\item 
Upon receiving $g_{\alpha,\gamma}$, the agent's posterior belief is $\Pro(g_{\alpha,\gamma} \given \xS_2) = \left(\frac{1}{1 + \delta},\, 0,\, \frac{\delta}{1 + \delta} \right)$.
Then, truthful reporting is obviously optimal at $\xE_2$, in particular because more information will be provided if the agent reports $t_1$. The agent further plays action $a'$ upon receiving $g_\alpha$ (when the state can only be $\alpha$), and $d$ upon receiving $g_\gamma$ (when the state can only be $\gamma$).
The overall payoff conditioned on  $\xE_2$ is $\frac{1}{1 + \delta}$.

\item 
Upon receiving $g_{\beta,\gamma}$, the agent's posterior belief is $\Pro(g_{\beta,\gamma} \given \xS_2) = \left(0,\, \frac{1}{2 - \delta},\, \frac{1 - \delta}{2 - \delta} \right)$. 
No further information will be provided subsequently. The optimal action to play is $d$ and the agent's conditional payoff is $0$.

\end{itemize}
Overall the agent's expected payoff for selecting option~$2$ is {\em at most} 
\[
\underbrace{\frac{2 - \delta}{3}}_{\Pro(g_{\alpha,\gamma} \given \xS_2)} \times \frac{1}{1 + \delta}
\quad + \quad 
\underbrace{\frac{1 + \delta}{3}}_{\Pro(g_{\alpha,\gamma} \given \xS_2)} \times 0
\quad = \quad 
\frac{2-\delta}{3(1 + \delta)},
\]
which is smaller than the payoff obtained by selecting option~$1$.
Hence, a type-$t_1$ agent would strictly prefer option~$1$.

By symmetry, the same analysis can be applied to show that a type-$t_2$ agent would strictly prefer option~$2$. This completes the proof. 
\end{proof}

\subsection{Proof of \Cref{lmm:u-g}}

\lmmug*

\begin{proof}
Suppose for a contradiction that for every $g \in G$, at least one of $\mathbb{E}(v \given g) > 1 - \delta - \epsilon$ and $\Pro(\gamma \given g) \ge 1/4 - \delta/\epsilon$ is violated.
Let $G = G^- \cup G^+$, where 
\[
G^- = \{g \in G : \Pro(\gamma \given g) < 1/4 - \delta/\epsilon\}
\quad\text{and}\quad
G^+ = \{g \in G : \Pro(\gamma \given g) \ge 1/4 - \delta/\epsilon\}.
\]
Note that by definition,
\[
\sum_{g \in G} \Pro(g) \cdot \Pro(\gamma \given g) = \Pro(\gamma) = 1/4.
\]
We have
\begin{align*}
1/4 
&\ =\  
\sum_{g \in G^-} \Pro(g) \cdot \Pro(\gamma \given g) + \sum_{g \in G^+} \Pro(g) \cdot \Pro(\gamma \given g) \\
&\ <\  
(1/4 - \delta/ \epsilon) \cdot \sum_{g \in G^-} \Pro(g) +  \sum_{g \in G^+} \Pro(g) 
\ =\  
(1/4 - \delta/ \epsilon) \cdot ( 1 - \Pro(G^+) ) + \Pro(G^+),
\end{align*}
where we write $\Pro(G^+) = \sum_{g \in G^+} \Pro(g)$.
It follows that
\begin{align}
\Pro(G^+) 
> \frac{\delta/\epsilon}{3/4 + \delta/\epsilon} 
> \frac{\delta}{\epsilon + \delta}.  
\label{eq:ub-Pro-G-plus}
\end{align}

According to the assumption at the beginning of this proof, $\mathbb{E}(v \given g) \le 1 - \delta - \epsilon$ for all $g \in G^+$.
Consider the expected payoff $\mathbb{E}(v)$ the mechanism yields for the principal. We have
\begin{align*}
\mathbb{E}(v)
&= \sum_{g \in G^+} \Pro(g) \cdot \mathbb{E}(v \given g) + \sum_{g \in G^-} \Pro(g) \cdot \mathbb{E}(v \given g) \\
&< (1 - \delta - \epsilon) \cdot \sum_{g \in G^+} \Pro(g)  + \sum_{g \in G^-} \Pro(g) \cdot 1 \\
&= (1 - \delta - \epsilon) \cdot \Pro(G^+) + (1 - \Pro(G^+)) \\
&= 1 - (\delta + \epsilon) \cdot \Pro(G^+) \\
&< 1 - \delta, 
\end{align*}
where we use the fact that the principal's maximum attainable payoff is $1$ in \Cref{exp:EnSESA-gt-SESA} to bound $\mathbb{E}(v \given g)$ for $g \in G^-$, and the last transition follows by \Cref{eq:ub-Pro-G-plus}.
This contradicts the assumption that $\mathbb{E}(v) \ge 1 - \delta$.
\end{proof}

\subsection{Proof of \Cref{lmm:g-alpha}}

\lmmgalpha*

\begin{proof}
Given \Cref{lmm:u-g}, we can find a signal $g$ such that
\begin{align}
\mathbb{E}(v \given g) &> 1 - \delta - \epsilon 
\label{eq:u-g-gt} \\
\text{and}\quad
\Pro(\gamma \given g) &\ge 1/4 - \delta/\epsilon.
\label{eq:pro-theta-given-g-ge}
\end{align}
Specifically, choosing 
$\epsilon = 10 \delta$ (recall that we have $M > 10000$ and $\delta = 10/M$, so $\epsilon < 1$), we get that
\begin{align}
\mathbb{E}(v \given g) &> 1 - 11\delta 
\label{eq:u-g-gt-11-delta} \\
\text{and}\quad
\Pro(\gamma \given g) &> 1/10.
\label{eq:pro-theta-given-g-ge-1-10}
\end{align}

First, note that since the principal gets payoff $1$ only when the agent plays $a'$ or $b'$, and $0$ otherwise, it must be that
\[
\Pro[t_1,\ \gamma,\ \neg (a' \vee b')\ \given\ g] < \delta + \epsilon,
\]
i.e., the probability that a type-$t_1$ agent plays an action outside of $\{a', b'\}$ when the state is $\gamma$ cannot exceed $\delta + \epsilon$.
Indeed, if this inequality does not hold, we would have $\mathbb{E}(v \given g) \le 1 - \delta - \epsilon = 1 - 11 \delta$, contradicting \Cref{eq:u-g-gt}.
It follows that
\begin{align}
\Pro(t_1 \given g) \cdot 
\Pro[\gamma,\ \neg(a' \vee b')\ \given\ t_1,\ g] < \delta + \epsilon
\label{eq:pro-t1-theta-given-g}
\end{align}
Since the occurrence of type $t_1$ is independent of the occurrence of $g$, we have $\Pro(t_1 \given g) = \Pro(t_1) = \rho(t_1) = 1/3$. Hence, 
\[
\Pro[\gamma,\ \neg(a' \vee b')\ \given\ t_1,\ g] < 3 (\delta + \epsilon),
\]
and in turn
\begin{align}
\Pro(\gamma,\ a' \vee b' \given t_1,\ g)
&> \Pro(\gamma \given t_1, g) -  3 (\delta + \epsilon) \nonumber \\
&> 1/10 -  3 (\delta + \epsilon),
\label{eq:ub-pro-ap-bp-cp-given-t1-theta-g}
\end{align}
where we used the fact that $\Pro(\gamma \given t_1, g) = \Pro(\gamma \given g)$ due to the independence between $\gamma$ and $t_1$ conditioned on $g$.

We now turn to the following upper bound of a type-$t_1$ agent's expected payoff,
conditioned on them receiving $g$ at the first step:
\begin{align*}
\mathbb{E} \left[ u_{t_1} \given t_1, g\right]
\ \le\ 
\Pro(\gamma, b' \given t_1, g) \cdot (-M) 
\ +\ 
(1 - \Pro(\gamma, b' \given t_1, g)) \cdot 3,
\end{align*}
where $-M = u_{t_1}(b', \gamma) = u_{t_1}(c', \gamma)$ and $3$ is the maximum attainable payoff for any possible combination of action and state,
according to the agent's payoffs in \Cref{exp:EnSESA-gt-SESA}.
It must be that $\mathbb{E} \left[ u_{t_1} \given t_1, g\right] \ge 0$ because otherwise the type-$t_1$ agent would be strictly better off if they stick to action $d$ after receiving $g$, whereby a payoff of $0$ is guaranteed.
Hence, we have  
\[
\Pro(\gamma, b' \given t_1, g) \cdot (-M) + (1 - \Pro(\gamma, b' \given t_1, g)) \cdot 3 \ge 0,
\]
which implies
\begin{align*}
\Pro(\gamma,\ b' \given t_1, g) \le \frac{3}{M+1}.
\end{align*}
Since playing $a'$ and playing $b'$ are mutually exclusive events, it follows by this inequality and \Cref{eq:ub-pro-ap-bp-cp-given-t1-theta-g} that
\begin{align}
\Pro(\gamma, a' \given t_1, g) > \frac{1}{10} -  3 (\delta + \epsilon) - \frac{3}{M+1}.
\label{eq:ub-pro-theta-ap-given-t1-g}
\end{align}

Similarly to the upper bound of $\mathbb{E} [ u_{t_1} \given t_1, g]$, the following upper bound can be established for a type-$t_1$ agent's expected payoff $\mathbb{E} \left[ u_{t_1} \given a', t_1, g\right]$ conditioned on $g$ and $a'$ being played:
\begin{align*}
\mathbb{E} \left[ u_{t_1} \given a', t_1, g\right]
&\ =\ 
\Pro(\alpha \given a', t_1, g) \cdot 1 
\ +\ \Pro(\gamma \given a', t_1, g) \cdot (-1) 
\ +\ \Pro(\beta \given a', t_1, g) \cdot (-M) \\
&\ \le\
\Pro(\alpha \given a', t_1, g) 
\ -\ \Pro(\gamma \given a', t_1, g).
\end{align*}
According to \Cref{eq:ub-pro-theta-ap-given-t1-g}, $\Pro(\gamma, a' \given t_1, g) > 0$, which means 
action $a'$ is indeed played by a type-$t_1$ agent with a positive probability conditioned on $g$.
It must then be that $\mathbb{E} \left[ u_{t_1} \given a', t_1, g\right] \ge 0$: if the type-$t_1$ agent were to expect a negative payoff for playing $a'$, then they would be better off deviating from $a'$ to $d$. Note in particular that when $a'$ is played, it must be that the type-$t_1$ agent reported their type truthfully because $a'$ is strictly dominated by $a$ or $d$ for types $t_2$ for any posterior distribution over the states---$a'$ would not be played if $t_2$ were reported. Hence, the above described deviation will not be blocked by the credibility constraint.
As a result, 
$\Pro(\alpha \given a', t_1, g) \ge \Pro(\gamma \given a', t_1, g)$, which implies
$\Pro(\alpha, a' \given t_1, g) \ge \Pro(\gamma, a' \given t_1, g)$ by multiplying the two sides with $\Pro(a' \given t_1, g)$.
It follows that 
\begin{align*}
\Pro(\alpha \given g) 
&\ge \Pro(t_1 \given g) \cdot \Pro(\alpha \given t_1, g) \\
&= \rho(t_1) \cdot \Pro(\alpha \given t_1, g) \\
&\ge \rho(t_1) \cdot \Pro(\alpha, a' \given t_1, g) \\
&\ge \rho(t_1) \cdot \Pro(\gamma, a' \given t_1, g) \\
&\ge \frac{1}{30} -  (\delta + \epsilon) - \frac{2}{3(M+1)} & \text{(by \Cref{eq:ub-pro-theta-ap-given-t1-g})} \\
&> \frac{1}{100}. & \text{(as $M > 10000$, $\delta = 10/M$, and $\epsilon = 10 \delta$)}
\end{align*}

By symmetry the same augment also applies to deriving: $\Pro(\beta \given g) > 1/100$ by changing $t_1$ to $t_2$ and $\alpha$ to $\beta$ in the proof.
\end{proof}

\subsection{Proof of \Cref{lmm:EnSEA-optimal}}

\lmmEnSEAoptimal*

\begin{proof}
Given any $\Pi$, we will construct an $\EnSES$ mechanism $\Pi'$ and show that the conversion ensures the stated properties while preserving the principal's payoff.
Without loss of generality, we assume that there is exactly one $\xE$ node on every path from the root to a leaf. Indeed, there is no need to elicit the agent's type more than once due to the credibility requirement. On the other hand, if there is no $\xE$ on some path, we can replace the leaf node on this path with an $\xE$ node (whose children are all leaf nodes). 
Moreover, we assume that every path begins and ends with signaling stages. This can be achieved by adding a dummy $\xS$ node as the parent of the root node of $\Pi$ where the same signal is sent irrespective of the states, and similarly by inserting a dummy $\xS$ node before every leaf node.

\medskip

Next, we apply the following operations to construct $\Pi'$.

\paragraph{Step 1. Adding a New root}

First, we create an $\xEN$ node as the root node of $\Pi'$.
Then we make $n$ copies $\Pi^1,\dots,\Pi^n$ of $\Pi$ and attach each $\Pi^i$ to the root of $\Pi'$ as a subtree. 
We mark the root of each $\Pi^i$ as $\xS_i$.
It is straightforward that $\Pi'$ constructed so far is IC at the root node: it is optimal for each type $t_i$ to select the $i$-th child $\Pi^i$ of $\Pi'$. 
Since all the copies are the same, each type $t_i$ is incentivized to choose $\Pi^i$, so $\Pi'$ is IC at the root.
Moreover, the payoffs $\Pi'$ yields for both the principal and every agent type are the same as those yielded by $\Pi$.

\paragraph{Step 2. Removing $\xEN$ nodes}

Given IC at the root, in each branch $\Pi^i$, we only need to consider the response of $t_i$.
Hence, all the $\xEN$ nodes in $\Pi^i$ become redundant.
For each $\xEN$ node $e$ in $\Pi^i$, we remove $e$, connecting the branch of $e$ a type-$t_i$ agent is incentivized to select directly to the parent of $e$. (All other branches of $e$ are removed.)
It is straightforward that this operation preserves the payoff of each type $t_i$ for $\Pi^i$, while it can only decrease their payoff for each $\Pi^j$, $j \neq i$, now that there are fewer choices in each $\Pi^j$.
Hence, $\Pi'$ remains IC and yields the same payoffs for the players as before.

\paragraph{Step 3. Merging $\xS$ nodes}

Next, we merge consecutive $\xS$ nodes on every path.
If any $\xS$ node $s$ has a child $s'$ which is also an $\xS$ node, we remove the edge connecting $s$ and $s'$ and attach the children of $s'$ directly to $s$. We also modify the signaling strategy $\pi_s$ at $s$: replace the signal $g$ in $\pi_s$ that corresponds to $s'$ with signals in the support of $\pi_{s'}$, and redefine $\pi_s( g \given \theta) = 0$ and
\[
\pi_s( g' \given \theta) = \pi_{s'}( g \given \theta) \cdot \pi_{s'}( g' \given \theta)
\]
for each signal $g'$ in the support of $\pi_{s'}$.
Namely, the signals sent at $s'$ is now generated directly at $s$. The modified probabilities in $\pi_s$ ensures that the posterior distributions induced by these signals remain the same as before.

\medskip

The above operations yield a five-level tree. Levels 1 to 5 contain only $\xEN$, $\xS$, $\xE$, $\xS$, and leaf nodes, respectively. 
The mechanism is IC at $\xEN$, but may not yet be DIC at the other $\xS$ nodes. We proceed with the following steps to ensure the DIC properties.

\paragraph{Step 4. Ensuring DIC at $\xS_i$ and $\xS_{ijk}$}

Recall that $\xS_i$ is the root of $\Pi^i$ and now at the second level of $\Pi'$, and $\xS_{ijk}$ is at the fourth level.
To ensure DIC, we merge signals sent at these nodes that induce the same agent response (i.e., reporting the same type or taking the same action).

Specifically, consider an arbitrary $\xS_i$ and let us first identify the following sets of nodes:
\begin{itemize}
\item 
For each $j=1,\dots,n$, let $\mathcal{E}_{ij}$ be the set consisting of the children of $\xS_i$ (which are all $\xE$ nodes) at which a type-$t_i$ agent is incentivized to report $t_j$.

\item 
For each $k=1,\dots,n$, we let $\mathcal{S}_{ijk}$ be the set consisting of the children of $\mathcal{E}_{ij}$ that correspond to the agent reporting type $t_k$. 
\end{itemize}
We merge nodes in each $\mathcal{E}_{ij}$ into a single $\xE$ node.
In other words, we replace signals leading to these nodes with a single signal, which can be viewed as a signal that recommends the agent to report $t_j$. 
This operation also makes the nodes in each $\mathcal{S}_{ijk}$ indistinguishable for the agent. 
In other words, these nodes are now in the same information set. 
We treat each information set as a new node and denote them by $\xE_{ij}$ or $\xS_{ijk}$, accordingly. 
The signaling strategies at $\xS_i$ and $\xS_{ijk}$ are updated accordingly as follows.
\begin{itemize}
\item 
We update the strategy $\pi_i$ at $\xS_i$ by merging the probabilities of the merged signals: for every $\theta \in \Theta$,
\[
\pi_i(g_{\xE_{ij}} \given \theta) 
\leftarrow \sum_{e \in \mathcal{E}_{ij}} \pi_{i} (g_{e} \given \theta)
\]
where $g_e$ denotes the signal leading to a node $e$
(and we abandon all the original signals $g_e$ for all $e \in \mathcal{E}_{ij}$).

\item 
The signaling strategy $\pi_{ijk}$ at each $\xS_{ijk}$ is defined over $\bigcup_{e \in \mathcal{S}_{ijk}} G_e$, where $G_e$ denotes the signal space of the strategy at $e$. (W.l.o.g., the $G_e$'s are disjoint sets because in $\Pi$ the agent always knows which node $e$ they are at.)
For every $g \in \bigcup_{e \in \mathcal{S}_{ijk}} G_e$, we let 
\[
\pi_{ijk}(g \given \theta) \leftarrow \sum_{e \in \mathcal{E}_{ij}}
w_{e} \cdot \pi_{ek}(g \given \theta),
\]
where $ek$ denotes the child of $e$ corresponding to type $t_k$, and $\pi_{ek}$ is the signaling strategy at $ek$; the normalization factor
$w_{e} := \frac{ \pi_{i} ( g_e) } {\sum_{e' \in \mathcal{E}_{ij}} \pi_{i} ( g_{e'})}$ is the probability of $e$ conditioned on $\mathcal{E}_{ij}$. 
\end{itemize}
The above design ensures that each $g$ sent by $\pi_{ijk}$ induces the same posterior belief as it previously does. 
Hence, at the new node $\xE_{ij}$, a type-$t_i$ agent would still be incentivized to report $t_j$, so DIC is ensured at $\xS_i$.

DIC at each $\xS_{ijk}$ can be ensured similarly by merging signals that induce the same action of the agent.
Essentially, the modifications in Step~4 make many nodes in $\Pi^i$ indistinguishable to the agent, but if the agent follows the recommendations in the new mechanism, they would still reach each leaf with the same probability as before, as if they respond optimally in $\Pi^i$.
On the other hand, since these merging operations reduce information, the agent cannot devise any better response to improve their payoff. Hence, the agent is incentivized to follow the recommendations and this ensures DIC. 
\end{proof}

\subsection{Proof of \Cref{lmm:leaf-T-x}}

\lmmleafTx*


\begin{proof}
Since $\Pi$ is optimal, it yields payoff $1$ for the principal. This means that only action $a'_i$ can be played in state $\theta_i$, so no action will be played at two different states both with positive probabilities. Now that $x$ is a leaf node, the agent needs to choose an action to play. However, since $\Theta_x$ contains more than one state, no matter which action the agent chooses, the same action will be played at more than one state, a contradiction.
\end{proof}

\subsection{Proof of \Cref{lmm:Theta-x-T-x}}

\lmmThetaxTx*


\begin{proof}
Suppose that $\theta_i \in \Theta_x$ but $t_i \notin T_x$. Since $\Pi$ is optimal, it yields payoff $1$ for the principal. 
This means that action $a'_i$ must be played in state $\theta_i$, so we have $\Pro(a'_i \given x) > 0$. 
Nevertheless, $a'_i$ is strictly dominated for all types in $T \setminus \{t_i\}$, so it will never be played if the agent imitates any type in $T_x$, now that $t_i \notin T_x$. Hence, there is a contradiction.
\end{proof}

\subsection{Proof of \Cref{lmm:child-in-Pi}}

We first prove the following lemma.

\begin{lemma}
\label{lmm:ub-p-theta-n}
For every node $x$ in $\Pi$, 
if $t_i \in T_x$,
then $p_x(\theta_j) \le p_x(\theta_n) + 2/M$ for all $j = i+1,\dots,n$.
\end{lemma}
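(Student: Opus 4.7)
The plan is to proceed by bottom-up structural induction on the tree of $\Pi$, using two standing observations about the optimal mechanism in \Cref{exp:PIE-n-steps}. First, by \Cref{lmm:leaf-T-x} every reached leaf $\ell$ has $|\Theta_\ell|=1$; moreover, when $\Theta_\ell=\{\theta_{k_0}\}$ with $k_0<n$, optimality actually forces $T_\ell=\{t_{k_0}\}$, since any other $t_l\in T_\ell$ would make the strictly preferred action $a_{k_0}$ credible for the $t_0$-agent and so cost the principal the payoff-$1$ outcome in state $\theta_{k_0}$. Second, because the agent's type is independent of the state, an $\xE$ transition does not update the state posterior, so $p_{c_S}=p_x$ for every menu option $S$ at an $\xE$ node $x$.

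For the base case at a leaf $x$: either $\Theta_x=\{\theta_n\}$, giving $p_x(\theta_n)=1$ and the bound immediately, or $\Theta_x=\{\theta_{k_0}\}$ with $k_0<n$, in which case the refined observation above forces $T_x=\{t_{k_0}\}$, hence $i=k_0$ whenever $t_i\in T_x$, so that $p_x(\theta_j)=0$ for every $j\ge i+1$ and the inequality holds trivially. For the inductive step at an $\xS$ node $x$, every reachable child $c_g$ inherits $T_{c_g}=T_x\ni t_i$, and $p_x=\sum_g \Pr(g\mid x)\,p_{c_g}$, so applying the hypothesis to each $c_g$ and averaging with these weights preserves the $2/M$ slack and yields the bound at $x$.

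For the inductive step at an $\xE$ node $x$, the menu must cover $T_x$: credibility requires every type in $T_x$ to lie in some menu subset, else a credible agent of that type would have no valid report. Consequently, there is some menu option $S'\ni t_i$, and its child $c_{S'}$ satisfies $T_{c_{S'}}=S'\ni t_i$ together with $p_{c_{S'}}=p_x$ by the state-independence observation. Applying the inductive hypothesis at $c_{S'}$ and substituting back gives $p_x(\theta_j)=p_{c_{S'}}(\theta_j)\le p_{c_{S'}}(\theta_n)+2/M=p_x(\theta_n)+2/M$ for all $j\ge i+1$.

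The main technical obstacle is legitimising the $\xE$ step when the intended play of the $t_0$-agent does not visit $c_{S'}$: the argument rests on reading $p_c$ as the state posterior determined by the signaling history from the root to $c$, which is well-defined along every branch of the tree, and on the key structural identity $p_{c_S}=p_x$ at every menu child, itself a consequence of type--state independence. Once these two facts are in place, the induction proceeds uniformly across $\xS$ and $\xE$ nodes and the $2/M$ slack remains intact throughout, mirroring the slack in the trivial base case.
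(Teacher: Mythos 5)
Your proof takes a genuinely different route from the paper's: you attempt a bottom-up structural induction on the mechanism tree, whereas the paper gives a single direct incentive argument at the node $x$ (comparing the type-$t_0$ agent's equilibrium payoff $p_x(\theta_n)$ against the payoff of the deviation ``imitate $t_i$ from $x$ onward,'' and showing the deviation wins if $p_x(\theta_j) > p_x(\theta_n) + 2/M$). Unfortunately, your induction has a gap at $\xE$ nodes that I do not think can be patched within its current structure.

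The problem is that your base case, and hence your inductive hypothesis, is only established for nodes that are reached with positive probability under the equilibrium play: \Cref{lmm:leaf-T-x} and your refinement $T_\ell = \{t_{k_0}\}$ both rely on the fact that the principal's conditional payoff at the leaf must equal $1$, and this constraint says nothing about off-path leaves. But the $\xE$ step of your induction descends to a menu child $c_{S'}$ chosen to satisfy $t_i \in S'$, and that child need not be on-path. You acknowledge the issue but resolve it only by observing that $p_{c_{S'}}$ is well-defined and equals $p_x$; this does nothing to supply the missing inductive hypothesis at $c_{S'}$. Concretely, in the paper's own optimal PIE mechanism (\Cref{sec:opt-pie}), at $\xE_i$ the set of still-consistent types is $T_{\xE_i} = \{t_i,\dots,t_n\}$ but the type-$t_0$ agent reports $\neg t_i$. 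The unique menu option containing $t_i$ is $S' = \{t_i\}$, which leads to a leaf that is reached with probability $0$ and whose posterior support has size $n-i > 1$ --- so your base case fails at $c_{S'}$. Meanwhile, the reached child has $T = \{t_{i+1},\dots,t_n\}$, so its inductive hypothesis only controls $j \ge i+2$ and cannot deliver the needed bound at $j = i+1$. A related warning sign: your base case actually gives $p_\ell(\theta_j) \le p_\ell(\theta_n)$ with \emph{no} slack, and your inductive steps preserve this exactly, so if the argument worked it would prove a strictly stronger statement than the lemma; the $2/M$ slack in the paper's bound is not decorative, it is the quantitative cost of the imitation deviation (the small but positive probability of playing $a'_i$ in a wrong state, bounded via the $1/M^2$ posterior threshold and the $-M$ penalty), and no such accounting appears anywhere in your proof. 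To make progress you would need to replace the $\xE$ step with an incentive argument at $x$ itself, which is essentially what the paper does.
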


\begin{proof}
Suppose for the sake of contradiction that for some $j \in \{i+1, \dots, n\}$ it holds that 
\begin{equation}
\label{eq:px-j-gt-px-n}
p_x(\theta_j) > p_x(\theta_n) + 2/M
\end{equation}
Since $\Pi$ is optimal, according to our analysis in \Cref{sec:opt-pie}, it must yield payoff $1$ for the principal.
Hence, only the outcomes $(\theta_i, a'_i)$, $i=1,\dots,n$, have positive probabilities in the probability measure induced by $\Pi$. In this case, a type-$t_0$ agent's expected payoff conditioned on $x$ is $\sum_{i=1}^n p_x(\theta_i) \cdot u_{t_0}(\theta_i, a'_i) = p_x(\theta_n)$. 
We demonstrate the following contradiction: there is a way for a type-$t_0$ agent to obtain strictly more than $p_x(\theta_n)$.

Consider the outcomes induced if a type-$t_0$ agent imitates to $t_i$ in the subtree starting at $x$; this is feasible given that $t_i \in T_x$. Namely, the agent always selects a subset containing $t_i$ in response to elicitation in the subtree and eventually plays an optimal action of $t_i$. 
Let $\Pro_*$ be the probability measure induced by $\Pi$ and this strategy of the agent. 
According to the incentive of type $t_i$, 
if $\Pro_*( a'_i \given x) > 0$, then
it must be that 
\begin{equation}
\label{eq:ub-neg-theta-i}
\Pro_*(\neg\theta_i \given a'_i, x) < 1/M^2.
\end{equation}
Indeed, if $\Pro_*(\neg\theta_i \given a'_i, x) \ge 1/M^2$, type $t_i$'s expected payoff conditioned on $a'_i$ being played in the subtree would be {\em at most}
\[
\Pro_*(\neg\theta_i \given a'_i, x) \cdot (-M) + \left(1 - \Pro_*(\neg\theta_i \given a'_i, x) \right) \cdot \underbrace{1/M}_{u_{t_i}(\theta_i, a'_i)} < 0.
\]
In this case, type $t_i$ would be overall better off swapping action $a'_i$ with $a_i$ whenever the former is to be played, contradicting the assumption that $\Pro_*( a'_i \given x) > 0$.

Now consider the type-$t_0$ agent's actual payoff over $\Pro_*$, that is 
$\mathbb{E}_*(u_{t_0} \given x) 
= \sum_{\theta \in \Theta} 
\sum_{a \in A} \Pro_*(\theta, a \given x) \cdot u_{t_0}(\theta, a)$.
We can exclude actions in $\{a'_1, \dots, a'_n\} \setminus \{a'_i\}$, which are strictly dominated by other actions according to type $t_i$'s incentive. It follows that
\begin{align}
\mathbb{E}_*(u_{t_0} \given x)
&= 
\sum_{\theta \in \Theta} \sum_{k=1}^n \Pro_*(\theta, a_k \given x) \cdot u_{t_0}(\theta, a_k)
\quad+\quad 
\sum_{\theta \in \Theta} \Pro_*(\theta, a'_i \given x) \cdot u_{t_0}(\theta, a'_i) \nonumber \\
&\ge 
\sum_{\theta \in \Theta} \sum_{k=1}^n \Pro_*(\theta, a_k \given x) \cdot \underbrace{u_{t_i}(\theta, a_k)}_{\hspace{-5mm}\text{as }u_{t_0}(\theta, a_k) \ge u_{t_i}(\theta, a_k)\hspace{-5mm}}
\qquad+\quad 
\underbrace{\Pro_*(\neg\theta_i, a'_i \given x) \cdot (-M)}_{\text{as }u_{t_0}(\theta_i, a'_i) = 0 \text{ and }u_{t_0}(\theta, a'_i) = -M\, \forall \theta \neq \theta_i} 
\label{eq:E-u-t0-x}
\end{align}
The second term in the last line can be bounded from below by $-1/M$ because we have 
\[
\Pro_*(\neg\theta_i, a'_i \given x) 
= \Pro_*(a'_i \given x) \cdot \Pro_*(\neg\theta_i \given a'_i, x) 
\le \Pro_*(\neg\theta_i \given a'_i, x) < 1/M^2
\]
where $\Pro_*(\neg\theta_i \given a'_i, x) < 1/M^2$ is implied by \Cref{eq:ub-neg-theta-i}. 
We next bound the first term.
By definition, $\Pro_*$ is induced by a type-$t_i$ agent's best response. So, the first term can be bounded from below by what a type-$t_i$ agent would get if he swapped all the actions $a_1, \dots, a_n$ to $a_j$. (Recall $j$ is the number that makes \Cref{eq:px-j-gt-px-n} hold.)
Specifically, we have
\begin{align*}
\sum_{\theta \in \Theta} \sum_{k=1}^n \Pro_*(\theta, a_k \given x) \cdot u_{t_i}(\theta, a_j)
& = 
\sum_{k=1}^n \Pro_*(\theta_j, a_k \given x) \cdot \underbrace{u_{t_i}(\theta_j, a_j)}_{=1} +  \sum_{\theta \neq \theta_j} \sum_{k=1}^n \Pro_*(\theta, a_k \given x) \cdot \underbrace{u_{t_i}(\theta, a_j)}_{=0} \\
&=
\sum_{k=1}^n \Pro_*(\theta_j, a_k \given x) \\
&=
\Pro_*(\theta_j \given x) - 
\Pro_*(\theta_j, a'_i \given x) - \sum_{k \neq i}\ \underbrace{\Pro_*(\theta_j, a'_k \given x)}_{=0 \text{ as } a'_k \text{ is strictly dominated}} \\
&=
\Pro_*(\theta_j \given x) - 
\Pro_*(\theta_j, a'_i \given x) \\
&\ge 
\Pro_*(\theta_j \given x) - 
\Pro_*(\theta_j \given a'_i, x) 
>
\Pro_*(\theta_n \given x) + 2/M - 1/M^2,
\end{align*}
where the last transition is due to \Cref{eq:px-j-gt-px-n,eq:ub-neg-theta-i}.
Substituting into \Cref{eq:E-u-t0-x} this lower bound and the bound $-1/M$ we derived for the second term, we get the contradiction
$\mathbb{E}_*(u_{t_0} \given x) > p_x(\theta_n)$, as desired.
\end{proof}

\lmmchildinPi*

\begin{proof}
If $x$ is an $\xE$ node, the posteriors at its children would be the same as that at $x$. The statement holds trivially. Hence, in what follows we consider the case where $x$ is an $\xS$ node.

Suppose for the sake of contradiction that $t_i \in T_x$ and $p_x \in \calP_i^{\epsilon_k}$, but $p_c \notin \calP_i^{\epsilon_{k+1}}$ for all $c \in \children(x)$.
Note that condition (i) holds automatically for all $c \in \children(x)$ because $p_c(\theta_{j'})$ remains $0$ for all $j < i$.
Hence, every $c \in \children(x)$ either violates (ii) $\sum_{j = i+1}^n p_c(\theta_j) \ge 1 - \frac{1}{n-i+1} - \epsilon_{k+1}$, or 
(iii) $\left| p_c(\theta_j) - p_c(\theta_n) \right| \le \epsilon_{k+1}/2$ for some $j$.

We can then define a partition $\{C_0, C_1, \dots, C_n\}$ of $\children(x)$, where $C_0$ contains children violating (ii), and each other $C_j$ contains children violating (iii) for $j$. (If a child violates multiple conditions, we assign it to an arbitrary set.)

We next bound the conditional probabilities $\Pro(C_j \given x) = \sum_{c \in C_j} \Pro(c \given x)$ of these subsets, where $\Pro$ is the probability measure induced by $\Pi$. 
Note that for every $j$ we can write $p_x(\theta_j)$ as the weighted average
$p_x(\theta_j) =
\sum_{c \in \children(x)} \Pro(c \given x) \cdot p_c(\theta_j)$.
\begin{itemize}
\item 
Every $c \in C_0$ violates the first condition, so $\sum_{j = i+1}^n p_c(\theta_j) < 1 - \frac{1}{n-i+1} - \epsilon_{k+1}$.
It must be that 
\begin{align}
\label{eq:C0-ub}
\Pro(C_0 \given x) \le \xi := 
\frac{\frac{1}{n-i+1} + \epsilon_k}{\frac{1}{n-i+1} + \epsilon_{k+1}}.
\end{align}
Indeed, if $\Pro(C_0 \given x) > \xi$ was true, we would have
\begin{align*}
\sum_{j=i+1}^n p_x(\theta_j) 
&=
\sum_{j=i+1}^n \left( \sum_{c \in C_0} \Pro(c \given x) \cdot p_c(\theta_j) \quad+\  \sum_{c \in \children(x) \setminus C_0} \Pro(c \given x) \cdot p_c(\theta_j) \right) \\
&=
\sum_{c \in C_0} \Pro(c \given x) \sum_{j=i+1}^n p_c(\theta_j) \quad+\  \sum_{c \in \children(x) \setminus C_0} \Pro(c \given x) \sum_{j=i+1}^n p_c(\theta_j) \\
&< 
\Pro(C_0 \given x) \cdot \left( 1 - \frac{1}{n-i+1} 
- \epsilon_{k+1} \right)  \quad+\quad \left(1 - \Pro(C_0 \given x) \right) \cdot 1 \\
&\le 
\xi \cdot \left( 1 - \frac{1}{n-i+1} 
- \epsilon_{k+1} \right)  \quad+\quad (1 - \xi) \qquad\qquad \text{(assuming $\Pro(C_0 \given x) > \xi$)}\\
&= 1 - \frac{1}{n-i+1} - \epsilon_{k},
\end{align*}
contradicting the assumption that $p_x \in \calP_i^{\epsilon_k}$.

\item 
For every $j = i+1, \dots, n$, every $c \in C_j$ violates the second condition for $j$, so
$\left| p_c(\theta_j) - p_c(\theta_n) \right| > \epsilon_{k+1}/2$.
By assumption $t_i \in T_x$.
Now that $x$ is an $\xS$ node, no type will be excluded from $T_x$ at $x$, so we have $T_c = T_x$ for all $c \in \children(x)$. 
It then follows by \Cref{lmm:ub-p-theta-n} that: 
\begin{align}
\label{eq:pc-theta-ell}
p_c(\theta_j) \le p_c(\theta_n) + 2/M.
\end{align}
Since $2/M < \epsilon_{k+1}/2$ is sufficiently small, $\left| p_c(\theta_j) - p_c(\theta_n) \right| > \epsilon_{k+1}/2$ holds only if
\begin{align}
\label{eq:pc-theta-j}
p_c(\theta_j) < p_c(\theta_n) - \epsilon_{k+1}/2.
\end{align}
Then, it must be that
\begin{equation}
\label{eq:Cj-ub}
\Pro(C_j \given x) \le \xi' := 2 \epsilon_k / \epsilon_{k+1},
\end{equation}
as the converse, $\Pro(C_j \given x) > \xi'$, would result in
\begin{align*}
p_x(\theta_j) 
&=
\sum_{c \in C_j} \Pro(c \given x) \cdot p_c(\theta_j) + \sum_{c \in \children(x) \setminus C_j} \Pro(c \given x) \cdot p_c(\theta_j) \\
&\le
\sum_{c \in C_j} \Pro(c \given x) \cdot \underbrace{\left( p_c(\theta_n) - \epsilon_{k+1}/2 \right)}_{\text{by \Cref{eq:pc-theta-j}}} + \sum_{c \in \children(x) \setminus C_j} \Pro(c \given x) \cdot \underbrace{\left( p_c(\theta_n) + 2/M \right)}_{\text{by \Cref{eq:pc-theta-ell}}} \\
&= 
\sum_{c \in \children(x)} \Pro(c \given x) \cdot p_c(\theta_n) 
\ -\ 
\Pro(C_j \given x) \cdot \epsilon_{k+1}/2
\ +\ 
\left( 1 - \Pro(C_j \given x) \right) \cdot 2/M \\
&< p_x(\theta_n) - \epsilon_{k} + 2/M \\
&< p_x(\theta_n) - \epsilon_{k}/2,
\end{align*}
contradicting $p_x \in \calP_i^{\epsilon_k}$, too.
\end{itemize}

Summing up \Cref{eq:C0-ub,eq:Cj-ub} gives the following contradiction, for $n$ sufficiently large: 
\begin{align*}
\sum_{j=0}^n \Pro(C_j \given x) 
\le
\xi + n \xi'
&=
\frac{\frac{1}{n-i+1} + \epsilon_k}{\frac{1}{n-i+1} + \epsilon_{k+1}} + n \cdot 2 \epsilon_k / \epsilon_{k+1}  \\
&\le
\frac{1 + \epsilon_k}{1 + \epsilon_{k+1}} + n \cdot 2 \epsilon_k / \epsilon_{k+1} 
\le 
\frac{\epsilon_{k+1} + n^2  \cdot \epsilon_k}{\epsilon_{k+1} + \epsilon_{k+1}^2} 
< 1,
\end{align*}
where we used the fact that $\epsilon_{k+1}^2 > n^2 \epsilon_k$ (recall that by definition $\epsilon_k = n^{-3^{n-k}}$ for $k \in \mathbb{Z}$). 
\end{proof}

\end{document}